\definecolor{purp}{HTML}{9483C0}
\setlist[enumerate]{nosep,topsep=0pt,label=\alph*.}
\setlist[itemize]{nosep,topsep=0pt}
\let\originalleft\left
\let\originalright\right
\renewcommand{\left}{\mathopen{}\mathclose\bgroup\originalleft}
\renewcommand{\right}{\aftergroup\egroup\originalright}
\newtheorem{thm}{Theorem}[section]
\newtheorem{corollary}[thm]{Corollary}
\newtheorem{lemma}[thm]{Lemma}
\newcommand{\comment}[1]{}
\newcommand{\arc}[2]{\overrightarrow{#1#2}}
\newcommand{\alldeg}[1]{\delta\left(#1\right)}
\newcommand{\bigO}[1]{\mathcal{O}\left(#1\right)}
\newcommand{\bigOmega}[1]{\Omega\left(#1\right)}
\newcommand{\bigTheta}[1]{\Theta\left(#1\right)}
\newcommand{\code}[1]{{\ttfamily #1}}
\newcommand{\densityopt}{\rho^\star}
\newcommand{\eps}{\varepsilon}
\newcommand{\incut}[1]{\delta^-\left(#1\right)}
\newcommand{\indeg}[1]{d^-\left(#1\right)}
\newcommand{\ita}[1]{{\itshape #1}}
\newcommand{\lb}[1]{\varphi(#1)}
\newcommand{\lab}[1]{\varphi(#1)}
\newcommand{\labc}[2]{\varphi(#1|#2)}
\newcommand{\ld}[1]{\ell\left(#1\right)}
\newcommand{\ldA}[1]{\ell'\left(#1\right)}
\newcommand{\ldB}[1]{\ell''\left(#1\right)}
\newcommand{\ldT}[1]{\ell_T\left(#1\right)}
\newcommand{\load}[1]{\ld{#1}}
\newcommand{\loadT}[1]{\ldT{#1}}
\newcommand{\lpopt}{\mathsf{OPT}_{\mathsf{LP}}}
\newcommand{\lv}[1]{\mathcal{L}\left(#1\right)}
\newcommand{\OPT}{\mathsf{OPT}}
\newcommand{\outcut}[1]{\delta^+\left(#1\right)}
\newcommand{\poly}{\mathrm{poly}}
\newcommand{\Rgeo}{\mathbb{R}_{\geq 1}}
\newcommand{\Rgez}{\mathbb{R}_{\geq 0}}
\newcommand{\Rgz}{\mathbb{R}_{>0}}
\newcommand{\rulesep}{\unskip\ \vrule\ }
\DeclarePairedDelimiterX\set[2]{\{}{\}}{\,#1 \;\delimsize|\; #2\,}
\newcommand{\softO}[1]{{\tilde{\mathcal{O}}}\left(#1\right)}
\newcommand{\sta}{\text{ s.t. }}
\newcommand{\sz}[1]{\left\vert#1\right\vert}
\newcommand{\union}{\cup}
\newcommand{\wt}[1]{w\left(#1\right)}
\newcommand{\zo}{\{0,1\}}
\newcommand{\alphamore}{(1+\alpha)}
\newcommand{\apxless}{(1-\bigO{\eps})}
\newcommand{\epsless}{(1-\eps)}
\newcommand{\epsmore}{(1+\eps)}
\title{Approximate Fully Dynamic Directed Densest Subgraph}
\author{Richard Li \\ \small Purdue University \\ \texttt{richzli@purdue.edu} \and Kent Quanrud\footnote{Supported in part by NSF grant CCF-2129816.} \\ \small Purdue University \\ \texttt{krq@purdue.edu}}
\date{}
\begin{document}

\maketitle

\begin{abstract}
    We give a fully dynamic algorithm maintaining a $\epsless$-approximate directed densest subgraph in $\softO{\log^3(n)/\eps^6}$ amortized time or $\softO{\log^4(n)/\eps^7}$ worst-case time per edge update, based on earlier work by Chekuri and Quanrud \cite{CQ22}. This result improves on earlier work done by Sawlani and Wang \cite{SW20}, which guarantees $\bigO{\log^5(n)/\eps^7}$ worst case time for edge insertions and deletions.
\end{abstract}

\section{Introduction}

The densest subgraph problem (DSG) and extensions of DSG are well-studied problems in both theory and practice. In the simplest version, the input is an undirected graph $G$. Formally, the \ita{density} of $S$ in $G$ is defined as $\rho_G(S)=\frac{\sz{E(S)}}{\sz{S}}$, and one can interpret this value as (half of) the average degree of the induced subgraph of $S$. The densest subgraph problem asks us to find a subgraph of maximum density.

Dense subgraphs have practical applications in a myriad of fields, including (but not limited to) social network analysis and community detection \cite{DBLP:journals/tkde/ChenS12, DBLP:journals/cn/KumarRRT99}, web crawling and data mining \cite{DBLP:conf/vldb/GibsonKT05}, and even computational biology \cite{DBLP:conf/ismb/HuYHHZ05}. For example, a subgraph with high density in a social network may indicate communities of people with similar interests. A subgraph with high density in a hyperlink graph of the Internet may indicate webpages containing similar topics. A subgraph with high density in a protein interaction graph may indicate a protein complex.

Goldberg \cite{Gol84} first introduced the DSG problem in 1984, and gave a polynomial-time solution using $\bigO{\log (n)}$ computations of max flow. This was later improved by Gallo et al. \cite{DBLP:journals/siamcomp/GalloGT89} to use only $\bigO{1}$ instances of parametric max flow. (We note that while max flow algorithms have seen significant breakthroughs over the last few years, notably Chen et al.'s \cite{DBLP:conf/focs/ChenKLPGS22} 2022 nearly-linear time algorithm, these are not currently considered to be practical algorithms.) In 2000, Charikar \cite{Cha00} gave a linear-time $1/2$-approximation for the density based on an exact LP formulation.

With some applications of DSG having billions of nodes and dynamically changing connections, recent focus has been on DSG in approximation, online/streaming, parallel, and dynamic regimes. In 2014, Bahmani et al. \cite{DBLP:conf/waw/BahmaniGM14} discovered a $\bigO{\log n/\eps}$-pass streaming algorithm for a $(1/2-\eps)$-approximate densest subgraph. Bhattacharya et al. \cite{DBLP:conf/stoc/BhattacharyaHNT15} improved this result in 2015 with a $(1/2-\eps)$-approximate one-pass algorithm, and separately developed a $(1/4-\eps)$-approximate fully dynamic algorithm (supporting edge insertions and deletions) with polylogarithmic update time. Recently, Sawlani and Wang \cite{SW20} discovered an arbitrarily exact $\epsless$-approximation in the fully dynamic setting in $\bigO{\log^4(n)/\eps^6}$ worst-case time per update, and Chekuri and Quanrud \cite{CQ22} reduced updates to $\bigO{\log^2(n)/\eps^4}$ amortized time.

In this paper, we focus on solving a variant of DSG, the directed densest subgraph problem (DDSG). In a directed graph $G=(V,E)$, we think of two subsets $S,T\subseteq V$ as dense if there is a large number of edges originating in $S$ and ending in $T$. The standard definition for directed subgraph density was introduced in 1999 by Kannan and Vinay \cite{KV99}, and is defined as $\rho_G(S,T)=\frac{\sz{E(S,T)}}{\sqrt{\sz{S}\sz{T}}}$. The directed densest subgraph problem asks us to find $S,T\subseteq V$ such that the density is maximized.

Directed graphs generalize undirected graphs, and there are many applications interested in dense subgraphs of directed graphs rather than undirected graphs. Thus, DDSG has become a research topic of applied interest \cite{DBLP:journals/tkde/ChenS12, DBLP:conf/sigmod/MaFCLH22, DBLP:journals/sigmod/MaFCLZL21}. For example, in the context of social networks, directed dense subgraphs can model fake follower detection, where a user has a disproportionate number of followers in a certain set of users \cite{DBLP:conf/kdd/HooiSBSSF16}. In the context of the Internet, webpages can be classified into hubs (links to many pages) and authorities (many pages link to it), and directed dense subgraphs capture this relationship \cite{DBLP:journals/jacm/Kleinberg99}.

DDSG is polynomial-time solvable. Charikar \cite{Cha00} showed that there is an LP formulation for DDSG parametrized by a variable $c$, a guess for the ratio of $\sz{S}$ to $\sz{T}$. Solving the LP for all $\bigO{n^2}$ values of $c$, we may obtain an exact solution. Charikar also gave a $\bigO{n^2(n+m)}$ $1/2$-approximate algorithm, which follows from the LP formulation. In 2009, Khuller and Saha \cite{KS09} improved this bound to just $\bigO{n+m}$.

This work focuses on the dynamic variant of DDSG. Recent works have developed dynamic data structures that efficiently maintain approximations to the DDSG. In particular, \cite{SW20} was the first to maintain a $\epsless$-approximate directed densest subgraph in worst-case $\bigO{\log^5 (n)/\eps^7}$ time per edge update using edge orientation techniques. 

\subsection{Main results}

The main result of this work is a fully dynamic data structure for DDSG handling edge insertions and deletions with polylogarithmic amortized and worst-case update times and linear space. In this setting, we start with an empty graph for a fixed set of vertices $V$. We receive a stream of edges to be added into or deleted from the graph, one by one. After each edge update, the data structure is expected to maintain an estimate of the optimum subgraph density, and be able to list the vertices of an approximate densest subgraph efficiently.

\begin{thm}
Given a directed graph $G$ updated by edge insertions and deletions, there is an algorithm that maintains a $\apxless$-approximate directed densest subgraph in $\softO{\log^3 (n)/\eps^5}$ amortized time per update and $\softO{\log^4 (n)/\eps^7}$ worst-case time per update, where $\tilde{\mathcal{O}}$ hides $\log\log$ factors.
\end{thm}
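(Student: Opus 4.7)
The plan is to reduce DDSG to a small family of weighted undirected DSG instances via Charikar's LP, discretize the ratio parameter geometrically, and then run a bipartite weighted adaptation of the CQ22 dynamic DSG data structure on each instance of the family in parallel. For each ratio guess $c$ representing $\sqrt{\sz{S}/\sz{T}}$, Charikar's parameterized LP has optimum value $\rho_c^\star$, and the true DDSG value equals $\max_c \rho_c^\star$; since $\sz{S}/\sz{T} \in [1/n^2, n^2]$, restricting $c$ to the geometric grid $\{\epsmore^i\}$ in this range loses only a $\apxless$ factor and leaves only $\bigO{\log(n)/\eps}$ candidate values.

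Next I would reduce each fixed-$c$ LP to a vertex-weighted bipartite DSG instance. Duplicate $V$ into a left copy $V_L$ (vertex capacity $c$) and a right copy $V_R$ (vertex capacity $1/c$), and represent each arc $(u,v)$ as the bipartite edge $(u_L, v_R)$. Under the load/orientation view used by CQ22, the maximum weighted bipartite density coincides (up to a factor absorbed into $c, 1/c$) with $\rho_c^\star$, so dynamically maintaining DDSG at parameter $c$ reduces to dynamically maintaining a near-balanced fractional orientation of this bipartite graph relative to the vertex capacities.

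The core step is adapting the CQ22 data structure to this vertex-weighted bipartite setting. CQ22 maintains a hierarchical level assignment for which an MWU-style potential argument simultaneously bounds the amortized work and certifies that a $\apxless$-approximate densest subgraph can be read off from a suitable level-set. I would reinterpret the per-vertex density threshold so that a vertex of capacity $c$ (resp.\ $1/c$) tolerates load proportional to $c$ (resp.\ $1/c$), and reprove the amortized $\bigO{\log^2(n)/\eps^4}$ bound in this asymmetric setting. The main obstacle is exactly this reproof: the CQ22 potential is calibrated against a single global density threshold, and I need to check that the amortized charging and the certificate extraction still go through once the capacities on the two sides differ by a factor as large as $n$, without incurring extra logarithmic overhead from the spread between $c$ and $1/c$.

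Finally, to obtain the claimed bounds, I would run $\bigO{\log(n)/\eps}$ copies of this data structure in parallel, one per discretized $c$, maintain the maximum of their estimates, and extract a witness subgraph from the copy attaining the maximum via its level-set. Each edge update propagates to every copy, so the amortized per-update cost is $\bigO{\log(n)/\eps}$ times the single-copy bound, giving $\softO{\log^3(n)/\eps^5}$. For the worst-case bound I would apply a standard de-amortization per copy in the style of \cite{SW20}, which costs an additional $\softO{\log(n)/\eps^2}$ factor per copy and yields $\softO{\log^4(n)/\eps^7}$ worst-case time per edge update overall, matching the theorem.
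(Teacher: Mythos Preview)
Your high-level plan (Charikar reduction, geometric grid on $t$, $\bigO{\log(n)/\eps}$ parallel VWDSG instances, CQ22-style orientation per instance) matches the paper. However, two concrete technical ingredients are missing, and without them the argument does not close and the stated running times are not achieved.

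\textbf{First gap: the additive error.} The CQ22 local-optimality invariant necessarily carries an additive $+\bigO{1}$ slack (you need it so that a single arc flip, which changes a load by $1/\wt{v}\leq 1$, cannot cascade indefinitely at low levels). Propagating this through the level argument gives only a bicriteria guarantee: density at least $\epsless\lpopt - \bigO{\log(nW)/\eps}$, not a pure $\epsless$-approximation. In unweighted DSG this is handled by edge duplication, using $\lpopt\geq 1$. In your bipartite VWDSG instances the weight ratio is $W=\Theta(\max(t^2,1/t^2))$, which can be $\Theta(n)$, and a priori $\lpopt$ can be as small as $1/W$; naive duplication would then need $\bigTheta{n}$ copies. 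The paper resolves this with a short but essential lemma: if $(S,T)$ is the optimal DDSG pair, then $\densityopt(G)\geq\max(\sqrt{\sz S/\sz T},\sqrt{\sz T/\sz S})$. Consequently, for the \emph{critical} value of $t$ in the grid, the rescaled VWDSG instance has $\lpopt\geq 1$, and $\bigO{\log(n)/\eps^2}$-fold duplication suffices to absorb the additive term. Your proposal never mentions the additive term, this lemma, or any duplication; the sentence ``without incurring extra logarithmic overhead from the spread between $c$ and $1/c$'' is exactly where this issue bites, and it does not resolve itself just by redoing the CQ22 potential argument.

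\textbf{Second gap: the thresholding optimization.} Even granting the above, duplicating edges $\bigO{\log(n)/\eps^2}$ times on top of the basic weighted CQ22 bound and the $\bigO{\log(n)/\eps}$ copies yields $\softO{\log^4(n)/\eps^5}$ amortized and $\softO{\log^5(n)/\eps^7}$ worst-case, i.e.\ only the \cite{SW20} bound, not the theorem's. The paper shaves the extra $\log(n)$ by running, alongside the un-duplicated data structure, a second copy with loads capped at a threshold $T=\bigTheta{\log^2(n)/\eps^4}$; this caps the recursion depth at $\bigO{\log(T)/\alpha}=\softO{1/\alpha}$, so the duplicated low-density instance is as cheap as the un-duplicated high-density one. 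Your proposal's ``standard de-amortization per copy'' does not correspond to anything in the paper (the worst-case bounds here come directly from bounding recursion depth, not from de-amortization) and in any case would not recover the missing $\log(n)$ factor.
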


This result improves slightly on previous bounds for approximate DDSG (\cite{SW20}, which finds $\bigO{\log^5(n)/\eps^7}$ time per update).

\subsection{Organization}

The rest of this document is organized as follows. In Section \ref{ch:overview}, we give a high-level overview of the techniques employed within this paper. In Section \ref{ch:preliminaries}, we formally define the problems and terminology, and explain the reduction from the directed densest subgraph problem to the vertex-weighted densest subgraph problem. In Section \ref{ch:lps}, we give LP relaxations of the densest subgraph problem, and show how this motivates approaching this problem using edge orientations. In Section \ref{ch:levels}, we give a data structure that maintains an approximate vertex-weighted densest subgraph. In Section \ref{ch:threshold}, we modify the data structure in the low-density case in order to obtain improved final bounds.

\section{High-level overview}\label{ch:overview}

In this section we give a high-level overview of the techniques employed in this paper.

\paragraph{DDSG reduces to VWDSG.}

As mentioned above, DDSG reduces to the vertex-weighted undirected densest subgraph problem (VWDSG). 

In VWDSG, the input consists of an undirected graph $G=(V,E)$ with vertex weights $w:V\to \Rgz$. For a subset $S\subseteq V$, let $\wt{S}=\sum_{v\in S} \wt{v}$ be the sum of all vertex weights in $S$.
The density of a set of vertices $S$ is defined as the number of edges in the subgraph induced by $S$ over the total vertex weight of $S$; i.e., $\rho_G(S)=\frac{\sz{E(S)}}{\wt{S}}$.

\paragraph{Approximate DDSG reduces to approximate VWDSG.}

\cite{Cha00} observed that $\epsmore$-approximate DDSG reduces to $\bigO{\log(n)/\eps}$ instances of VWDSG. Let $G$ be a directed graph. Recall that we define a subgraph as a pair of subsets $S, T\in V$, consisting of all edges directed from $S$ to $T$. In the reduction, we guess the sizes of sets $S$ and $T$ that produce the densest subgraph. For a parameter $t=\sqrt{\sz{S}/\sz{T}}$, we can instantiate an undirected vertex-weighted graph $G_t$, such that for a good choice of $t$, the optimal subgraph densities of $G$ and $G_t$ match. We can show that the optimum VWDSG of $G_t$, over all choices of $t$ equal to a power of $\epsmore$ in the range $[1/\sqrt{n}, \sqrt{n}]$, induces a $\epsmore$ approximate DDSG.

Thus, dynamic DDSG reduces to $\bigO{\log (n) / \eps}$ instances of dynamic VWDSG. For each instance of VWDSG we employ a data structure based on prior work \cite{SW20,CQ22} for the undirected and unweighted densest subgraph. We first explain the unweighted prior work to give context for our arguments.

\paragraph{The LP dual of DSG is edge orientation.}

The LP dual of DSG is an edge orientation problem, where the goal is to orient the edges to minimize the maximum in-degree \cite{Cha00}. There was prior work on dynamically maintaining such orientations approximately, and \cite{SW20} realized that one can extract approximate densest subgraphs from these orientations.

\paragraph{Locally optimal edge orientations imply globally optimal ones.}

The dynamic algorithms approximating these orientations are based on maintaining local optimality conditions. Our work is based on a local optimality condition from \cite{CQ22}, which for the unweighted setting is as follows: for $\alpha = \bigO{\eps^2 / \log(n)}$, for any edge $\{u,v\}$ oriented as $\arc{u}{v}$, 
\begin{align*}
    \indeg{v} \leq \alphamore \indeg{u} + \bigO{1}.
\end{align*}
Intuitively, we should not orient an edge $\{u,v\}$ towards $v$ if $v$ has in-degree substantially larger than $u$. \cite{CQ22} showed how to maintain this invariant efficiently, in polylogarithmic time per edge update.

Note the $\bigO{1}$-additive factor in the local optimality condition. This term is included for technical reasons to ensure some separation when the in-degrees are very small.
\cite{SW20} uses a different local optimality condition that only has a similar additive term.
An artifact of this additive term is that an orientation satisfying the invariant above leads to a bicriteria approximation for the densest subgraph. More precisely, one obtains a subgraph with density at least
\begin{align*}
    (1 - \epsilon) \OPT - \bigO{\frac{\log(n)}{\eps}}.
\end{align*}
We want a purely multiplicative approximation guarantee. \cite{SW20} observed that since $\OPT \geq 1$, one can address the additive term by simply duplicating every edge $\bigO{\log(n) / \eps^2}$ times, artificially making $\OPT \geq \bigO{\log(n) / \eps^2}$ and turning the additive error a relatively small factor.

\paragraph{Adapting to weighted edge orientations.}

It is natural to adapt this local optimality condition to VWDSG, as follows. For ease of notation, we assume the weights have been rescaled so that the minimum weight is $1$. (This also scales down the density.)
For any edge $\{u,v\}$ oriented as $\arc{u}{v}$, we have
\begin{align*}
    \load{v} \leq \alphamore \load{u} + \bigO{1}, &&
    \text{where } \load{v} = \frac{\indeg{v}}{\wt{v}}
\end{align*}
is defined to be the \emph{load} of vertex $v$. This notion of load is suggested by the LP.
Again, the vertex weights in the VWDSG instance must be at least $1$. Note that the addition or deletion of an arc changes the load of a vertex by $1/\wt{v}$, so rescaling the weights in this manner allows us to bound the effect of any one edge on the orientation.

With care, the techniques of \cite{CQ22} can be extended to maintain this local optimality condition with comparable polylogarithmic update times. Maintaining this invariant leads to a similar overall bicriteria approximation for VWDSG:
\begin{align*}
    \epsless \OPT - \bigO{\frac{\log(nW)}{\eps}},&&
    \text{ where } W = \max_v w(v).
\end{align*}
However, unlike unweighted DSG, $\OPT$ is not guaranteed to be at least $1$, and could be as small as $1/W$. Simply duplicating each edge $\bigO{\log(nW)/\eps^2}$ times does not increase the density by a large enough factor such that the additive term becomes relatively small.

\paragraph{Converting from bicriteria to purely multiplicative approximations.}

In the general case, it is not clear that we can remove this additive error for VWDSG. However, we are primarily interested in instances of VWDSG that arise from the DDSG reduction. Recall that in the reduction, the vertex weights are all of the form $w(v) \in \{t, 1/t\}$ for some parameter $t \in [1,\bigO{\sqrt{n}}]$ (before rescaling). Here we introduce our key lemma. We observe that for the optimal choice of $t$, the optimal density in the corresponding instance of VWDSG is at least $t$ before rescaling, or at least $1$ after rescaling. Thus, our strategy of duplicating edges $\bigO{\log(n)/\eps^2}$ times can indeed be applied in the same way as in DSG, and we obtain a purely multiplicative approximation guarantee for DDSG.

\paragraph{Optimizing the algorithm for low density.}

Finally, we are able to make a small modification to improve our data structure in the low-density regime. We note that optimizations can be made separately in the low-density and high-density cases. When $\OPT$ is high enough, edge duplication is not necessary, since the additive factor is small enough in comparison. On the other hand, the recursive depth of our subroutines is proportional to $\log(\OPT)$, so we obtain lower running times when $\OPT$ is small.

In order to utilize this low-density optimization, we introduce a threshold $T$, which limits the load of vertices to be at most $T$. This limits the recursive depth to $\log(T)$, although our estimates are now only valid when $\OPT<\epsless T$. We set $T=\bigTheta{\log^2(n)/\eps^4}$, which can handle the case where $\OPT=\bigO{\log(n)/\eps^2}$ with edges duplicated $\bigO{\log(n)/\eps^2}$ times. We run this data structure in parallel with our original data structure (now without duplicating edges), and for each subgraph query take the better of the two estimates. Overall, this gives slightly faster amortized and worst-case running times.

\paragraph{Summary.}

In conclusion, our main contributions are as follows. First, we generalize the \cite{CQ22} algorithm from DSG to VWDSG via the load-based local optimality invariant above. Incorporating the vertex weights into \cite{CQ22} adds a layer of complexity to the algorithm and the analysis.
Second, to resolve the discrepancy regarding the additive error, we provide the key lemma showing that the additive error scales nicely with $\OPT$ for the instances of VWDSG arising from DDSG. All put together, we obtain a dynamic approximation algorithm for DDSG with polylogarithmic update times.

\subsection{Comparison to \cite{SW20}}

\cite{SW20} also considered dynamic algorithms for DDSG and VWDSG. We had difficulty confirming the \cite{SW20} algorithm and analysis for DDSG and VWDSG, and here we try to explain our difficulties.

For context, we first review the main points of \cite{SW20} for undirected DSG at a high level. Similar to \cite{CQ22}, \cite{SW20} maintains an orientation satisfying a local optimality condition, except their optimality condition is expressed purely in additive terms. More precisely, for every edge $\{u,v\}$ directed as $\arc{u}{v}$, they have
\begin{align*}
    \indeg{v} \leq \indeg{u} + \eta
\end{align*}
for a parameter $\eta \geq 1$.

Ideally, one sets $\eta = \Theta({\eps^2 \OPT / \log(n)})$. Of course, $\OPT$ is not known a priori, so \cite{SW20} instantiates $\bigO{\log n}$ instances of the data structure for different values of $\eta$ corresponding to different guesses for $\OPT$. Furthermore, to ensure $\eta \geq 1$, \cite{SW20} also duplicates edges $\bigO{\log(n) / \epsilon^2}$ times
to ensure $\OPT = \bigOmega{1}$. At any point in time, the data structure corresponding to the best guess for $\OPT$ gives a $(1-\epsilon)$-approximate densest subgraph.

One can try to extend this approach to VWDSG. First, one rescales the vertex weights to be greater than or equal to $1$.
In the local optimality condition above, the in-degrees $\indeg{v}$ are replaced by loads $\ld{v}$.
However, the issue now is that $\OPT$ is not necessarily $\bigOmega{1}$, and an unbounded number of duplicates may be needed to raise $\eta$ to be $\geq 1$. \cite{SW20} seems to assume that $\OPT \geq 1$ for arbitrary VWDSG instances, and 
claims a $(1-\epsilon)$-approximation with polylogarithmic updates for VWDSG in general. We were not able to verify this result. However, we suspect that our key lemma mentioned above --- about $\OPT$ being sufficiently large in the critical VWDSG instance of the DDSG reduction --- can be applied here, as well.
We believe that this lemma should imply that \cite{SW20} gives a $(1-\epsilon)$ approximation to DDSG via VWDSG with polylogarithmic worst-case updates.

\section{Preliminaries}\label{ch:preliminaries}

We first formally define the definitions of the densest subgraph problem and the variations discussed in this work.

\subsection{The densest subgraph problem (DSG)}

In DSG, the input consists of an undirected graph $G=(V,E)$. The \ita{density of $S$}, $\rho_G(S)$, is defined as the density of the subgraph induced by $S$. That is, $\rho_G(S) = \frac{\sz{E(S)}}{\sz{S}}$, where $E(S)\subseteq E$ is the set of edges with both endpoints in $S$. The densest subgraph problem (DSG) asks us to find a set $S$ maximizing $\rho_G(S)$. We let $\densityopt(G)=\max_S \rho_G(S)$ denote the optimum density.

\subsection{The vertex-weighted densest subgraph problem (VWDSG)}\label{sec:defn-vwdsg}

In VWDSG, the input consists of an undirected graph $G=(V,E, w)$ with vertex weights $w:V\to \Rgz$. For a subset $S\subseteq V$, let $\wt{S}=\sum_{v\in S} \wt{v}$ be the sum of all vertex weights in $S$. The density is similarly defined as $\rho_G(S)=\frac{\sz{E(S)}}{\wt{S}}$, and the vertex-weighted densest subgraph problem (VWDSG) asks us to find a set $S$ maximizing $\rho_G(S)$. We let $\densityopt(G)=\max_S \rho_G(S)$ denote the optimum density.

Later, we will see that we prefer for vertex weights to be in $\Rgeo$. From the general case, we can transform the graph into the $\Rgeo$ case by multiplying all weights by $1/w_{\min}$; now, the minimum weight is $1$, and the maximum weight is $W=w_{\max}/w_{\min}$. The density of this normalized graph is multiplied by a factor of $w_{\min}$, so we divide our estimate by $w_{\min}$ afterwards.

\subsection{The directed densest subgraph problem (DDSG)}

The directed densest subgraph problem was first described by Kannan and Vinay \cite{KV99}. In DDSG, input consists of an unweighted directed graph $G=(V,E)$. For any two subsets $S, T\subseteq V$, let $E(S,T)$ be the set of all edges directed from a vertex in $S$ to a vertex in $T$. The density of these subsets $S,T$ is defined as $\rho_G(S,T)=\frac{\sz{E(S,T)}}{\sqrt{\sz{S} \sz{T}}}$, and the directed densest subgraph problem (DDSG) asks us to find sets $S, T$ maximizing $\rho_G(S,T)$. We let $\densityopt(G)=\max_{S,T} \rho_G(S,T)$ denote the optimum density.

\subsection{DDSG-VWDSG reduction}\label{sec:reduction}

There is a well-known reduction from DDSG to VWDSG \cite{Cha00, KS09, SW20}. We briefly go over the main ideas.

Given a directed graph $G=(V, E)$ and a parameter $t>0$, we can create an weighted undirected graph $G_t=(V_t, E_t, w_t)$, where
\begin{itemize}
    \item $V_t=V_t^{(L)}\union V_t^{(R)}$, where $V_t^{(L)}$ and $V_t^{(R)}$ are copies of $V$;
    \item $E_t=\{(u,v)\vert u\in V_t^{(L)}, v\in V_t^{(R)}, (u, v)\in E\}$ 
    \item $w_t(v)=\begin{cases}1/2t& v\in V_t^{(L)}\\t/2& v\in V_t^{(R)}\end{cases}$.
\end{itemize}
\begin{figure}[t]
\centering
\begin{minipage}[c]{0.85\textwidth}
\begin{framed}
    \includegraphics[width=0.43\textwidth]{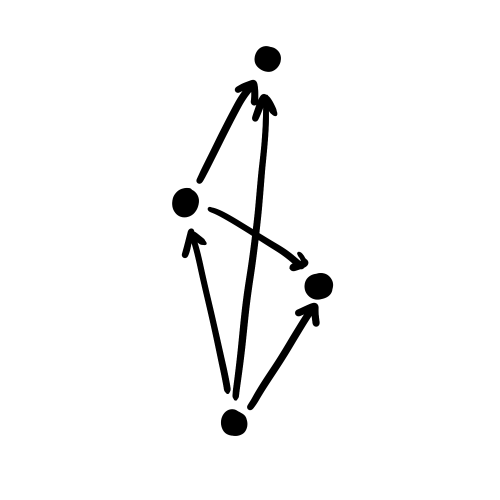}
    \hspace{1em}
    \rulesep
    \hspace{1em}
    \includegraphics[width=0.43\textwidth]{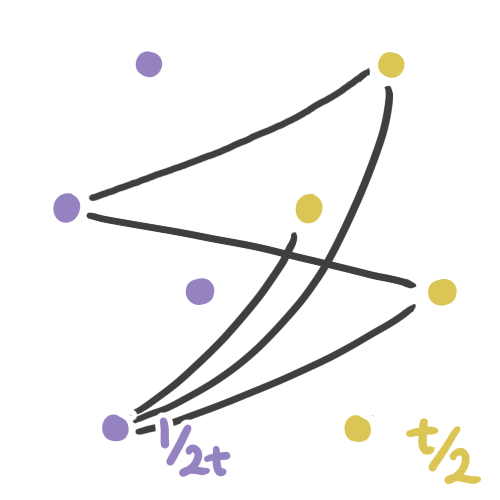}
\end{framed}
\end{minipage}
    \caption{An example reduction transforming a directed graph $G$ on the left to the bipartite undirected graph $G_t$ on the right.}
    \label{fig:reductionexample}
\end{figure}

Consider sets $S, T\subseteq V$. In the directed graph, we have that $\rho_G(S,T)=\frac{\sz{E(S,T)}}{\sqrt{\sz{S}\sz{T}}}$. If we take $S^{(L)}\subseteq V^{(L)}$ corresponding to $S$, and $T^{(R)}\subseteq V^{(R)}$ corresponding to $T$, then $\rho_{G_t}\left(S^{(L)}\union T^{(R)}\right) = \frac{\sz{E(S,T)}}{\left((1/t)\sz{S}+t\sz{T}\right)/2}$.

We first show that the optimum subgraph density of $G_t$ underestimates the optimum subgraph density of $G$ for any arbitrary value of $t$. The following lemma is shown in \cite{SW20}, and is implicit in \cite{Charikar00}. We include the proof here for completeness.

\begin{lemma}\label{lemma:reductionleq}
    Given a directed graph $G=(V,E)$, for any choice of $t$,
    \[\densityopt(G_t)\leq \densityopt(G).\]
\end{lemma}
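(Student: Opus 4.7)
The plan is to bound the density of an arbitrary subset of $V_t$ in $G_t$ by some directed density in $G$, and then apply AM-GM to the weighted denominator. Concretely, any $U\subseteq V_t$ decomposes uniquely as $U=S^{(L)}\cup T^{(R)}$ where $S,T\subseteq V$ are the corresponding vertex sets on the left and right copies. I would then show
\[
    \rho_{G_t}(U) \;\leq\; \rho_G(S,T) \;\leq\; \densityopt(G),
\]
and conclude by taking the maximum over $U$.

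First I would observe that because $G_t$ is bipartite with edges going only from $V_t^{(L)}$ to $V_t^{(R)}$, the edges induced by $U$ are exactly the edges of $E(S,T)$, so $|E_t(U)| = |E(S,T)|$. The total vertex weight of $U$ in $G_t$ is $w_t(U) = |S|/(2t) + t|T|/2$, using the definition of $w_t$. Hence
\[
    \rho_{G_t}(U) \;=\; \frac{|E(S,T)|}{\bigl(|S|/t + t|T|\bigr)/2}.
\]

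Second, I would apply AM-GM to the two nonnegative quantities $|S|/t$ and $t|T|$:
\[
    \tfrac{1}{2}\bigl(|S|/t + t|T|\bigr) \;\geq\; \sqrt{(|S|/t)\cdot(t|T|)} \;=\; \sqrt{|S|\,|T|}.
\]
Plugging this in yields $\rho_{G_t}(U) \leq |E(S,T)|/\sqrt{|S||T|} = \rho_G(S,T) \leq \densityopt(G)$. Degenerate cases where $|S|=0$ or $|T|=0$ are easy: in those cases $U$ lies entirely in one part of the bipartition, so $|E_t(U)|=0$ and $\rho_{G_t}(U)=0$, and the inequality holds trivially. Maximizing over $U$ gives $\densityopt(G_t)\leq \densityopt(G)$.

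There is no real obstacle here --- the lemma is essentially a one-line AM-GM calculation once the reduction's quantities are unpacked. The only thing to be slightly careful about is the decomposition step and the trivial case of empty $S$ or $T$, since the DDSG definition of $\rho_G(S,T)$ implicitly requires $|S|,|T|>0$ to avoid division by zero.
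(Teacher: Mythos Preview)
Your proposal is correct and follows essentially the same argument as the paper: unpack the density of a subset of $G_t$ in terms of $S$, $T$, and $t$, then apply AM-GM to the denominator to recover $\sqrt{|S||T|}$. The paper's proof just applies this to the optimal subset directly rather than to an arbitrary $U$, and omits the degenerate cases you handle explicitly.
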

\begin{proof}
    Let $S^{(L)}\union T^{(R)}$ be the densest subgraph in $G_t$. Then
    \begin{equation*}
        \densityopt(G_t)=\rho_{G_t}\left(S^{(L)}\union T^{(R)}\right) = \frac{\sz{E(S,T)}}{\left((1/t)\sz{S}+t\sz{T}\right)/2}\tago{\leq} \frac{\sz{E(S,T)}}{\sqrt{\sz{S}\sz{T}}}\leq \densityopt(G)
    \end{equation*}
    where \tagr is by the AM-GM inequality.
\end{proof}

We have now shown that each $G_t$ underestimates the optimum subgraph density of $G$. The next lemma shows that there exists some value of $t$ such that $\densityopt(G) = \densityopt(G_t)$ exactly.

\begin{lemma}\label{lemma:reductioneq}
    Given a directed graph $G=(V,E)$ and sets $S,T\subseteq V$ such that $\densityopt(G)=\rho_G(S,T)$, if $t=\sqrt{\frac{\sz{S}}{\sz{T}}}$, then
    \[\densityopt(G_t)=\densityopt(G).\]
\end{lemma}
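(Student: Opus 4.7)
The plan is to show the matching lower bound $\densityopt(G_t) \geq \densityopt(G)$, since Lemma~\ref{lemma:reductionleq} already gives us $\densityopt(G_t) \leq \densityopt(G)$ for every $t$. For the lower bound, I would exhibit a specific subset of $V_t$ whose density in $G_t$ matches $\densityopt(G)$, namely the set $S^{(L)} \cup T^{(R)}$ obtained from the optimal $S, T \subseteq V$ by taking the corresponding copies in the left and right partitions.

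Concretely, I would plug $S^{(L)} \cup T^{(R)}$ into the formula already derived in the text, namely
\[
\rho_{G_t}\bigl(S^{(L)} \cup T^{(R)}\bigr) = \frac{\sz{E(S,T)}}{\bigl((1/t)\sz{S} + t\sz{T}\bigr)/2},
\]
and then substitute $t = \sqrt{\sz{S}/\sz{T}}$. A short calculation shows $(1/t)\sz{S} = \sqrt{\sz{S}\sz{T}}$ and $t\sz{T} = \sqrt{\sz{S}\sz{T}}$, so the denominator collapses to $\sqrt{\sz{S}\sz{T}}$, which makes the whole expression equal to $\rho_G(S,T) = \densityopt(G)$. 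This is exactly the AM-GM tight case from the previous lemma, so the chosen $t$ is precisely the value that makes the inequality used in Lemma~\ref{lemma:reductionleq} an equality.

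Combining $\densityopt(G_t) \geq \rho_{G_t}(S^{(L)} \cup T^{(R)}) = \densityopt(G)$ with the earlier inequality $\densityopt(G_t) \leq \densityopt(G)$ yields the claimed equality. I do not anticipate any real obstacle: the only delicate point is verifying that the $S^{(L)} \cup T^{(R)}$ interpretation of the density formula is correctly set up (which the text has already done), and that the substitution $t = \sqrt{\sz{S}/\sz{T}}$ indeed equalizes the two AM-GM summands.
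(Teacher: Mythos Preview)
Your proposal is correct and follows essentially the same argument as the paper: exhibit $S^{(L)}\cup T^{(R)}$ as a candidate subgraph in $G_t$, substitute $t=\sqrt{\sz{S}/\sz{T}}$ into the density formula so the denominator becomes $\sqrt{\sz{S}\sz{T}}$, and combine the resulting lower bound $\densityopt(G_t)\geq\densityopt(G)$ with Lemma~\ref{lemma:reductionleq}. Your additional remark that this is the AM-GM equality case is a nice touch but not needed for the proof.
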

\begin{proof}
    Consider set $S^{(L)}\subseteq V_t^{(L)}$ corresponding to $S$ and set $T^{(R)}\subseteq V_t^{(T)}$ corresponding to $T$. Then the density of $S^{(L)}\union T^{(R)}$ is upper bounded by $\densityopt(G_t)$.
    \begin{align*}
        \densityopt(G_t)\geq \rho_{G_t}\left(S^{(L)}\union T^{(R)}\right)&=\frac{\sz{E(S,T)}}{\left((1/t)\sz{S}+t\sz{T}\right)/2}\\
        &=\frac{\sz{E(S,T)}}{\left(\sqrt{\frac{\sz{T}}{\sz{S}}}\sz{S}+\sqrt{\frac{\sz{S}}{\sz{T}}}\sz{T}\right)/2}=\frac{\sz{E(S,T)}}{\sqrt{\sz{S}\sz{T}}}= \densityopt(G).
    \end{align*}

    Combining this with Lemma \ref{lemma:reductionleq} gives us $\densityopt(G_t)=\densityopt(G)$.
\end{proof}

While there is always some value of $t$ such that $G_t$ gives the exact optimum subgraph density of $G$, the problem now is that  we do not know this optimum value. However, since the optimum value of $t=\sqrt{\frac{\sz{S}}{\sz{T}}}$, and $\sz{S}, \sz{T}\leq n$, there are only $\bigO{n^2}$ possible values of $t$. Thus, it is possible to calculate $\densityopt(G_t)$ by trying all $\bigO{n^2}$ choices of $t$, and the maximum of these values gives $\densityopt(G)$.

\subsubsection{$\epsmore$-approximate reduction}

Creating $\bigO{n^2}$ auxiliary graphs $G_t$ is too expensive for our desired bounds. Fortunately, we don't have to run $\bigO{n^2}$ copies of our algorithm if we only want an approximate answer. Intuition says that a close enough estimate of $t$ will give us a good density estimate. Approximating the density by approximating some parameter $t$ was first noticed by \cite{Cha00}, and appears in various later works such as \cite{KS09, SW20, CQ22}.

\begin{lemma}\label{lemma:reductionapx}\cite{SW20}
    Given directed graph $G=(V,E)$ and sets $S,T\subseteq V$ such that $\densityopt(G)=\rho_G(S,T)$, if $\sqrt{\frac{\sz{S}}{\sz{T}}}\cdot \epsless\leq t\leq \sqrt{\frac{\sz{S}}{\sz{T}}}\cdot \epsmore$, then
    \[\densityopt(G_t)\geq \epsless\densityopt(G).\]
\end{lemma}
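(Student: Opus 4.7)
The plan is to evaluate the density in $G_t$ of the specific set $S^{(L)} \cup T^{(R)}$ that corresponds to the optimal $S,T$ in the directed instance, and show it is already at least $\epsless \densityopt(G)$. Since $\densityopt(G_t) \geq \rho_{G_t}(S^{(L)} \cup T^{(R)})$, this immediately gives the lemma.

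First I would parameterize. Let $t^\star = \sqrt{\sz{S}/\sz{T}}$ be the optimal choice from Lemma \ref{lemma:reductioneq}, and write $r = t/t^\star$, so that $r \in [1-\eps, 1+\eps]$. Using the formula derived right before Lemma \ref{lemma:reductionleq}, we have
\begin{align*}
    \rho_{G_t}\left(S^{(L)}\union T^{(R)}\right) = \frac{\sz{E(S,T)}}{\left((1/t)\sz{S}+t\sz{T}\right)/2}.
\end{align*}
Substituting $t = r \cdot t^\star$ gives $(1/t)\sz{S} = \sqrt{\sz{S}\sz{T}}/r$ and $t\sz{T} = r\sqrt{\sz{S}\sz{T}}$, so the denominator simplifies to $\sqrt{\sz{S}\sz{T}}\,(r + 1/r)/2$. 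Combined with $\rho_G(S,T) = \sz{E(S,T)}/\sqrt{\sz{S}\sz{T}}$, this yields the clean identity
\begin{align*}
    \rho_{G_t}\left(S^{(L)}\union T^{(R)}\right) = \rho_G(S,T)\cdot \frac{2}{r + 1/r}.
\end{align*}

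Second, I would bound the distortion factor $2/(r + 1/r)$. It suffices to show $r + 1/r \leq 2/\epsless$ on $[1-\eps,1+\eps]$. Since the function $r\mapsto r + 1/r$ is convex on $\Rgz$, its maximum on any closed interval is attained at an endpoint, so I only need to check $r = 1-\eps$ and $r = 1+\eps$. For $r = 1-\eps$, the inequality reduces to $(1-\eps)^2 \leq 1$. For $r = 1+\eps$, clearing denominators reduces it to $-\eps^2 - \eps^3 \leq 2\eps$. Both hold for $\eps \in [0,1]$. Therefore $2/(r+1/r) \geq \epsless$, and combining with the identity above gives
\begin{align*}
    \densityopt(G_t) \geq \rho_{G_t}\left(S^{(L)}\union T^{(R)}\right) \geq \epsless \rho_G(S,T) = \epsless \densityopt(G).
\end{align*}

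There is no real obstacle here; the whole proof is essentially the observation that the AM-GM inequality used in Lemma \ref{lemma:reductionleq} is stable to small multiplicative perturbations of $t$ away from $t^\star$. The only mildly delicate part is verifying the endpoint inequality for $r = 1+\eps$, which is a short algebraic check after clearing denominators.
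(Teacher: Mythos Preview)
Your proof is correct and follows essentially the same approach as the paper: both evaluate $\rho_{G_t}(S^{(L)}\cup T^{(R)})$ for the optimal $S,T$ and bound it from below by $\epsless\densityopt(G)$. The paper simply substitutes the extremal bounds on $t$ into the two terms $(1/t)\sz{S}$ and $t\sz{T}$ directly, whereas you reparameterize by $r=t/t^\star$ to obtain the clean identity $\rho_G(S,T)\cdot 2/(r+1/r)$ and then bound the distortion factor via convexity; this is a cosmetic difference, not a substantive one.
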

\begin{proof}
    Consider set $S^{(L)}\subseteq V_t^{(L)}$ corresponding to $S$ and set $T^{(R)}\subseteq V_t^{(T)}$ corresponding to $T$. Similar to Lemma \ref{lemma:reductioneq}, we have
    \[\densityopt(G_t)\geq \rho_{G_t}\left(S^{(L)}\union T^{(R)}\right)=\frac{\sz{E(S,T)}}{\left((1/t)\sz{S}+t\sz{T}\right)/2}.\]
    Then
    \[
        \densityopt(G_t)\geq\frac{\sz{E(S,T)}}{\left(\frac{1}{\epsless}\sqrt{\frac{\sz{T}}{\sz{S}}}\sz{S}+\epsmore\sqrt{\frac{\sz{S}}{\sz{T}}}\sz{T}\right)/2}=\epsless\frac{\sz{E(S,T)}}{\sqrt{\sz{S}\sz{T}}}= \epsless\densityopt(G).
    \]
\end{proof}

\subsubsection{Using the reduction}\label{sec:usereduction}

We now show how to apply dynamic algorithms for the VWDSG problem to obtain a dynamic algorithm for DDSG. First, we show a lemma giving a lower bound on the optimum subgraph density of a directed graph. The motivation for this lemma comes from the fact that our data structures for VWDSG give a bicriteria approximation guarantee, with both a multiplicative and an additive error. Recall that we seek a purely multiplicative approximation. To drop the additive factor, we must show that the optimum subgraph density is much larger relative to the additive error. The following lemma, while simple in hindsight, is critical for this argument.

\begin{lemma}\label{lemma:geqt}
    Given a directed graph $G=(V,E)$, suppose we know $S,T\subseteq V$ such that $\rho_G(S,T)=\densityopt(G)$. Then $\densityopt(G)\geq \max\left(\sqrt{\frac{\sz{S}}{\sz{T}}}, \sqrt{\frac{\sz{T}}{\sz{S}}}\right)$.
\end{lemma}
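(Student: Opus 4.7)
The plan is to reduce the claim to proving $\sz{E(S,T)} \geq \max(\sz{S}, \sz{T})$. Given this, the lemma follows by unwinding the density:
\[
\densityopt(G) \;=\; \frac{\sz{E(S,T)}}{\sqrt{\sz{S}\sz{T}}} \;\geq\; \frac{\max(\sz{S}, \sz{T})}{\sqrt{\sz{S}\sz{T}}} \;=\; \max\!\left(\sqrt{\sz{S}/\sz{T}},\, \sqrt{\sz{T}/\sz{S}}\right).
\]

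To show $\sz{E(S,T)} \geq \sz{S}$ (the bound $\sz{E(S,T)} \geq \sz{T}$ is analogous), I would use a local exchange argument: every $u \in S$ must have at least one outgoing edge into $T$, since otherwise deleting $u$ from $S$ strictly increases the density and contradicts optimality of $(S,T)$. Concretely, if some $u \in S$ has no outgoing edge to $T$, then $\sz{E(S \setminus \{u\}, T)} = \sz{E(S,T)}$ while the denominator strictly decreases from $\sqrt{\sz{S}\sz{T}}$ to $\sqrt{(\sz{S}-1)\sz{T}}$, giving $\rho_G(S \setminus \{u\}, T) > \rho_G(S, T) = \densityopt(G)$, a contradiction. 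Summing the number of outgoing edges from each $u \in S$ into $T$ (each $\geq 1$) yields $\sz{E(S,T)} \geq \sz{S}$. The symmetric argument, considering incoming edges to each $v \in T$, gives $\sz{E(S,T)} \geq \sz{T}$.

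The only subtleties are degenerate. If $\sz{E(S,T)} = 0$ the exchange step does not strictly increase the density, but then $\densityopt(G) = 0$, which forces $G$ to be edgeless (else any single edge $(a,b)$ witnesses $\rho_G(\{a\},\{b\}) = 1$), making the lemma vacuous. The boundary case $\sz{S} = 1$, where deleting $u$ produces an empty set, reduces to the same scenario: the sole vertex of $S$ having no outgoing edge into $T$ immediately forces $\sz{E(S,T)} = 0$. I do not foresee any significant obstacle here; the entire proof is essentially a one-line optimality argument once the algebra is unwound.
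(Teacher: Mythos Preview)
Your proposal is correct and follows essentially the same approach as the paper: argue by optimality that every vertex of $S$ (resp.\ $T$) contributes at least one edge to $E(S,T)$, hence $\sz{E(S,T)}\geq\max(\sz{S},\sz{T})$, and divide through. If anything, your treatment of the degenerate $\sz{E(S,T)}=0$ case is more careful than the paper's, which writes the exchange inequality as $\geq$ rather than $>$ and calls it a contradiction without comment.
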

\begin{proof}
    First, note that
    \[
        \densityopt(G)=\rho_G(S,T)=\frac{\sz{E(S,T)}}{\sqrt{\sz{S}\sz{T}}}.
    \]
    
    We claim every vertex $v\in S$ has at least one outgoing edge ending in $T$. If not, then we can remove $v$ from $S$ at no penalty, which means $\rho_G(S\setminus\{v\}, T)=\frac{\sz{E(S, T)}}{\sqrt{(\sz{S}-1)\sz{T}}}\geq \rho_G(S,T)=\densityopt(G)$, which is a contradiction.

    Thus, $\sz{E(S, T)}\geq \sz{S}$, and
    \[
        \densityopt(G)=\frac{\sz{E(S,T)}}{\sqrt{\sz{S}\sz{T}}}\geq \frac{\sz{S}}{\sqrt{\sz{S}\sz{T}}}=\sqrt{\frac{\sz{S}}{\sz{T}}}.
    \]

    Similarly, we claim every vertex $u\in T$ has at least one incoming edge edge starting in $S$. If not, then we can remove $u$ from $T$ at no penalty, which means $\rho_G(S, T\setminus\{u\})=\frac{\sz{E(S, T)}}{\sqrt{\sz{S}(\sz{T}-1)}}\geq \rho_G(S,T)=\densityopt(G)$, which is a contradiction.

    Thus, $\sz{E(S, T)}\geq \sz{T}$, and
    \[
        \densityopt(G)=\frac{\sz{E(S,T)}}{\sqrt{\sz{S}\sz{T}}}\geq \frac{\sz{T}}{\sqrt{\sz{S}\sz{T}}}=\sqrt{\frac{\sz{T}}{\sz{S}}}.
    \]
\end{proof}

Finally, we describe the reduction from a dynamic $\epsmore$-approximate DDSG to a dynamic, bicriteria $(1 + \eps, \eps)$-approximation for VWDSG.
    
\begin{thm}\label{thm:ddsg}
    We are given a $n$-vertex directed graph $G$ dynamically updated by edge insertions and deletions, and we assume there exists an algorithm on vertex-weighted graphs that maintains a subgraph with density at most $\epsless\densityopt(G)-\bigO{\eps}$ in $T(n, W, \eps)$ time per update, where $W$ is the maximum vertex weight. Then there exists a $\apxless$-approximate algorithm for the directed densest subgraph problem in $T(n, n, \eps)\cdot \bigO{\log (n)/\eps}$ time per update.
\end{thm}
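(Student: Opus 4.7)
The plan is to dynamically maintain the DDSG-to-VWDSG reduction from Section~\ref{sec:reduction}. We run $\bigO{\log(n)/\eps}$ parallel instances of the assumed VWDSG data structure, one for each $t$ equal to a power of $\epsmore$ in the range $[1/\sqrt{n}, \sqrt{n}]$, each operating on the auxiliary graph $G_t$ normalized so that its minimum vertex weight is $1$. Every edge update in $G$ corresponds to exactly one edge update in each $G_t$, so each update in $G$ is processed in total time $\bigO{\log(n)/\eps} \cdot T(n, n, \eps)$; here we use that after normalization the maximum vertex weight in every $G_t$ is at most $n$ for $t \in [1/\sqrt{n}, \sqrt{n}]$. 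For a query, we return the densest subgraph reported across all parallel instances.

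For correctness, by symmetry we may assume the optimal ratio $t^\star = \sqrt{|S^\star|/|T^\star|}$ for the DDSG optimum $(S^\star, T^\star)$ satisfies $t^\star \geq 1$. By Lemma~\ref{lemma:reductionapx}, some $t$ in our grid satisfies $t \in [\epsless t^\star, \epsmore t^\star]$, so $\densityopt(G_t) \geq \epsless \densityopt(G)$. By Lemma~\ref{lemma:geqt}, $\densityopt(G) \geq t^\star$, giving $\densityopt(G_t) \geq \epsless^2 t$. Normalizing $G_t$ scales all weights by $1/w_{\min} = 2t$, which divides densities by $2t$, so the normalized optimum density is $\densityopt(G_t)/(2t) \geq \epsless^2/2 = \bigOmega{1}$. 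Applying the bicriteria guarantee on the normalized instance yields a subgraph with density at least $\epsless \cdot \densityopt(G_t)/(2t) - \bigO{\eps} = \apxless \cdot \densityopt(G_t)/(2t)$ in the normalized graph. Multiplying by $2t$ to undo the normalization and invoking Lemma~\ref{lemma:reductionleq} (which shows that a subgraph density in $G_t$ lower bounds the corresponding directed density in $G$), the returned subgraph has directed density at least $\apxless \densityopt(G_t) \geq \apxless \densityopt(G)$ in $G$.

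The main technical obstacle is converting the bicriteria guarantee, with its $\bigO{\eps}$ additive error, into a purely multiplicative $\apxless$-approximation. The key insight is Lemma~\ref{lemma:geqt}: since $\densityopt(G) \geq \max(t^\star, 1/t^\star) \geq 1$, the normalized optimum density in the critical $G_t$ is bounded below by a constant, forcing the additive $\bigO{\eps}$ term to be absorbable into the multiplicative factor. Without this lower bound, low-density DDSG instances would only yield a bicriteria approximation overall, not a purely multiplicative one.
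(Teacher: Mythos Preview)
Your proposal is correct and follows essentially the same approach as the paper: instantiate $\bigO{\log(n)/\eps}$ VWDSG data structures for $t$ ranging over powers of $\epsmore$ in $[1/\sqrt{n},\sqrt{n}]$, use Lemma~\ref{lemma:reductionapx} to pick out the good instance, and use Lemma~\ref{lemma:geqt} to absorb the additive $\bigO{\eps}$ error into the multiplicative factor. The paper bounds the scaled additive term $\bigO{\eps}\cdot(1/{w_t}_{\min})$ directly by $\bigO{\eps}\cdot\densityopt(G)$ via Lemma~\ref{lemma:geqt}, whereas you equivalently argue that the normalized optimum is $\bigOmega{1}$ first and then absorb; these are the same computation. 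One small remark: your final appeal to Lemma~\ref{lemma:reductionleq} uses the AM--GM step in its \emph{proof} (which applies to any $S^{(L)}\cup T^{(R)}$, not just the optimum) rather than its statement as written, but that is exactly the intended use.
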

\begin{proof}
    Let $\mathcal{T}=\left\{\frac{1}{\sqrt{n}},(1+\eps)\frac{1}{\sqrt{n}},(1+\eps)^2\frac{1}{\sqrt{n}},\hdots,\sqrt{n}\right\}$. We run a parallel copy of the VWDSG algorithm for each $t\in \mathcal{T}$.
    
    For each $t$, create $G_t$ as described in the reduction in Section \ref{sec:reduction}. To normalize vertex weights, divide the weight of all vertices by ${w_t}_{\min}=\min(1/2t, t/2)$. This decreases the density by a factor of $1/{w_t}_{\min}=\max(2t, 2/t)$, so later we multiply any density estimates obtained by $1/{w_t}_{\min}$. (Note that scaling weights uniformly does not affect the nodes that comprise a densest subgraph.) Then, we use our data structure to maintain a $\epsless$-approximate VWDSG in $G_t/{w_t}_{\min}$. 
    
    To query the maximum subgraph density, for each $t$, find $\rho_t\geq \epsless \densityopt(G_t)-\bigO{\eps}$, and output $\rho=\max_t \rho_t\cdot (1/{w_t}_{\min})$.

    Let $S,T\subseteq V$ produce the optimum subgraph density in $G$. Consider the copy of the data structure corresponding to the parameter $t$ such that $\sqrt{\frac{\sz{S}}{\sz{T}}}\cdot \epsless\leq t\leq \sqrt{\frac{\sz{S}}{\sz{T}}}\cdot \epsmore$. We have
    \begin{align*}
        \rho&\geq \rho_t\cdot (1/{w_t}_{\min})\\
        &\geq \epsless\densityopt(G_t)-\bigO{\eps}\cdot (1/{w_t}_{\min})\\
        &\tago{\geq} \epsless^2\densityopt(G)-\bigO{\eps}\cdot (1/{w_t}_{\min})\\
        &\geq \epsless^2\densityopt(G)-\bigO{\eps}\cdot 2\epsless \max(\sqrt{\sz{S}/\sz{T}}, \sqrt{\sz{T}/\sz{S}})\\
        &\tago{\geq} \epsless^2\densityopt(G)-\bigO{\eps}\cdot 2\epsless \densityopt(G)\\
        &=\apxless \densityopt(G),
    \end{align*}

    where \tagr is by Lemma \ref{lemma:reductionapx}, and \tagr is by Lemma \ref{lemma:geqt}.
\end{proof}

\section{LP relaxations and local optimality}\label{ch:lps}

We now introduce LP relaxations for DSG and VWDSG. The duals of these LPs let us interpret the densest subgraph optimization as an edge orientation problem, which motivates the data structure discussed afterwards.

\subsection{The LP relaxation for DSG}

\begin{figure}[t]
\begin{framed}
\begin{minipage}[t]{0.4\textwidth}
    \begin{align*}
        \max &\sum_{(u,v)\in E} \min(x_u, x_v)\\
        \text{over } &x_v\geq 0 \text{ for all } v\in V\\
        \sta & \sum_{v\in V}x_v\leq 1.
    \end{align*}
\end{minipage}
\begin{minipage}[t]{0.5\textwidth}
    \begin{align*}
        \min &\max_v \sum_{e\in \alldeg{v}} y(e,v)\\
        \text{over } &y(e,v)\geq 0 \text{ for all } e\in E\text{ and } v\in e\\
        \sta & y(e,u)+y(e,v)\geq 1 \text{ for all }e=(u,v)\in E.
    \end{align*}
\end{minipage}
\end{framed}
    \caption{The LP formulation and its dual for DSG.}
    \label{fig:dsglp}
\end{figure}

\cite{Cha00} first discovered an LP formulation for the densest subgraph problem. Recall that the goal of DSG is to find a $S\subseteq V$ of maximum density in $G$.

One way to model this problem is to set indicator variables $x_v\in \zo$ for each $v\in V$, where $x_v=1$ if $v\in S$, and $x_v=0$ otherwise. The density of $S$ can thus be expressed as $\rho_G(S)=\frac{\sz{E(S)}}{\sz{S}} = \frac{\sum_{(u,v)\in E} \min(x_u, x_v)}{\sum_{v\in V} x_v}$.

We relax this objective into an LP (see Figure \ref{fig:dsglp}). We allow $x_v\in \Rgez$, and normalize the denominator with the constraint $\sum_{v\in V}x_v=1$. Then, to rewrite $\sum_{(u,v)\in E} \min(x_u, x_v)$ as a linear objective, we can add constraints $z_e\leq x_u$ and $z_e\leq x_v$ for each $e=(u,v)\in E$, and sum $\sum_{e\in E} z_e$ instead. However, we keep the original formulation for simplicity.

Charikar \cite{Cha00} showed that this LP relaxation is exact, that is, $\lpopt = \densityopt(G)$.

\begin{lemma}\label{lemma:dsg-duality}
    \(\lpopt=\densityopt(G).\)
\end{lemma}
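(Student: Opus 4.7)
The plan is a standard two-direction argument showing that the LP optimum equals the maximum subgraph density.

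For the direction $\lpopt \geq \densityopt(G)$, I would simply exhibit a feasible LP solution whose objective value equals $\densityopt(G)$. Let $S^\star$ be a densest subgraph in $G$. Set $x_v = 1/\sz{S^\star}$ for $v \in S^\star$ and $x_v = 0$ otherwise. This satisfies $\sum_v x_v = 1$, and for each edge $(u,v) \in E(S^\star)$ we have $\min(x_u, x_v) = 1/\sz{S^\star}$, while edges with an endpoint outside $S^\star$ contribute $0$. Summing gives $\sz{E(S^\star)}/\sz{S^\star} = \densityopt(G)$.

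For the direction $\lpopt \leq \densityopt(G)$, the key idea is a threshold rounding / level-set decomposition. Given any feasible $x$, define the level set $S_r = \set{v \in V}{x_v \geq r}$ for each $r > 0$. The crucial identities are
\begin{align*}
    \sum_{v \in V} x_v = \int_0^\infty \sz{S_r}\, dr, &&
    \sum_{(u,v) \in E} \min(x_u, x_v) = \int_0^\infty \sz{E(S_r)}\, dr,
\end{align*}
the second because an edge $(u,v)$ is in $E(S_r)$ exactly when $r \leq \min(x_u, x_v)$. At the optimum we may assume $\sum_v x_v = 1$ (otherwise scale up, since both sides scale linearly), so
\begin{equation*}
    \lpopt = \sum_{(u,v) \in E} \min(x_u, x_v) = \frac{\int_0^\infty \sz{E(S_r)}\, dr}{\int_0^\infty \sz{S_r}\, dr} \leq \sup_{r > 0} \frac{\sz{E(S_r)}}{\sz{S_r}} \leq \densityopt(G),
\end{equation*}
where the first inequality is the standard fact that a ratio of integrals is at most the essential supremum of the pointwise ratio (any $r$ with $S_r = \emptyset$ contributes $0$ to both numerator and denominator and can be discarded).

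Combining the two directions yields $\lpopt = \densityopt(G)$. The only subtlety I anticipate is justifying the level-set identities cleanly and handling the edge case where the LP optimum has $\sum_v x_v < 1$, both of which are routine. The heart of the argument — that an averaging/level-set argument converts a fractional solution into a combinatorial one without loss — is short once the two integral identities are written down.
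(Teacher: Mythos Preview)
Your argument is correct and is essentially the classical Charikar proof. However, note that the paper does not actually supply its own proof of this lemma: it simply attributes the result to \cite{Cha00} and states it without proof. So there is nothing in the paper to compare against beyond the citation; your level-set rounding is precisely the argument that citation points to.
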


\subsection{The dual LP and integral orientations}

We can interpret the dual of this LP as taking each edge $e=(u,v)$ and partially directing it towards its endpoints with respect to the weights $y(e,u)$ and $y(e,v)$. Following \cite{SW20}, we define the \ita{load} on a vertex $v$ as $\ld{v}=\sum_{e\in \alldeg{v}} y(e,v)$. The objective of the dual is thus to minimize the maximum load of any node $v$. We call $y$ a \ita{fractional orientation}.

Note that if we restrict each $y(e,v)$ to $\zo$, we can interpret this as taking each (undirected) edge, and orienting it towards one of its endpoints as a directed edge. We call this an \ita{integral orientation}; in this case, the load is equivalent to the in-degree, which we denote by $\indeg{v}$.

The approach we take to solve DSG is based on maintaining an integral orientation of the input graph.

\subsection{The LP relaxation for VWDSG}

\begin{figure}[t]
\begin{framed}
\begin{minipage}[t]{0.4\textwidth}
    \begin{align*}
        \max &\sum_{(u,v)\in E} \min(x_u, x_v)\\
        \text{over } &x_v\geq 0 \text{ for all } v\in V\\
        \sta & \sum_{v\in V}\wt{v}\cdot x_v\leq 1.
    \end{align*}
\end{minipage}
\begin{minipage}[t]{0.5\textwidth}
    \begin{align*}
        \min &\max_v \frac{1}{\wt{v}} \sum_{e\in \alldeg{v}} y(e,v)\\
        \text{over } &y(e,v)\geq 0 \text{ for all } e\in E\text{ and } v\in e\\
        \sta & y(e,u)+y(e,v)\geq 1 \text{ for all }e=(u,v)\in E.
    \end{align*}
\end{minipage}
\end{framed}
    \caption{The LP formulation and its dual for VWDSG.}
    \label{fig:wdsglp}
\end{figure}

We write a similar LP for VWDSG \cite{SW20} as we do for DSG, which we describe in Figure \ref{fig:wdsglp}.

For the primal, note that the density can be written as $\rho_G(S)=\frac{\sz{E(S)}}{\wt{S}}=\frac{\sum_{(u,v)\in E}\min(x_u, x_v)}{\sum_{v\in V} \wt{v}\cdot x_v}$. The only difference is the denominator, which we normalized to $1$. Thus, only that constraint changes.

Taking the dual, we find that the load is redefined as $\ld{v}=\frac{1}{\wt{v}}\sum_{e\in \alldeg{v}} y(e,v)$. We use this definition for $\ld{v}$ in the rest of the paper. We can consider the load to be equivalent to the weighted in-degree in the case of an integral orientation.

A similar argument can be made as in \cite{Cha00} to show that this LP relaxation is also exact for the VWDSG case.

\begin{lemma}\label{lemma:vwdsg-duality}
    \(\lpopt=\densityopt(G).\)
\end{lemma}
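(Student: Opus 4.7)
The plan is to extend the standard primal-side argument for Lemma~\ref{lemma:dsg-duality} to the weighted setting, essentially inserting $\wt{v}$ into the key identities. I will prove the two inequalities $\lpopt \geq \densityopt(G)$ and $\lpopt \leq \densityopt(G)$ separately.

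For $\lpopt \geq \densityopt(G)$, I would take an optimal subgraph $S^\star$ and construct a feasible LP solution by setting $x_v$ to be the uniform value $1/\wt{S^\star}$ on $S^\star$ and $0$ elsewhere. The vertex-weighted constraint $\sum_v \wt{v} x_v \leq 1$ is satisfied with equality since $\sum_{v \in S^\star} \wt{v} / \wt{S^\star} = 1$, and the objective evaluates to $\sz{E(S^\star)}/\wt{S^\star} = \densityopt(G)$. This step is a direct translation of the DSG case; the only change is that the uniform value is $1/\wt{S^\star}$ rather than $1/\sz{S^\star}$.

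For the harder direction $\lpopt \leq \densityopt(G)$, my plan is a layer-cake / threshold argument. Given an optimal primal $x^\star$ (WLOG with $\sum_v \wt{v} x_v^\star = 1$, by scaling $x^\star$ up if necessary, since the objective is linear in $x$ and can only increase), I define level sets $S_t = \set{v \in V}{x_v^\star \geq t}$ for each $t \geq 0$. The key identities are
\begin{align*}
\sum_{(u,v) \in E} \min(x_u^\star, x_v^\star) = \int_0^\infty \sz{E(S_t)}\, dt, && \sum_v \wt{v} x_v^\star = \int_0^\infty \wt{S_t}\, dt.
\end{align*}
Because $\sz{E(S_t)} \leq \densityopt(G) \cdot \wt{S_t}$ for every single threshold $t$ by the definition of $\densityopt$, integrating immediately yields $\lpopt \leq \densityopt(G) \cdot 1 = \densityopt(G)$.

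The only place I would need any care is the second layer-cake identity, where the weights pass through the integral via $\wt{v} x_v^\star = \int_0^{x_v^\star} \wt{v}\, dt$; beyond that, no step is difficult, so I do not anticipate a serious obstacle. An averaging variant would produce a specific threshold $t^\star$ at which $S_{t^\star}$ itself has density at least $\lpopt$, which would be more constructive (and useful if one wanted a rounding procedure), but is unnecessary to establish the equality claimed in the lemma.
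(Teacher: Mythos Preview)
Your proposal is correct and is precisely the standard Charikar-style argument that the paper alludes to; the paper itself does not give a proof of this lemma, only remarking that ``a similar argument can be made as in \cite{Cha00},'' and your two-inequality proof via the indicator solution and the level-set (layer-cake) decomposition is exactly that argument, with weights inserted in the natural places.
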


Later on, we use this lemma implicitly.

\subsection{Local optimality}

Consider the following condition:
\begin{center}
    For all edges $e=(u,v)\in E$, if $y(e,v)>0$, then $\ld{v}\leq \ld{u}$.
\end{center}
Intuitively, this says that to minimize the maximum in-degree, we should never direct an edge towards an endpoint with higher load. If this condition is satisfied for all edges, we say that $y$ is \ita{locally optimal}.

We state, but do not prove, a lemma showing the power of this local optimality condition.

\begin{lemma}
    There is an optimum solution to the dual LP satisfying the local optimality condition. Conversely, if $y$ satisfies the local optimality condition, then $y$ is an optimum solution.
\end{lemma}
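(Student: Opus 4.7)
I would prove the two directions separately with standard LP tools.

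For the forward direction (LP-optimum $\implies$ locally optimal), I would argue that among all LP-optimal dual solutions, which form a nonempty compact subset of $\{y\geq 0,\ y(e,u)+y(e,v)\geq 1,\ \sum_{e\in \alldeg{v}} y(e,v)\leq \lpopt \cdot \wt{v}\}$, the one $y^*$ minimizing the strictly convex potential $\Phi(y)=\sum_v \wt{v}\,\ld{v}^2$ must satisfy local optimality. Otherwise there is some $e=(u,v)$ with $y^*(e,v)>0$ and $\ld{v}>\ld{u}$. I would shift a small $\delta>0$ of mass from $y^*(e,v)$ to $y^*(e,u)$: the sum $y(e,u)+y(e,v)$ and nonnegativity are preserved for small $\delta$, and since $\ld{u}<\ld{v}\leq \lpopt$, $u$'s new load stays strictly below $\lpopt$, so the max load---and hence LP optimality---is unchanged. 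But the first-order change in $\Phi$ is $-2\delta(\ld{v}-\ld{u})<0$, contradicting minimality.

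For the converse direction (locally optimal $\implies$ LP-optimum), I would use a level-set argument. Let $y$ be locally optimal, $\lambda=\max_v \ld{v}$, and $S=\{v : \ld{v}=\lambda\}$. For any edge $e=(u,v)$ with $v\in S$ and $u\notin S$, local optimality forces $y(e,v)=0$: otherwise $\ld{v}\leq \ld{u}<\lambda$, contradicting $v\in S$. Thus all the load on vertices of $S$ is contributed by edges internal to $S$:
\[
\lambda\,\wt{S} = \sum_{v\in S}\wt{v}\,\ld{v} = \sum_{v\in S}\sum_{e\in \alldeg{v}} y(e,v) = \sum_{e\in E(S)}(y(e,u)+y(e,v)).
\]
Invoking tightness $y(e,u)+y(e,v)=1$ on $E(S)$, this equals $|E(S)|$, so $\lambda=|E(S)|/\wt{S}=\rho_G(S)\leq \densityopt(G)$. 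Combined with weak duality ($\lambda\geq \densityopt(G)$), we get $\lambda=\densityopt(G)$ and $y$ is LP-optimal.

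The main obstacle is justifying tightness in the converse direction: local optimality alone does not force $y(e,u)+y(e,v)=1$ (for instance, on a single edge one can take $y(e,u)=y(e,v)=1$, which is ``locally optimal'' with slack yet has max load $1$ versus LP optimum $1/2$). Tightness is automatic in the integral-orientation setting that the paper's algorithm ultimately maintains, and more generally it can be folded into the definition of local optimality. With tightness in hand, the level-set argument above closes the converse.
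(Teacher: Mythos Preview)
The paper explicitly does \emph{not} prove this lemma (it writes ``We state, but do not prove, a lemma showing the power of this local optimality condition''), so there is no paper proof to compare against. Your proposal must be judged on its own merits.

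Your forward direction is correct: minimizing the strictly convex potential $\Phi(y)=\sum_v \wt{v}\,\ld{v}^2$ over the (compact, nonempty) set of dual optima and then shifting mass along a violating edge is the standard argument, and your verification that feasibility and optimality are preserved under the shift is sound.

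For the converse, you have correctly identified a genuine gap in the lemma \emph{as the paper states it}. Your single-edge counterexample (unit weights, $y(e,u)=y(e,v)=1$) is valid: it is feasible, satisfies the local optimality condition as written, yet has maximum load $1$ while the LP optimum is $1/2$. So the converse is literally false without an additional tightness hypothesis $y(e,u)+y(e,v)=1$. Your level-set computation is also correct once tightness is assumed: without it, the identity
\[
\lambda\,\wt{S}=\sum_{e\in E(S)}\bigl(y(e,u)+y(e,v)\bigr)
\]
only yields $\lambda\geq \rho_G(S)$, which is the wrong direction. Your remark that tightness is automatic for the integral orientations the algorithm actually maintains is the right resolution in context, and it is consistent with how the paper uses the lemma: the subsequent approximate-optimality Lemma~\ref{lemma:localopt} is proved directly for integral orientations and does not rely on this unproved statement. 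In short, your proof is correct modulo the caveat you yourself flagged, and that caveat reflects an imprecision in the paper's statement rather than a flaw in your argument.
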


Computationally, it is easier to work with an approximation of the local optimality condition. \cite{CQ22} provides a bicriteria approximation guarantee that we adapt to the vertex-weighted case. For some $\alpha,\beta \geq 0$, we say that $y$ is \ita{$(\alpha,\beta)$-locally optimal}, or a \ita{local $(\alpha,\beta)$-approximation}, if
\begin{center}
    For all edges $e=(u,v)\in E$, if $y(e,v)>0$, then $\ld{v}\leq \alphamore\ld{u}+\beta$.
\end{center}
Intuitively, this says that we should never direct an edge towards an endpoint with a \ita{much} higher load.

We show that approximate local optimality guarantees approximate global optimality. The proof follows as in \cite{CQ22}, but we include it here for completeness.

\begin{lemma}\label{lemma:localopt}
    Let $\alpha>0$ with $\alpha<c/\log(nW)$ for sufficiently small constant $c$. Let $\mu=\max_v\ld{v}$. Suppose every arc is $(\alpha, \beta)$-locally optimal. Then
    \[\mu\leq e^{\bigO{\alpha\log(nW)}}\left(\lpopt+\bigO{\sqrt{\frac{\log(nW)}{\alpha}}}\beta\right).\]
    In particular, given $\eps\in (0,1)$ and sufficiently small constant $c$, if every arc is $(c\eps^2/\log(nW), \bigO{1})$-locally optimal, then $\mu\leq \epsmore\lpopt+\bigO{\log(nW)/\eps}$.
\end{lemma}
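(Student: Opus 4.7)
The plan is to adapt the level-set argument of \cite{CQ22} to the weighted setting, by peeling around a maximum-load vertex to construct a witness set for the primal LP. Fix $v^\star$ with $\ld{v^\star} = \mu$, set $S_0 = \{v^\star\}$, and recursively define
\[
S_{i+1} = S_i \cup \left\{u \in V : \exists\, v \in S_i,\ e \in \alldeg{v},\ u \in e,\ y(e,v) > 0\right\}.
\]
The intent is that each $S_i$ is a nested neighborhood of high-load vertices around $v^\star$, and that at a well-chosen level $i$ the set $S_{i+1}$ will serve as a dense subgraph lower-bounding $\lpopt = \densityopt(G)$ via Lemma \ref{lemma:vwdsg-duality}.

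The first step is a short induction using the $(\alpha,\beta)$-local optimality condition showing that every $v \in S_i$ satisfies $\ld{v} \geq \mu_i$, where
\[
\mu_0 = \mu, \qquad \mu_{i+1} = \frac{\mu_i - \beta}{1+\alpha}, \qquad \text{so} \qquad (1+\alpha)^i \mu_i = \mu - \beta \cdot \frac{(1+\alpha)^i - 1}{\alpha}.
\]
Indeed, if $v \in S_{i+1} \setminus S_i$ was added because of some $w \in S_i$ with $y(\{v,w\},w) > 0$, the local-optimality condition at $w$ gives $\ld{w} \leq (1+\alpha)\ld{v} + \beta$ and hence $\ld{v} \geq (\mu_i - \beta)/(1+\alpha) = \mu_{i+1}$. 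Next, by the construction of $S_{i+1}$, every arc entering $S_i$ has its tail in $S_{i+1}$ and thus its underlying edge in $E(S_{i+1})$; summing weighted loads (which equal in-degrees for the integral orientation) gives
\[
\sz{E(S_{i+1})} \geq \sum_{v \in S_i} \ld{v}\wt{v} \geq \mu_i \cdot \wt{S_i},
\]
which after substituting the closed form for $\mu_i$ rearranges to
\[
\mu \leq (1+\alpha)^i \cdot \frac{\wt{S_{i+1}}}{\wt{S_i}} \cdot \lpopt + \beta \cdot \frac{(1+\alpha)^i - 1}{\alpha}.
\]

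The last step is a pigeonhole over $k$ levels to control the expansion ratio. Because the weights have been normalized so that $w_{\min} \geq 1$ (Section \ref{sec:defn-vwdsg}), we have $\wt{S_0} \geq 1$ and $\wt{S_k} \leq \wt{V} \leq nW$, so $\prod_{i=0}^{k-1} \wt{S_{i+1}}/\wt{S_i} \leq nW$ and some $i < k$ achieves $\wt{S_{i+1}}/\wt{S_i} \leq (nW)^{1/k}$. Combined with $(1+\alpha)^k \leq e^{\alpha k}$ and $((1+\alpha)^k - 1)/\alpha = \bigO{k}$ (valid since $\alpha k \leq 1$ under the hypothesis $\alpha < c/\log(nW)$), this yields
\[
\mu \leq e^{\alpha k + \log(nW)/k}\lpopt + \bigO{k}\beta.
\]
Choosing $k = \Theta(\sqrt{\log(nW)/\alpha})$ balances the two terms in the exponent and matches the stated additive factor $\bigO{\sqrt{\log(nW)/\alpha}}\beta$. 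The ``in particular'' case is obtained by substituting $\alpha = c\eps^2/\log(nW)$ and $\beta = \bigO{1}$, with $c$ small enough that the multiplicative factor is at most $1+\eps$.

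The main subtlety is the in-arc accounting $\sum_{v \in S_i} \ld{v}\wt{v} \leq \sz{E(S_{i+1})}$, where we use that the algorithm's orientation is integral (so each edge contributes exactly once to the sum of in-degrees) together with the identity $\ld{v}\wt{v} = \indeg{v}$. Everything else is a straightforward reweighting of the corresponding unweighted proof; the price of the vertex weights is only that $n$ is replaced by $nW$ in the exponent, which is exactly why the base case $\wt{S_0} \geq 1$ requires the rescaling convention of Section \ref{sec:defn-vwdsg}.
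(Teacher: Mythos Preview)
Your proof is correct and follows the same level-set/pigeonhole skeleton as the paper's proof. The one genuine difference is in how the nested sets are built: you grow $S_i$ by BFS from a maximum-load vertex along reversed arcs, whereas the paper takes $S_i=\{v:\ld{v}\geq \mu_{k-i}\}$ for a fixed decreasing sequence of load thresholds. Both constructions deliver the two ingredients that matter---a load lower bound $\ld{v}\geq\mu_i$ on $S_i$ and the containment $\incut{S_i}\subseteq E(S_{i+1})$---so the telescoping pigeonhole on $\wt{S_i}$ and the final balancing $k=\Theta(\sqrt{\log(nW)/\alpha})$ are identical in both arguments (and in fact both proofs land on $e^{\bigO{\sqrt{\alpha\log(nW)}}}$ for the multiplicative factor). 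The paper's threshold-based sets have the small practical advantage that they coincide with the level lists the data structure already maintains, so the witnessing dense subgraph $S_{i^\star+1}$ can be read off directly; your BFS sets would require a separate traversal to extract.
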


\begin{proof}
    We define an increasing sequence of \ita{levels} $\mu_0<\mu_1<\mu_2<\cdots$, where we let $\mu_0=0$ and $\mu_i=\alphamore\mu_{i-1}+\beta$.

    Let $k$ be the unique index such that $\mu_{k-1}\leq\mu<\mu_{k}$. Note that
    \[
        \mu_{k-i}\geq \frac{\mu}{\alphamore^i}-i\beta\geq e^{-\alpha i}\mu-i\beta.
    \]

    For every $i\in \{1,\hdots,k\}$, let $S_i=\set*{v}{\ld{v}\geq \mu_{k-i}}$. Note that for every $i\leq k$, $S_i$ is nonempty (the maximum load is $\mu<\mu_k$). Since $\wt{S_i}\leq nW$ for all $i$, for some sufficiently small parameter $\eps>0$ (which we will choose later), there must be an index $i^\star\leq \bigO{\log_{\epsmore}(nW)}=\bigO{\log(nW)/\eps}$ such that $\wt{S_{i^\star+1}}\leq \epsmore\wt{S_{i^\star}}$.

    Consider the subgraph induced by $S_{i^\star+1}$.  Vertices in $S_{i^\star}$ have load at least $e^{-\alpha i^\star}\mu-i^\star\beta \geq e^{-\bigO{\alpha \log(nW)/\eps}}\mu-\bigO{\log(nW)/\eps}\beta$. Furthermore, $\bigcup_{v\in S_{i^\star}}\incut{v}\subseteq E(S_{i^\star+1})$ because every edge satisfies $(\alpha, \beta)$-local optimality. Thus,
    \begin{align*}
        \lpopt &\geq \frac{\sz{E(S_{i^\star+1})}}{\wt{S_{i^\star+1}}}\\
        &\geq \frac{\sz{\bigcup_{v\in S_{i^\star}}\incut{v}}}{\wt{S_{i^\star+1}}} = \frac{\sum_{v\in S_{i^\star}}\ld{v}\cdot \wt{v}}{\wt{S_{i^\star+1}}}\\
        &\geq \left(e^{-\bigO{\alpha \log(nW)/\eps}}\mu-\bigO{\log(nW)/\eps}\beta\right)\frac{\wt{S_{i^\star}}}{\wt{S_{i^\star+1}}}\\
        &\geq \frac{1}{1+\eps}\left(e^{-\bigO{\alpha \log(nW)/\eps}}\mu-\bigO{\log(nW)/\eps}\beta\right).
    \end{align*}
    Rearranging, we get
    \begin{align*}
        \mu &\leq e^{\bigO{\alpha \log(nW)/\eps}}\left(\epsmore\lpopt+\bigO{\log(nW)/\eps}\beta\right)\\
        &\leq e^{\bigO{\alpha \log(nW)/\eps} + \eps}\left(\lpopt+\bigO{\log(nW)/\eps}\beta\right)\\
        &\leq e^{\bigO{\sqrt{\alpha \log(nW)}}}\left(\lpopt+\bigO{\sqrt{\log(nW)/\alpha}}\beta\right)
    \end{align*}
    when we set $\eps=\bigO{\sqrt{\alpha \log(nW)}}$.
\end{proof}

This proof also shows that it is easy to obtain an approximate densest subgraph from $\ell$. Consider the vertices sorted in decreasing order of load. The proof shows that some prefix of this list is an approximate densest subgraph. To identify such a prefix, we only need to keep track of the cardinalities of the sets $S_i$ as defined in the proof, and finding $i^\star$ such that $\sz{S_{i^\star+1}}\leq \epsmore\sz{S_{i^\star}}$.

\subsection{Prior work for DSG, and intuition}\label{sec:intuition}

We would like to maintain a locally optimal orientation. Consider the layers $S_i$ from the proof of Lemma \ref{lemma:localopt}. Each $S_i$ contains nodes a maximum of a $\alphamore$-multiplicative factor of load apart. The (exact) local optimality condition implies that an edge can never be oriented from a lower layer to a higher layer. Our approximate local optimality condition allows for a little bit more slack; an edge is also allowed to point up a constant number of layers from $S_i$ to $S_{i-1}$.

Our data structure roughly sorts the nodes into these layers to help maintain approximate local optimality. When processing an edge insertion, the load on some node goes up, and we keep track of the movement of that node between layers. Ideally, the node only goes up by at most one layer at a time. This would allow us to guarantee that adding edges doesn't (momentarily) invalidate the local optimality conditions by too much, especially when we start manipulating edges to balance the load.

This work is inspired by \cite{CQ22}, which describes a similar local optimality condition, and show a dynamic data structure for maintaining a $\epsless$-approximate densest subgraph for unweighted graphs in amortized $\bigO{\log^2(n)/\eps^4}$ time per update. We are able to apply many of these ideas to the vertex-weighted case, with the vertex weights $1$ or greater. Intuitively, since the load on a vertex $v$ increases in increments of $1/\wt{v}$, if the data structure can handle loads increasing by $1$, then it can certainly handle loads increasing by less.

We describe the data structure formally below.

\section{Vertex-weighted densest subgraph}\label{ch:levels}

Formally, the setup is as follows. We are initially given an empty weighted graph with $n$ vertices and weights between $1\leq \wt{v}\leq W$. The data structure processes a series of edge insertions, and maintains a integral orientation of the edges (with some modifications). The following theorem states the guarantees of this data structure:

\begin{thm}\label{thm:vwdsg}
    Consider the problem of approximating the densest subgraph in a vertex-weighted graph with weights $w:V\to \Rgeo$, with the graph dynamically updated by edge insertions and deletions. Let $\eps>0$ be given. We can maintain an orientation (explicity) with maximum in-degree at most $\epsmore\lpopt + \bigO{\log(nW)/\eps}$, and a subgraph (implicitly) with density at least $\epsless\lpopt - \bigO{\log(nW)/\eps}$, in $\bigO{\log^2(nW)\log(\log(nW)/\eps)/\eps^2}$ amortized time per update and $\bigO{\log^3(nW)\log(\log(nW)/\eps)/\eps^4}$ worst-case time per update, where $W=\max_v \wt{v}$.
\end{thm}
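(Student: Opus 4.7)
The plan is to maintain an integral edge orientation of $G$ that is $(\alpha, \beta)$-locally optimal in the sense of Lemma \ref{lemma:localopt}, with $\alpha = \Theta(\eps^2/\log(nW))$ and $\beta = O(1)$. Lemma \ref{lemma:localopt} then immediately implies $\mu = \max_v \ld{v} \leq \epsmore \lpopt + \bigO{\log(nW)/\eps}$, matching the max-in-degree claim; the paragraph following the lemma further shows that an approximate densest subgraph falls out by sorting vertices by load and finding a prefix $S_{i^\star}$ whose weight grows by at most a $\epsmore$ factor from the next prefix, so the subgraph is maintained implicitly for free once we can answer load queries. All that remains is to describe how to maintain the orientation under insertions and deletions.

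Next I would set up the level structure used in \cite{CQ22}, adapted to weighted loads. Define $L = \bigO{\log_{1+\alpha}(nW)} = \bigO{\log^2(nW)/\eps^2}$ levels, where level $i$ corresponds to load range $[(1+\alpha)^{i-1}, (1+\alpha)^i)$. Each vertex $v$ stores its current level $\lab{v}$, its current load $\ld{v}$, and its incoming edges bucketed by the level of the tail endpoint, with each bucket kept as a balanced BST supporting $\bigO{\log L}$ operations. On insertion of $\{u,v\}$, orient it toward the endpoint of lower level and increment the head's load by $1/w(\text{head})$; if this crosses into a new level, promote the head and re-examine its low-tail incoming edges, flipping any whose tail level now falls below the $(\alpha, \beta)$-local optimality threshold. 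A flip changes the tail's load by $1/w(\text{tail})$ and may cascade. Deletion is handled symmetrically: decrement the head's load, demote if below threshold, and flip outgoing edges whose heads are at low levels. Since weights are $\geq 1$, each edge changes any single vertex's load by at most $1$, so the level transitions mirror the unit-load transitions of the unweighted setting and the same data structures apply.

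For the amortized analysis I would assign each oriented edge $\arc{u}{v}$ the potential $\max(0, \lab{v} - \lab{u})$, so the total potential is bounded by $O(mL)$ and each elementary flip strictly decreases it. An insertion or deletion changes the potential by at most $\bigO{L}$; each of the subsequent flips it triggers costs $\bigO{\log L}$ time on the BST buckets. Charging flip work against the potential gives amortized $\bigO{L \log L} = \bigO{\log^2(nW) \log(\log(nW)/\eps)/\eps^2}$ per update. For the worst-case bound, I would deamortize by bounding the number of pending rebalancing operations performed after each update and queuing the rest, spreading the backlog over the next $\bigO{L/\eps}$ updates so that at any time only an additional $\eps$ fraction of slack has accumulated; this costs an extra $\bigO{L/\eps^2}$ factor and gives the claimed $\bigO{\log^3(nW) \log(\log(nW)/\eps)/\eps^4}$ worst-case bound.

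The hard part will be the deamortization: we must argue that even with a bounded queue of deferred flips, the loads stored by the data structure differ from those of the fully rebalanced orientation by at most a $(1 \pm O(\eps))$ factor plus $\bigO{\log(nW)/\eps}$ additive, so that the approximation guarantee of Lemma \ref{lemma:localopt} still holds at query time; this is done by tightening $\alpha$ and $\beta$ by constants and carefully choosing the number of rebalancing steps performed per update. A secondary subtlety, already hinted at in Section \ref{sec:intuition}, is that the weighted load increment $1/\wt{v}$ is fractional, so level boundaries may be crossed by small amounts rather than by unit jumps; because $\wt{v} \geq 1$, no single update can cross more than one level boundary, which is exactly the property the \cite{CQ22} cascade analysis relies on, and the rest of the argument carries over with $\indeg{v}$ replaced by $\ld{v}$ throughout.
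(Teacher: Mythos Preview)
Your high-level plan is right: set $\alpha = \Theta(\eps^2/\log(nW))$, maintain an $(\alpha,\bigO{1})$-locally optimal integral orientation, and invoke Lemma~\ref{lemma:localopt}. But the execution diverges from the paper in three places, and two of them are genuine gaps.

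\textbf{Amortized analysis.} Your potential $\Phi = \sum_{\arc{u}{v}} \max(0,\lab{v}-\lab{u})$ does not obviously decrease under a flip. Flipping $\arc{u}{v}$ to $\arc{v}{u}$ zeroes out that edge's contribution, but it also raises $\ld{u}$ and lowers $\ld{v}$, which can change the levels of $u$ and $v$ and hence the potential contribution of \emph{every other} edge incident to them --- possibly by $\Omega(\deg(u)+\deg(v))$ in the wrong direction. You need an argument that the net change is still negative, and none is given. The paper avoids this entirely: it attaches a label $\varphi(a)$ to each \emph{arc} (a lazy snapshot of the head's load), processes only the $C/\alpha$ arcs with smallest label in each \code{check-inc} call, and uses a credit scheme in which each load increment distributes $\bigO{S/\alpha}$ credits over $\incut{v}$. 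An arc is touched only after its label has drifted two levels from the current load, by which time it has provably accumulated enough credit. This is where the amortized bound actually comes from.

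\textbf{Worst-case analysis.} Deamortization is a red herring here, and you flag it yourself as ``the hard part.'' The paper does \emph{not} deamortize. Because each \code{check-inc}/\code{check-dec} call examines at most $C/\alpha$ arcs and each recursive call moves to a vertex of strictly lower (resp.\ higher) level, the recursion depth is bounded by $\bigO{\log_{1+\alpha}(nW)} = \bigO{\log(nW)/\alpha}$. Multiplying depth by per-call work gives the $\bigO{\log(nW)S/\alpha^2}$ worst-case bound \emph{directly}, with the orientation always exactly $(\bigO{\alpha},\bigO{1})$-locally optimal --- no queue of deferred flips, no slack to track.

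\textbf{Data structure.} Bucketing incoming edges by the \emph{tail's} level is not what the paper does, and it is not clear it gives the right running time: the tail's level changes whenever the tail's load changes, so you would have to update buckets at the head for every change at the tail. The paper instead buckets incoming arcs by the arc label $\varphi(a)$, which is only rewritten when that specific arc is processed. Outgoing arcs are kept in a BST keyed by $\lv{\varphi(a)}$, which is where the $\log(\log(nW)/\eps)$ factor comes from.

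Your observation about fractional load increments (that $\wt{v}\geq 1$ guarantees level changes of at most one per update) is correct and is exactly Lemma~\ref{lemma:lv-smooth} in the paper; that part carries over cleanly.
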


\subsection{The data structure}

The data structure described below maintains an integral orientation of a vertex-weighted undirected graph. It also maintains a list of the layers $S_i$ (as described in the proof of Lemma \ref{lemma:localopt}) in order of decreasing load, as well as the cardinalities of said sets, so that it is simple to find the vertices that comprise a densest subgraph. We refer to this as an \ita{implicit} representation of the approximate densest subgraph; in particular, we can list off the vertices in an approximate densest subgraph in $\bigO{1}$ time per vertex. By the construction of our data structure, it is easy to see that maintaining these lists is simple, and that doing so takes negligible time. (In particular, at each step a vertex only goes up or down one level, so updating the sets $S_i$ is easy.) We omit the maintenance of these lists from the pseudocode, as they distract from the main ideas. We discuss the implementation details later. 

\subsubsection{Data structure overview}

Let $\alpha=c\eps^2/\log(nW)$, for a sufficiently small constant $c$. The data structure maintains a $(\bigO{\alpha},\bigO{1})$-locally optimal orientation as new edges are added to or deleted from the graph. The basic idea is to flip the orientation of any edge that does not satisfy the local optimality conditions.

We run into three main concerns when implementing our data structure. First, it must be able to efficiently detect arcs that violate local optimality. Second, it must bound the effect of ``cascades'', where flipping one arc causes more local optimality violations, and thus even more arc flips. Third, it must ensure that any one arc flip does not increase the load by too much.

\subsubsection{Arc labels}

For each arc $a=\arc{u}{v}$, we define a label on the arc $\lb{a}$ that records a snapshot of $\ld{v}$ whenever we process that arc. The data structure uses these labels to maintain approximate local optimality. The labels are also useful for the running time analysis later.

The data structure makes decisions about the orientation based on $\lab{\arc{u}{v}}$ rather than $\ld{v}$ because $\lab{\arc{u}{v}}$ is updated much less frequently. If each arc kept track of $\ld{v}$ directly, a small change to $\ld{v}$ would need to be propagated to many incoming arcs, which quickly blows up as $\ld{v}$ increases. On the other hand, we only need to make sure our labels $\lab{\arc{u}{v}}$ stay relatively close to the true value of $\ld{v}$, updating only when the difference is too great. The delayed updates to $\lab{\arc{u}{v}}$ make the data structure efficient while still retaining approximate local optimality.

\subsubsection{Levels}

We define a series of levels $L_i$ such that $L_0=0$ and $L_i=\alphamore L_{i-1}+1$. For some value $x$, we say that $x$ is at level $i$ if $L_{i-1}<x\leq L_i$. We define an auxiliary level function $\mathcal{L}$ such that $\lv{x}=i$ if $x$ is at level $i$. We use this level function to organize the arcs in our data structure.

We can make $\lv{x}$ an explicit function with a few elementary calculations.
Observe that
\begin{align*}
L_i=\sum_{j=0}^{i-1}\alphamore^j&=\frac{\alphamore^i-1}{\alpha}.\\
\intertext{Therefore,}
\alpha L_i + 1 &=\alphamore^i\\
\log_{\alphamore}(\alpha L_i + 1)&=i
\end{align*}
and so
$\lv{x}=\lceil\log_{\alphamore}(\alpha x + 1)\rceil$.

Recall levels $\mu_i$ used in the proof of Lemma \ref{lemma:localopt}; $L_i$ are exactly these levels, except we fix $\beta=1$.

Organizing the arcs into similar-load levels like so allows us to maintain labels slightly more efficiently for outgoing arcs, since now we can sort labels of outgoing arcs by levels rather than exact label values. Furthermore, we will show that levels still encapsulate the lower-level bicriteria approximations. This allows us to compare loads and labels more easily and simplify our calculations.

We introduce some lemmas about levels.
The first lemma maps inequalities between levels to inequalities between the underlying values. The proofs are straightforward and omitted.

\begin{lemma}
For our level function $\lv{x}$, the following statements hold:
    \begin{itemize}
    \item If $x\leq \alphamore y+1$, then $\lv{x}\leq \lv{y}+1$,

    \item If $\lv{x}\leq \lv{y}$, then $x\leq \alphamore y+1$,

    \item If $x\geq \alphamore y+1$, then $\lv{x}\geq \lv{y}+1$,

    \item If $\lv{x}\geq \lv{y}+2$, then $x\geq \alphamore y+1$.
    \end{itemize}
\end{lemma}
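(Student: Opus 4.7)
The plan is to observe that the four implications come in two contrapositive pairs — (3) is the contrapositive of (2), and (4) is the contrapositive of (1) — so I would only write up (1) and (2) directly and let the other two follow as one-line corollaries. The single identity doing all the work is the recurrence $L_{j+1} = \alphamore L_j + 1$, which is immediate from the closed form $L_i = (\alphamore^i - 1)/\alpha$ computed just above the lemma, together with the characterization $\lv{z} = i \iff L_{i-1} < z \leq L_i$.

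For implication (1), I would set $j := \lv{y}$, use $y \leq L_j$ to get $\alphamore y + 1 \leq \alphamore L_j + 1 = L_{j+1}$, and then chain with the hypothesis to conclude $x \leq L_{j+1}$, hence $\lv{x} \leq j + 1 = \lv{y} + 1$. For implication (2), I would let $i := \lv{x}$ and $j := \lv{y}$ with $i \leq j$, and use the strict lower bound $y > L_{j-1}$ to write $\alphamore y + 1 > \alphamore L_{j-1} + 1 = L_j \geq L_i \geq x$, yielding the strict inequality $x < \alphamore y + 1$ (stronger than needed). The strictness here is actually important: it is what lets the contrapositive (3) give $\lv{x} \geq \lv{y} + 1$ under the non-strict hypothesis $x \geq \alphamore y + 1$, matching the lemma statement exactly; (4) similarly follows as the contrapositive of (1).

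I do not foresee a genuine obstacle. The only thing to keep in mind is the degenerate case $j = 0$ in (2), where there is no $L_{-1}$ to lower-bound $y$. The explicit formula $\lv{z} = \lceil \log_{\alphamore}(\alpha z + 1) \rceil$ resolves this cleanly: $\lv{z} = 0$ forces $z = 0$, so $j = 0$ forces $y = 0$ and $i \leq 0$ forces $x = 0$, in which case $x = 0 < 1 = \alphamore y + 1$ still holds strictly. This short case check is presumably why the authors were comfortable writing ``the proofs are straightforward and omitted.''
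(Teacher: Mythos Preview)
Your proposal is correct. The paper itself omits the proof entirely (``The proofs are straightforward and omitted''), so there is no approach to compare against; your contrapositive pairing, the use of the recurrence $L_{j+1}=(1+\alpha)L_j+1$, and the handling of the $j=0$ boundary case via the explicit formula together constitute exactly the kind of routine verification the authors had in mind.
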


When we insert or delete an arc directed into a node $v$, $\ld{v}$ goes up or down by $1/\wt{v} \leq 1$.
By defining $L_i$ so that $L_i \geq L_{i-1} + 1$, we ensure that $\lv{\ld{v}}$ also increases or decreases by no more than 1 per edge update.

\begin{lemma}\label{lemma:lv-smooth}
For our level function $\lv{x}$, the following statements hold:
    \begin{itemize}
    \item $\lv{x+1}\leq \lv{x}+1$,
    \item $\lv{x-1}\geq \lv{x}-1$.
    \end{itemize}
In particular, a single edge insertion, deletion, or flip only changes the level of its endpoints by at most one. i.e.,
    \begin{itemize}
        \item $\lv{\frac{\indeg{v} + 1}{\wt{v}}}= \lv{\ld{v} + 1/\wt{v}}\leq \lv{\ld{v}} + 1$,
        \item $\lv{\frac{\indeg{v} - 1}{\wt{v}}}= \lv{\ld{v} - 1/\wt{v}}\geq \lv{\ld{v}} - 1$.
    \end{itemize}
\end{lemma}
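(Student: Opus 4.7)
The plan is to exploit the recurrence $L_i = (1+\alpha)L_{i-1} + 1$ to show that consecutive levels are always separated by at least $1$, which is precisely what makes an additive change of $1$ cost at most one level. Concretely, since $\alpha \geq 0$, we have $L_i \geq L_{i-1} + 1$ for every $i \geq 1$, and this one inequality drives both bullets.

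For the first bullet, I would let $i = \lv{x}$, so that by definition $L_{i-1} < x \leq L_i$. Then $x + 1 \leq L_i + 1 \leq L_{i+1}$, where the last step uses $L_{i+1} = (1+\alpha)L_i + 1 \geq L_i + 1$. Hence $\lv{x+1} \leq i+1 = \lv{x}+1$. For the second bullet I run the same argument from the other side: $x - 1 > L_{i-1} - 1 \geq L_{i-2}$, so $x-1$ lies above the boundary of level $i-1$ and therefore $\lv{x-1} \geq i - 1 = \lv{x} - 1$. Both chains only invoke the telescoping inequality $L_j \geq L_{j-1} + 1$, so nothing beyond careful bookkeeping is required.

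For the ``in particular'' consequences, the bridge from an additive change of $1$ to an additive change of $1/\wt{v}$ is the observation that $\wt{v} \geq 1$ by our normalization, together with the fact that $\lv{\cdot}$ is nondecreasing (it is the ceiling of $\log_{(1+\alpha)}(\alpha x + 1)$, which is nondecreasing in $x$). Thus
\[
    \lv{\ld{v} + 1/\wt{v}} \leq \lv{\ld{v} + 1} \leq \lv{\ld{v}} + 1,
\]
and symmetrically $\lv{\ld{v} - 1/\wt{v}} \geq \lv{\ld{v} - 1} \geq \lv{\ld{v}} - 1$, using the first two bullets. Since an edge insertion, deletion, or flip incident to $v$ changes $\indeg{v}$ by exactly $\pm 1$ and hence $\ld{v}$ by exactly $\pm 1/\wt{v}$, the corollary for edge updates is immediate.

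There is no real obstacle here; the only place one could slip up is forgetting that $\wt{v} \geq 1$ (which requires appealing back to the rescaling from Section~\ref{sec:defn-vwdsg}) or forgetting to invoke monotonicity of $\lv{\cdot}$ when reducing the $1/\wt{v}$ step to the $1$ step. Everything else is a single-line manipulation of the recurrence for $L_i$.
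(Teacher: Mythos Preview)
Your proof is correct and follows exactly the approach the paper indicates: the paper omits a formal proof but remarks just before the lemma that ``by defining $L_i$ so that $L_i \geq L_{i-1} + 1$, we ensure that $\lv{\ld{v}}$ also increases or decreases by no more than 1 per edge update,'' which is precisely the telescoping inequality you exploit. Your handling of the ``in particular'' part via monotonicity of $\lv{\cdot}$ and the normalization $\wt{v}\geq 1$ is also the intended bridge.
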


For an arc $a=\arc{u}{v}$, if $\lv{\ld{v}}\leq \lv{\ld{u}}$, then $\ld{v}\leq \alphamore\ld{u}+1$. Thus, maintaining the the invariant $\lv{\ld{v}}\leq \lv{\ld{u}}$ for all arcs ensures $(\alpha, 1)$-local optimality.
More generally, if all arcs are oriented such that $\lv{\ld{v}}\leq \lv{\ld{u}}+\bigO{1}$, then the orientation is still $(\bigO{\alpha}, \bigO{1})$-locally optimal, per the following lemma.

\begin{lemma}
    Suppose for all arcs $a=\arc{u}{v}$,  $\lv{\ld{v}}\leq \lv{\ld{u}}+\bigO{1}$. Then the orientation is $(\bigO{\alpha}, \bigO{1})$-locally optimal.
\end{lemma}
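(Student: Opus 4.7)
The goal is to show that the hypothesis $\lv{\ld{v}} \leq \lv{\ld{u}} + c$ for some universal constant $c$ translates the purely level-based comparison into a multiplicative-plus-additive comparison of the loads themselves, i.e., $\ld{v} \leq (1 + \bigO{\alpha})\ld{u} + \bigO{1}$. The plan is to use the explicit formula $L_i = ((1+\alpha)^i - 1)/\alpha$ derived earlier, chase inequalities up $c$ consecutive levels, and then Taylor-expand to absorb the constant $c$ into the $\bigO{\cdot}$ notation.

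First, I would observe that $\lv{\ld{v}} \leq \lv{\ld{u}} + c$ directly implies $\ld{v} \leq L_{\lv{\ld{u}} + c}$ by the definition of $\lv{\cdot}$. Next, I would use the closed form for $L_i$ to factor
\begin{align*}
L_{\lv{\ld{u}}+c} = \frac{(1+\alpha)^{\lv{\ld{u}}+c}-1}{\alpha} = (1+\alpha)^c \cdot L_{\lv{\ld{u}}} + \frac{(1+\alpha)^c - 1}{\alpha}.
\end{align*}
Then I would apply the second bullet of the first lemma (with $x = L_{\lv{\ld{u}}}$ and $y = \ld{u}$, noting $\lv{L_{\lv{\ld{u}}}} = \lv{\ld{u}}$) to get $L_{\lv{\ld{u}}} \leq \alphamore \ld{u} + 1$. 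Combining,
\begin{align*}
\ld{v} \leq (1+\alpha)^{c+1}\ld{u} + (1+\alpha)^c + \frac{(1+\alpha)^c - 1}{\alpha}.
\end{align*}

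The key step, and the only one requiring a bit of care, is recognizing that since $c = \bigO{1}$ and $\alpha < 1$, the binomial expansion gives $(1+\alpha)^{c+1} = 1 + \bigO{\alpha}$ and $(1+\alpha)^c - 1 = c\alpha + \bigO{\alpha^2}$, so $((1+\alpha)^c - 1)/\alpha = \bigO{1}$. Thus the right-hand side collapses to $(1 + \bigO{\alpha})\ld{u} + \bigO{1}$, which is precisely the $(\bigO{\alpha}, \bigO{1})$-local optimality condition.

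I do not expect any real obstacle here; the whole argument is a direct computation using the closed form for $L_i$ together with the level-to-value inequality already established. The only subtlety is being explicit that the additive $\bigO{1}$ in the hypothesis (the level gap) becomes an additive $\bigO{1}$ in the conclusion (not $\bigO{1/\alpha}$) precisely because the geometric series $((1+\alpha)^c - 1)/\alpha$ is bounded independent of $\alpha$ whenever $c$ is a fixed constant.
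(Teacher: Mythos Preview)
Your proposal is correct and takes essentially the same approach as the paper: both arguments use the closed form $L_i = ((1+\alpha)^i - 1)/\alpha$ to convert a bounded level gap into a $(1+\bigO{\alpha})$-multiplicative plus $\bigO{1}$-additive bound on the underlying values. The paper establishes the one-level case $\lv{x}\leq\lv{y}+1 \Rightarrow x \leq (1+\alpha)^2 y + 2 + \alpha$ and then remarks that a constant number of levels follows by iterating, whereas you handle the $c$-level case in one shot; this is a purely cosmetic difference.
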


\begin{proof}
    It suffice to show that if $\lv{x} \leq \lv{y} + 1$, then
    \begin{align*}
        x \leq (1 + C \alpha)y + D
    \end{align*}
    for fixed constants $C, D > 0$. (For more than one level, up to a constant, we can iterate on this inequality.)
    We have
    \begin{align*}
        x &\leq %
        \frac{(1 + \alpha)^{\lv{x}} - 1}{\alpha} %
        \leq %
        \frac{(1 + \alpha)^{\lv{y}+1} - 1}{\alpha}
        \\
        &=
        (1+\alpha)^2 \frac{(1 + \alpha)^{\lv{y}-1} - 1}{\alpha} + \frac{(1+\alpha)^2 - 1}{\alpha}
        \leq
        (1+\alpha)^2 y + 2 + \alpha
    \end{align*}
    as desired.
\end{proof}

\subsubsection{The algorithm}

Let $C$ be a sufficiently large constant.

\begin{algorithm}[p]
\caption{\code{insert}($e=(u,v)$)}
\label{alg:insert3}
    Let $\ld{u}\geq \ld{v}$. (Otherwise, swap $u$ and $v$.)\\
    Orient $e$ as $a=\arc{u}{v}$ and add it to the orientation, then set $\lb{a}=\ld{v}$.\\
    Call \code{check-inc}($v$).
\end{algorithm}

\begin{algorithm}[p]
\caption{\code{delete}($e=(u,v)$)}
\label{alg:delete3}
    Let $e$ be oriented as $a=\arc{u}{v}$; delete $a$ from the orientation.\\
    Call \code{check-dec}($v$).
\end{algorithm}

\begin{algorithm}[p]
\caption{\code{check-inc}($v$)}
\label{alg:checkinc3}
\tcc{This subroutine is called whenever $\ld{v}$ has increased.}
\For {up to $C/\alpha$ arcs $a=\arc{u}{v}\in \incut{v}$, in increasing order of $\lb{a}$}
{
    \tcc{The label is much lower than the current load.}
    \uIf {$\lv{\ld{v}}\geq\lv{\lb{a}}+2$}
    {
        \tcc{The arc is too imbalanced.}
        \uIf {$\lv{\ld{u}}+2\leq\lv{\ld{v}}$} 
        {
            Flip $a$ to $\arc{v}{u}$, then set $\lb{a}=\ld{u}$.\\
            \tcc{This restores the original value of $\ld{v}$, but now $\ld{u}$ has increased.}
            Call \code{check-inc}($u$).
            Return.
        }
        \Else
        {
            Set $\lb{a}=\ld{v}$.
        }
    }
    \tcc{There are no arcs that satisfy the above conditions; all arc labels are close to the current load.}
    \Else
    {
        Return.
    }
}
\end{algorithm}

\begin{algorithm}[p]
\caption{\code{check-dec}($u$)}
\label{alg:checkdec3}
\tcc{This subroutine is called whenever $\ld{u}$ has decreased.}
Choose an arc $a=\arc{u}{v}\in\outcut{u}$ such that $\lv{\lb{a}}$ is maximized.\\
\If {$\lv{\ld{u}}+3\leq \lv{\lb{a}}$}
{
    \tcc{The arc is too imbalanced.}
    Flip $a$ to $\arc{v}{u}$, then set $\lb{a}=\ld{u}$.\\
    \tcc{This restores the original value of $\ld{u}$, but now $\ld{v}$ has decreased.}
    Call \code{check-dec}($v$).\\
    Return.
}
\For {up to $C/\alpha$ arcs $b=\arc{w}{u}\in \incut{u}$, in decreasing order of $\lb{b}$}
{
    \tcc{The label is much higher than the current load.}
    \uIf {$\lv{\lb{b}}\geq \lv{\ld{u}}+2$}
    {
        Set $\lb{b}=\ld{u}$.
    }
    \tcc{There are no arcs that satisfy the above conditions; all arc labels are close to the current load.}
    \Else
    {
        Return.
    }
}
\end{algorithm}

We have four subroutines; pseudocode is presented in Algorithms \ref{alg:insert3} through \ref{alg:checkdec3}. \code{insert($e=(u,v)$)} inserts an undirected edge $e$ into the data structure. \code{delete($e$)} deletes a directed edge $e$ from the data structure. \code{check-inc($v$)} is called whenever an additional edge is oriented into vertex $v$, either by an edge insertion or an arc flip. \code{check-dec($u$)} is called whenever an edge is oriented out of vertex $u$, either by an edge deletion or an arc flip.

For each edge update, the data structure uses these subroutines to maintain a $(\bigO{\alpha}, \bigO{1})$-locally optimal orientation.

Whenever the load on vertex $v$ increases, we call \code{check-inc($v$)} on that vertex.
\code{check-inc($v$)} ensures that after vertex $v$ increases in load, (a) all arcs $\arc{u}{v}$ are still approximately locally optimal, and (b) all arc labels $\lb{\arc{u}{v}}$ are still close to $\ld{v}$. To accomplish this, \code{check-inc($v$)} updates the $C/\alpha$ arcs with lowest label; these arcs are closest to violating the above conditions when $\ld{v}$ increases. If the local optimality condition is violated, then the arc is flipped to orient in the other direction. This returns $\ld{v}$ back to its original value, but increases $\ld{u}$ instead, so we recurse on \code{check-inc($u$)}. If the arc isn't flipped, then its label $\lb{\arc{u}{v}}$ is reset to $\ld{v}$.

A similar process occurs for \code{check-dec($u$)}. Whenever vertex $u$ decreases in load, \code{check-dec($u$)} ensures (a) all arcs $\arc{u}{v}$ are still approximately locally optimal, and (b) all arc labels $\lb{\arc{w}{u}}$ are still close to $\ld{u}$. To accomplish this, \code{check-dec($u$)} checks if the arc $\arc{u}{v}$ with highest label should be flipped. If it is, then we recurse on \code{check-dec($w$)} in a similar manner. If not, then $C/\alpha$ arcs with highest labels $\lb{\arc{w}{u}}$ have their labels reset back to $\ld{u}$.

\subsection{Correctness analysis}

In this section, we show that data structure maintains a $(\bigO{\alpha}, \bigO{1})$-locally optimal orientation. First, we show that the label of an arc is always close to the current load.

\begin{lemma}\label{lemma:worst-ineq-lowerbound-v}
    For all arcs $a=\arc{u}{v}$, $\lv{\ld{v}}\leq \lv{\lb{a}}+4$.
\end{lemma}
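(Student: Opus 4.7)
The plan is to prove the invariant $\lv{\ld{v}} \leq \lv{\lb{a}} + 4$ by strong induction on the sequence of data-structure operations; the base case (empty graph) is vacuous. The first key observation I will exploit is that whenever an arc $a = \arc{u}{v}$ becomes incoming to $v$ (either newly inserted or flipped in) or is refreshed inside \code{check-inc}$(v)$, the assignment $\lb{a} \leftarrow \ld{v}$ yields $\lv{\lb{a}} = \lv{\ld{v}}$ at that instant. Between such refreshes, $\ld{v}$ changes by at most $1/\wt{v} \leq 1$ per primitive arc update, which by Lemma~\ref{lemma:lv-smooth} shifts $\lv{\ld{v}}$ by at most one level. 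Decreases of $\ld{v}$ only make the invariant easier to satisfy, so the only dangerous event is an increase of $\ld{v}$, which by construction immediately triggers a call to \code{check-inc}$(v)$.

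Next I will analyze \code{check-inc}$(v)$ itself. Since it iterates over $\incut{v}$ in increasing order of $\lb{a}$, any arc that is stale (small $\lv{\lb{a}}$) is visited first; the routine either flips it out of $\incut{v}$ or refreshes its label to $\ld{v}$. If the routine exits through its else branch at some arc $a^\star$, then $\lv{\ld{v}} \leq \lv{\lb{a^\star}} + 1$, and since all subsequent arcs have at least as large labels the invariant holds with room to spare for them. If the routine flips an arc and recurses into \code{check-inc}$(u)$, the flipped arc leaves $\incut{v}$, and any cascade that later flips something back into $v$ assigns a fresh label at the moment of re-entry, so the invariant is immediately restored for every such re-entering arc.

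The main obstacle, and the step I expect to require the most care, is ruling out a failure mode where \code{check-inc}$(v)$ exhausts its $C/\alpha$ iteration budget and terminates while leaving some arc $a \in \incut{v}$ with $\lv{\lb{a}} \leq \lv{\ld{v}} - 5$. My plan here is to combine the inductive hypothesis (so that, before the triggering update, every $a \in \incut{v}$ already satisfied the $+4$ bound with respect to the old load) with the fact that $\ld{v}$ has just risen by at most a single level, implying that the arcs which could newly violate the $+4$ bound are precisely those whose labels sit at the single stalest old level. I will argue that these arcs are all among the earliest visited by \code{check-inc}$(v)$ in its increasing-order scan, and are therefore refreshed or flipped before the $C/\alpha$ budget is exhausted, provided $C$ is chosen as a sufficiently large constant. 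The $+4$ slack in the statement then comfortably absorbs the single-step perturbation from Lemma~\ref{lemma:lv-smooth} together with the bounded disturbances the recursive cascade can route back to $v$ before the outer call terminates.
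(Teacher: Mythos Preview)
Your single-step inductive argument has a genuine gap at the crucial point. You claim that the arcs which could newly violate the $+4$ bound---those whose labels sit at the stalest old level---will all be refreshed or flipped within the $C/\alpha$ iteration budget of a single \code{check-inc}$(v)$ call. But nothing in your inductive hypothesis bounds the \emph{number} of arcs whose label level equals $\lv{\ld{v}}-4$; in principle this could be as large as $\indeg{v}$, which can be arbitrarily larger than $C/\alpha$ for any fixed constant $C$. A single call to \code{check-inc}$(v)$ therefore need not touch all of them, and the induction breaks exactly at the step you flagged as requiring the most care.

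The paper's proof is not a single-step induction but an amortized counting argument that tracks one fixed arc $a$ across many updates. It introduces an intermediate ``dangerous'' state, $\lv{\ld{v}}\geq\lv{\lb{a}}+2$, strictly weaker than ``bad'' ($\lv{\ld{v}}\geq\lv{\lb{a}}+5$). Once $a$ is dangerous it becomes eligible for processing in the \code{check-inc} loop; for $a$ to go from dangerous to bad, $\ld{v}$ must rise by $\Omega(\alpha\lambda)$ where $\lambda$ is the load at the moment $a$ became dangerous---hence at least $\Omega(\alpha\lambda\wt{v})$ separate increments, each triggering a \code{check-inc}$(v)$ that refreshes $C/\alpha$ arcs of label at most $\lb{a}$. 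Summing, at least $C\lambda\wt{v}$ such arcs are refreshed, which for large enough $C$ exceeds the at most $\lambda\wt{v}$ arcs that had label $\leq\lb{a}$ when $a$ became dangerous; so $a$ itself must be refreshed before it can become bad. The multi-level buffer between ``dangerous'' and ``bad'' is precisely what supplies enough \code{check-inc} calls for this counting to close; your approach, which tries to restore the invariant after each individual increment, has no such buffer to exploit.
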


\begin{proof}
    Fix $a$.     
    We call $a$ \ita{bad} if it violates the given inequality, and \ita{dangerous} if
    \[
        \lv{\ld{v}}\geq \lv{\lb{a}}+2.
    \]
    We show that $a$ never becomes bad. Note that $a$ must be dangerous before it becomes bad.
    
    When $a$ is initially added to the orientation, or $a$ is relabeled, we have $\lb{a}=\ld{v}$, at which point it is neither bad nor dangerous.

    Suppose $a$ becomes dangerous. The loop in \code{check-inc($v$)} can now process $a$. Let $\lambda=\ld{v}$ at this moment when $a$ became dangerous. $\lv{\ld{v}}$ must increase by at least two more levels before $a$ becomes bad. Thus, $\ld{v}$ must increase by at least $(\alphamore \lambda + 1)-\lambda=\alpha\lambda+1=\bigOmega{\alpha\lambda}$. Each time $\ld{v}$ increases (i.e., at the end of a recursive chain of \code{check-inc}'s) is chance to process $a$, and consequently reset $\lab{a}=\ld{v}$.

    \code{check-inc($v$)} processes arcs $b\in\incut{v}$ with minimum label $\lb{b}$. An arc $b$ is processed before $a$ only if $\lb{b}\leq \lb{a}$. When $a$ becomes dangerous, there are at most $\indeg{v} = \lambda\wt{v}$ arcs $b$ with $\lb{b} \leq \lb{a}$. The number of arcs cannot increase while $a$ is dangerous, since new or updated labels are assigned value $\ld{v}>\lb{a}$. 

    Each time \code{check-inc($v$)} is called, if $a$ is not processed, then instead we process $C/\alpha$ arcs with lower label. Each of those arcs has its label set to greater than $\lb{a}$. Thus, the total number of arcs $b \in \incut{v}$ with $\lb{b} \leq \lb{a}$ decreases by $C / \alpha$.

    Suppose by contradiction that $a$ goes from dangerous to bad. In order to become bad, $a$ must not have been processed while it was dangerous. From the time $a$ became dangerous, the load must have increased by at least $\bigOmega{\alpha\lambda}$. Each decrease to $\ld{v}$ is by $1/\wt{v}$, and processes $C/\alpha$ arcs $b$ with $\lb{b}\leq\lb{a}$. Altogether, we must process at least
    \[
        \frac{\bigOmega{\alpha\lambda}\cdot C/\alpha}{1/\wt{v}}
        >
        \lambda\wt{v}
    \]
    arcs. which is the in-degree of $v$ when $a$ became dangerous. However, the number of arcs $b$ with $\lb{b}\leq \lb{a}$ was bounded above by this in-degree, a contradiction.

    Thus, $a$ never goes from dangerous to bad.
\end{proof}

\begin{lemma}\label{lemma:worst-ineq-upperbound-v}
    For all arcs $a=\arc{u}{v}$, $\lv{\lb{a}}\leq \lv{\ld{v}}+4$.
\end{lemma}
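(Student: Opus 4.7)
The argument mirrors Lemma~\ref{lemma:worst-ineq-lowerbound-v} with the direction of change reversed. Call an arc $a = \arc{u}{v}$ \emph{bad} if it violates the claimed inequality, i.e.\ $\lv{\lb{a}} \geq \lv{\ld{v}} + 5$, and \emph{dangerous} if $\lv{\lb{a}} \geq \lv{\ld{v}} + 2$. Every time the algorithm creates or relabels $a$ it sets $\lb{a} = \ld{v}$, so immediately afterwards $a$ is neither bad nor dangerous; in particular, $a$ cannot become bad without first passing through dangerous.

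Fix the moment at which $a$ first becomes dangerous and let $\lambda = \ld{v}$ at that instant. Two ingredients, direct analogs of those used in Lemma~\ref{lemma:worst-ineq-lowerbound-v}, drive the argument: (a) at this moment $\indeg{v} = \lambda \wt{v}$, so at most $\lambda \wt{v}$ arcs $b \in \incut{v}$ satisfy $\lb{b} \geq \lb{a}$; and (b) while $a$ remains dangerous, $\ld{v} < \lb{a}$, so every newly oriented or relabeled incoming arc of $v$ receives a label strictly below $\lb{a}$, and no arcs with label $\geq \lb{a}$ are ever produced during this interval.

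Every net decrease of $\ld{v}$ is by $1/\wt{v}$ and triggers a call to \code{check-dec($v$)}. If that call begins by flipping an outgoing arc of $v$, the flip immediately reverses the triggering decrease and restores $\ld{v}$, so there is no net change and this event can be excluded from the accounting. Otherwise the loop sweeps through $\incut{v}$ in decreasing order of label: either $a$ is reached and relabeled to $\ld{v}$ (ending its dangerous status and contradicting the hypothesis that $a$ later becomes bad), or $C/\alpha$ arcs with label strictly greater than $\lb{a}$ have their labels reset to $\ld{v}$. For $a$ to cross from dangerous to bad, $\lv{\ld{v}}$ must drop by three further levels, which requires $\ld{v}$ to decrease by $\bigOmega{\alpha\lambda + 1}$ and therefore to undergo $\bigOmega{\alpha\lambda\wt{v}}$ net-decrease steps. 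By (b), each such step retires $C/\alpha$ fresh arcs with label $\geq \lb{a}$, for a total of $\bigOmega{C\lambda\wt{v}}$ such arcs --- exceeding the cap $\lambda\wt{v}$ from (a) once $C$ is chosen large enough, a contradiction.

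The main subtlety is the outgoing-arc-flip branch at the start of \code{check-dec($v$)}: a priori it performs work without touching any incoming arc, so one might worry it breaks the amortization. Mirroring the analogous observation for \code{check-inc} in Lemma~\ref{lemma:worst-ineq-lowerbound-v}, the resolution is that such a flip exactly cancels the triggering decrease of $\ld{v}$, so it is not a net decrease and should not be charged in the counting argument.
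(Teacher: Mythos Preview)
Your proof is correct and follows essentially the same approach as the paper's: the dangerous/bad framework, the bound of $\lambda\wt{v}$ on arcs with label at least $\lb{a}$, the observation that this pool cannot grow while $a$ is dangerous, and the counting argument that $C/\alpha$ such arcs are retired per net decrease. Your explicit treatment of the outgoing-arc-flip branch in \code{check-dec($v$)} (noting it cancels the triggering decrease and so contributes no net change) is a bit more careful than the paper's terse parenthetical ``at the end of a recursive chain of \code{check-dec}'s,'' but the content is the same.
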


\begin{proof}
    The proof is similar to that of Lemma \ref{lemma:worst-ineq-lowerbound-v}.

    Fix $a$. We call $a$ \ita{bad} if it violates the given inequality, and \ita{dangerous} if
    \[
        \lv{\lb{a}}\geq \lv{\ld{v}}+2.
    \]

    We show that $a$ never becomes bad. Note that $a$ must be dangerous before it becomes bad.
    
    When $a$ is initially added to the orientation, or $a$ is relabeled, $\lb{a}=\ld{v}$, so it is neither bad nor dangerous. 

    Suppose $a$ becomes dangerous. The loop in \code{check-dec($v$)} can now process $a$. Let $\lambda=\ld{v}$ at the moment $a$ becomes dangerous. $\lv{\ld{v}}$ must decrease by at least two more levels before $a$ becomes bad. Thus, $\ld{v}$ must decrease by at least $\lambda - \frac{1}{\alphamore}(\lambda-1) = \frac{\alpha}{\alphamore}\lambda+\frac{1}{\alphamore}=\bigOmega{\alpha\lambda}$. Each time $\ld{v}$ decreases (i.e., at the end of a recursive chain of \code{check-dec}'s) is a chance to process $a$, and consequently reset $\lb{a}=\ld{v}$.

    \code{check-dec($v$)} processes arcs $b\in\incut{v}$ with maximum label $\lb{b}$. An arc $b$ is processed before $a$ only if $\lb{b}\geq \lb{a}$. When $a$ becomes dangerous, there are at most $\indeg{v}=\lambda\wt{v}$ arcs $b$ with $\lb{b}\geq \lb{a}$. The number of arcs cannot increase while $a$ stays dangerous, since new or updated labels are assigned value $\ld{v}<\lb{a}$.

    Each time \code{check-dec($v$)} is called, if $a$ is not processed, then instead we process $C/\alpha$ arcs with higher label. Each of those arcs has its label set to less than $\lb{a}$. Thus, the total number of arcs $b \in \incut{v}$ with $\lb{b} \geq \lb{a}$ decreases by $C / \alpha$.

    Suppose by contradiction that $a$ goes from dangerous to bad. 
    In order to become bad, $a$ must not have been processed while it was dangerous.
    From the time $a$ became dangerous, the load must have decreased by at least $\bigOmega{\alpha\lambda}$. 
    Each decrease to $\ld{v}$ is by $1/\wt{v}$, and processes $C/\alpha$ arcs $b$ with $\lb{b} \geq \lb{a}$. Altogether, we must process at least
    \begin{align*}
        \frac{\bigOmega{\alpha \lambda} \cdot C/\alpha}{1/\wt{v}}
        >
        \lambda\wt{v},
    \end{align*}
    arcs, which is the in-degree of $v$ when $a$ became dangerous. However, the number of arcs $b$ with $\lb{b}\geq\lb{a}$ was bounded above by this in-degree, a contradiction.

    Thus, $a$ never goes from dangerous to bad.
\end{proof}

Now that we have bounded $\lb{a}$ by $\ld{v}$, we also show a bound with $\ld{u}$.

\begin{lemma}\label{lemma:worst-ineq-upperbound-u}
    For all arcs $a=\arc{u}{v}$, $\lv{\lb{a}}\leq \lv{\ld{u}} + 3$.
\end{lemma}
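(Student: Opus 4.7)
The plan is to follow the template of Lemmas \ref{lemma:worst-ineq-lowerbound-v} and \ref{lemma:worst-ineq-upperbound-v}: fix an arc $a=\arc{u}{v}$, designate $a$ as \ita{bad} when $\lv{\lb{a}} \geq \lv{\ld{u}}+4$, and show that $a$ never becomes bad. Unlike the preceding two lemmas, where badness is averted by an arc-counting argument inside \code{check-inc}/\code{check-dec}, here the bound is enforced directly by the flip test at the top of \code{check-dec}.

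First I would examine every moment the algorithm assigns $\lb{a}$ and verify that $a$ is not bad immediately after. There are three cases. In \code{insert}, $\lb{a}=\ld{v}$ is set just after orienting $\arc{u}{v}$, and the orientation test $\ld{u} \geq \ld{v} - 1/\wt{v} \geq \ld{v} - 1$ yields $\lv{\lb{a}} \leq \lv{\ld{u}} + 1$. In the relabel branch of \code{check-inc}($v$), that branch is reached only when the flip guard $\lv{\ld{u}} + 2 \leq \lv{\ld{v}}$ fails, giving $\lv{\lb{a}} = \lv{\ld{v}} \leq \lv{\ld{u}} + 1$. In the flip branch of \code{check-dec}($u'$) acting on an arc $\arc{u'}{v'}$, the arc of interest is the newly oriented $a=\arc{v'}{u'}$ with $\lb{a} = \ld{u'}$; combining the flip condition $\lv{\ld{u'}} + 3 \leq \lv{\lb{a_{\text{old}}}}$ with Lemma \ref{lemma:worst-ineq-upperbound-v} forces $\lv{\ld{u'}} \leq \lv{\ld{v'}} + 1$ pre-flip, and tracking the $\pm 1/w$ post-flip adjustments via Lemma \ref{lemma:lv-smooth} yields $\lv{\lb{a}} \leq \lv{\ld{v'}} + 3$ in the post-flip state, which matches the target bound with $v'$ as the new source.

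Second, I would argue the invariant survives as $\ld{u}$ drifts between assignments. Increases of $\ld{u}$ only improve the bound. Each decrease triggers \code{check-dec}($u$), and by Lemma \ref{lemma:lv-smooth} it drops $\lv{\ld{u}}$ by at most one. Inside \code{check-dec}($u$), the outgoing arc $a^*$ of maximum label is flipped exactly when $\lv{\lb{a^*}} \geq \lv{\ld{u}} + 3$. In the no-flip branch, $\lv{\lb{a^*}} \leq \lv{\ld{u}} + 2$, and since $a^*$ is the maximum, every outgoing arc inherits the bound. In the flip branch, $a^*$ leaves $\outcut{u}$ and the flip restores $\ld{u}$ by $1/\wt{u}$, rewinding $\lv{\ld{u}}$ to its pre-drop value; the remaining outgoing arcs then satisfy the bound by the invariant at that earlier state.

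The main obstacle is the third case above: the newly labelled arc sits exactly at the $+3$ boundary relative to the new source's load, leaving no cushion. I expect the cleanest route is to argue that the immediately-triggered recursive call \code{check-dec}($v'$) either does not touch $a$ (because $a$ is not the max-label outgoing arc of $v'$, or because the flip threshold is not crossed) or flips it cleanly, in either case preserving the invariant. A secondary subtlety, easier to dispatch, is verifying that \code{check-inc} cascades do not silently decrease $\ld{u}$: a \code{check-inc} flip $\arc{u}{v} \mapsto \arc{v}{u}$ only increases the load of its new target $u$, so the outgoing-arc invariant at $u$ is not threatened by that path.
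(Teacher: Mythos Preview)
Your approach is essentially the paper's: verify the bound at every point where $\lb{a}$ is assigned, and separately argue that any persistent decrease of $\ld{u}$ triggers \code{check-dec}($u$), whose max-label flip test enforces the bound. The paper does not use the bad/dangerous template for this lemma, and neither do you once you get going; that opening framing is a distraction.

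You are, however, missing two label-assignment cases. The more substantive one is the flip branch of \code{check-inc}: when \code{check-inc} is running at a vertex $u$ and flips an incoming arc $\arc{v}{u}$ to $a=\arc{u}{v}$, the new label $\lb{a}=\ld{v}$ (post-flip) must be checked against $\ld{u}$ (post-flip). The paper handles this directly from the \code{check-inc} flip guard $\lv{\ld{v}}+2\leq\lv{\ld{u}}$ (pre-flip) together with Lemma~\ref{lemma:lv-smooth}, obtaining $\lv{\lb{a}}\leq\lv{\ld{u}}$ post-flip. The second missing case is the relabel loop in \code{check-dec}, where $\lb{a}$ is reset to $\ld{v}$; this is trivial since the label only decreases there, but it should be stated.

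Your ``main obstacle'' is not one. After the \code{check-dec} flip creates the new arc at the $+3$ boundary relative to the new source $v'$, the recursive call \code{check-dec}($v'$) can only \emph{increase} $\ld{v'}$ (if it flips an outgoing arc) or leave it unchanged (if it proceeds to the relabel loop); it never pushes $\ld{v'}$ lower. Hence the freshly established bound $\lv{\lb{a}}\leq\lv{\ld{v'}}+3$ survives the recursive call without any special-casing, and the paper simply folds this into its general ``whenever $\ld{u}$ decreases'' argument.
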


\begin{proof}
    Fix $a=\arc{u}{v}$. When we initially add an edge to the orientation, $\lv{\lb{a}}=\lv{\ld{v}}\leq \lv{\ld{u}+1/\wt{v}}\leq \lv{\ld{u}}+1$, so the inequality is satisfied.

    The inequality can possibly be violated when $\lb{a}$ is reset, or when $\ld{u}$ decreases. 

    $\lb{a}$ can be set in four places.

    First, $\lb{a}$ can be set when $a$ is flipped in \code{check-inc}. Suppose we flip arc $\arc{v}{u}$ to $a = \arc{u}{v}$. Let $\ldA{u}$ be the load of $u$ before the flip (respectively $\ldA{v}$), and $\ldB{u}$ be the load of $u$ after the flip (respectively $\ldB{v}$). We flip from $\arc{v}{u}$ to $\arc{u}{v}$ if $\lv{\ldA{v}}+2\leq \lv{\ldA{u}}$. After flipping, $\lv{\ld{v}}$ increases by at most $1$ and $\lv{\ld{u}}$ decreases by at most $1$, so we have
    \[
        \lv{\ldB{v}-1/\wt{v}}+2%
        =%
        \lv{\ldA{v}} + 2%
        \leq %
        \lv{\ldA{u}}=\lv{\ldB{u}+1/\wt{u}}
    \]
    Then, applying Lemma \ref{lemma:lv-smooth}, we find
    $\lv{\lb{a}}=\lv{\ldB{v}}\leq \lv{\ldB{u}}$.
    \begin{align*}
        \lv{\lb{a}} 
        &= \lv{\ldB{v}}
        \leq
        \lv{\ldA{v}} + 1
        \leq
        \lv{\ldA{u}} - 1
        \leq
        \lv{\ldB{u}}.
    \end{align*}
    
    Second, $\lb{a}$ can be set in the loop of \code{check-inc}. If the arc is not flipped and we set $\lb{a}$, then we have $\lv{\ld{u}}+2>\lv{\ld{v}}=\lv{\lb{a}}$.

    Third, $\lb{a}$ can be set if $a$ is flipped in \code{check-dec}. Suppose we flip arc $\arc{v}{u}$ to $a = \arc{u}{v}$. Let $\ldA{u}$ be the load of $u$ before the flip (respectively $\ldA{v}$), and $\ldB{u}$ be the load of $u$ after the flip (respectively $\ldB{v}$). We flip the arc if $\lv{\ldA{v}}+3\leq \lv{\lb{a}}$. Combining with Lemma \ref{lemma:worst-ineq-upperbound-v}, when we flip the arc we have $\lv{\ldA{v}}+3\leq \lv{\ldA{u}}+4$, or $\lv{\ldA{v}}\leq \lv{\ldA{u}}+1$. After the flip, $\ld{v}$ increases by $1/\wt{v}$ and $\ld{u}$ decreases by $1/\wt{v}$, and we have
    \[
        \lv{\ldB{v}-1/\wt{v}}=\lv{\ldA{v}}  \leq \lv{\ldA{u}}+1=\lv{\ldB{u}+1/\wt{u}}+1\\
    \]
    Then, applying Lemma \ref{lemma:lv-smooth}, we find
    \[
    \lv{\lb{a}}=\lv{\ldB{v}}\leq \lv{\ldA{v}}+1 \leq \lv{\ldA{u}}+2\leq\lv{\ldB{u}} + 3.
    \]

    Finally, $\lb{a}$ can be set in the loop of \code{check-dec}. In this case, $\lb{a}$ decreases, so the inequality stays satisfied.

    The inequality can also be violated when $\ld{u}$ decreases. Whenever this happens, \code{check-dec} is called. We choose an arc $b\in \outcut{u}$ such that $\lv{\lb{b}}$ is maximized; that is, $\lv{\lb{b}}\geq \lv{\lb{a}}$.

    If $b$ is flipped, then $\ld{u}$ increases back to its original value before it decreased, and the inequality stays satisfied.

    If $b$ is not flipped, then $\lv{\ld{u}}+3>\lv{\lb{b}}\geq \lv{\lb{a}}$.
\end{proof}

Finally, combining the above lemmas gives us $(\bigO{\alpha}, \bigO{1})$-local optimality.

\begin{lemma}\label{lemma:worst-locally-optimal}
    The orientation is always $(\bigO{\alpha}, \bigO{1})$-locally optimal.
\end{lemma}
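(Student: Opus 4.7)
The plan is to chain the three preceding inequalities to obtain a constant-level-gap bound on the endpoints of every arc, and then invoke the earlier level-to-local-optimality lemma stated in this section. Concretely, for any arc $a = \arc{u}{v}$ currently in the orientation, Lemma \ref{lemma:worst-ineq-lowerbound-v} yields $\lv{\ld{v}} \leq \lv{\lb{a}} + 4$ and Lemma \ref{lemma:worst-ineq-upperbound-u} yields $\lv{\lb{a}} \leq \lv{\ld{u}} + 3$. Composing these two inequalities gives
\[
\lv{\ld{v}} \leq \lv{\ld{u}} + 7,
\]
so the levels of the head and tail of every arc differ by at most a constant, and always in the direction consistent with the arc's orientation.

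To finish, I would apply the earlier lemma which says that if $\lv{\ld{v}} \leq \lv{\ld{u}} + \bigO{1}$ holds for every arc $a = \arc{u}{v}$, then the orientation is $(\bigO{\alpha}, \bigO{1})$-locally optimal. That lemma translates a constant level gap into the desired multiplicative--additive bound $\ld{v} \leq (1 + \bigO{\alpha}) \ld{u} + \bigO{1}$ by iterating the per-level inequality $(1+\alpha)^7$ times; since $7$ is a constant, this blows up the multiplicative factor only by a constant multiple of $\alpha$ and the additive factor by a constant. Applied to each arc, this gives exactly $(\bigO{\alpha}, \bigO{1})$-local optimality globally.

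There is essentially no obstacle at this step: all of the genuine work is concentrated in the three label-vs-load invariants, which were proved by the delicate "dangerous/bad" argument tracking how many arcs with label at least (or at most) $\lb{a}$ can survive before \code{check-inc}/\code{check-dec} are forced to process $a$. In particular, Lemma \ref{lemma:worst-ineq-upperbound-v} is not used directly in the chain above, but is consumed inside the proof of Lemma \ref{lemma:worst-ineq-upperbound-u} to bound $\lv{\lb{a}}$ at the moment of a flip in \code{check-dec}. The present lemma is therefore an immediate corollary, and I would write it as a short two-sentence proof.
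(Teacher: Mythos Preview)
Your proposal is correct and matches the paper's own proof essentially line-for-line: the paper also chains Lemma~\ref{lemma:worst-ineq-lowerbound-v} and Lemma~\ref{lemma:worst-ineq-upperbound-u} to obtain $\lv{\ld{v}} \leq \lv{\lb{a}} + 4 \leq \lv{\ld{u}} + 7$, and then relies (implicitly) on the earlier level-to-local-optimality lemma to conclude. Your observation that Lemma~\ref{lemma:worst-ineq-upperbound-v} is only used indirectly, inside the proof of Lemma~\ref{lemma:worst-ineq-upperbound-u}, is also accurate.
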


\begin{proof}
    Combining Lemmas \ref{lemma:worst-ineq-lowerbound-v} and \ref{lemma:worst-ineq-upperbound-u}, for all arcs $a=\arc{u}{v}$ we have
    \[
        \lv{\ld{v}}\leq \lv{\lb{a}}+4\leq \lv{\ld{u}}+7.\qedhere
    \]
\end{proof}

\subsection{Complexity analysis}

We have shown that the data structure described above maintains a $(\bigO{\alpha}, \bigO{1})$-locally approximate orientation. We now discuss the running time and space complexity of the data structure on update.

\subsubsection{Label-maintaining}

In order to function, the data structure must, for each vertex, maintain a list of incoming arcs in ascending order of $\lb{a}$, and a list of outgoing arcs in descending order of $\lv{\lb{a}}$. The data structure also globally maintains a list of vertices in sorted order of $\ld{v}$ in the same fashion, which is necessary to obtain an estimate for the densest subgraph.

To order vertices in $\incut{v}$ in increasing order of $\lb{a}$, we organize the arcs in nested doubly linked lists. We keep arc $a=\arc{u}{v}$ in a linked list with all other arcs with the same label $\lb{a}$. These linked lists are then kept in an outer linked list, in order of label. Note that whenever we set $\lb{a}$, we set it to $\ld{v}$; thus, we always keep a pointer to the first linked list of label greater than or equal to $\ld{v}$. Since $\ld{v}$ only ever changes in multiples of $1/\wt{v}$, updating this pointer is easy. Thus, we can query an incoming arc with minimum label in constant time, and also insert an arc $\arc{u}{v}$ in constant time.

To order vertices in $\outcut{u}$ in decreasing order of $\lv{\lb{a}}$, we organize the arcs in a sorted binary search tree. We keep arcs $a=\arc{u}{v}$ in a linked list with all other arcs with the same value of $\lv{\lb{a}}$. We then store these linked lists in a sorted binary search tree, indexed by the level. Since $\lb{a}$ is at least $1/W$ and at most $\poly(nW)$, the number of keys in the tree is $\bigO{\log_{\alphamore}(nW)}=\bigO{\log(nW)/\alpha}$. Thus, we can query an outgoing arc with maximum level in $\bigO{\log(\log(nW)/\alpha)}$ time, and update $\lb{a}$ in $\bigO{\log(\log(nW)/\alpha)}$ time.

Thus, updating a label takes $\bigO{\log(\log(nW)/\alpha)}=\bigO{\log\log (nW) + \log(1/\eps)}$ time, since we must update the arc at both endpoints $u$ and $v$.

\subsubsection{Dealing with edge duplication}

Our data structure can handle parallel edges. Stored naively, we would keep track of each duplicate edge independently; this, however, increases the space required significantly. Instead, we simulate the edge duplication with minimal space overhead.

Consider edge $e=(u, v)$. In an orientation, $e$ can be oriented as $a_1=\arc{u}{v}$ or $a_2=\arc{v}{u}$. When we duplicate $e$, it is possible that some of the copies will be oriented as $\arc{u}{v}$, and the rest oriented as $\arc{v}{u}$.

Instead of treating each copy of an edge independently, we just record the number of times an edge is oriented in each particular direction. Furthermore, for each orientation $a_i$, we maintain a singular set of labels $\labc{a_i}{v}$, $\labc{a_i}{u}$ that will be used for all corresponding copies, rather than each copy having its own set of labels. Each orientation also stores a pointer to the opposite orientation.

We argue that doing so does not cause errors in the algorithm. Note that we only reset labels when doing so preserves the inequalities in Lemmas \ref{lemma:worst-ineq-lowerbound-v} to \ref{lemma:worst-ineq-upperbound-u}. In particular, if it is valid to update the labels for one copy of $a_i$, then it is valid to update the labels for any other copy.

Finally, we must modify the lists and trees that store arcs. Note that all copies of an arc have the same value, so instead of storing arcs separately, we have a single node for each orientation that stores the number of duplicates oriented as such. Removing an arc from the list corresponds to decrementing the counter, and adding an arc corresponds to incrementing the counter.

\subsubsection{Amortized running time analysis}

Let $S=\bigO{\log\log (nW) + \log(1/\eps)}$.

\begin{lemma}\label{lemma:amortized-insert}
    \code{insert} takes $\bigO{S/\alpha}$ amortized time.
\end{lemma}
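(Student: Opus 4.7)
The plan is amortized analysis via a potential function. First, a single non-recursive invocation of \code{check-inc}($v$) costs $\bigO{S/\alpha}$: the for loop runs at most $C/\alpha$ iterations, each costing $\bigO{S}$ to update labels in the sorted data structures from the previous subsection. Combined with the $\bigO{S}$ direct work of \code{insert} itself to orient and label the new arc, this bound holds trivially whenever no flips occur.

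The core task is amortizing the cost of the recursive \code{check-inc} calls triggered by cascading arc flips. I would introduce a potential function $\Phi$ on the state of the data structure combining two conceptual pieces. A \emph{label-lag} term tracks $\sum_{a = \arc{u}{v}} \max(0, \lv{\ld{v}} - \lv{\lb{a}})$, scaled by $S$ and appropriate weight factors, and handles the cost of relabels: each relabel in \code{check-inc} lifts the label by at least two levels and thus decreases $\Phi$ by $\bigOmega{S}$, covering the $\bigO{S}$ cost of the update. An \emph{orientation-imbalance} term, scaled by $S/\alpha$, handles flips: since every flip $\arc{u}{v} \to \arc{v}{u}$ occurs with $\lv{\ld{u}} + 2 \leq \lv{\ld{v}}$, each flip cleanly shifts the arc's contribution down by at least two levels, decreasing $\Phi$ by $\bigOmega{S/\alpha}$, enough to pay for the recursive \code{check-inc} call it triggers. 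Verifying that each \code{insert} raises $\Phi$ by at most $\bigO{S/\alpha}$ then yields the amortized bound by telescoping.

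The main obstacle is designing the potential function carefully enough for the weighted vertex setting. Loads move in increments of $1/\wt{v}$, so $\lv{\ld{v}}$ advances by one only after $\bigOmega{\alpha \wt{v} \ld{v}} = \bigOmega{\alpha \indeg{v}}$ insertions into $v$; when a level crossing does occur at $v$, the label-lag term spikes by $\bigO{\indeg{v} \cdot S}$ as all arc-lags in $\incut{v}$ grow simultaneously, and this infrequent spike must be amortized across the many preceding inserts to yield $\bigO{S/\alpha}$ per insert. Getting the weight factors right in both components --- so that each flip cleanly decreases $\Phi$ by $\bigOmega{S/\alpha}$ despite simultaneous load changes at both endpoints --- is the technical heart of the argument, closely paralleling the counting in the correctness proof of Lemma \ref{lemma:worst-ineq-lowerbound-v}. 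A secondary subtlety is that interactions with \code{check-dec} do not disrupt the insert-side analysis, since \code{check-dec} only tightens labels downward and rebalances orientations, both of which only decrease $\Phi$.
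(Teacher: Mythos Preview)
Your high-level plan --- amortize via a potential tracking label lag --- is the same mechanism the paper uses, but the paper's credit scheme is organized around a structural observation you never make explicit, and without it your two-term potential does not close.

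The observation is this: each \code{insert} produces exactly \emph{one} net load increase in the entire graph, at the terminal vertex of the flip chain, because every flip along the way restores the current vertex's load before recursing. The paper therefore deposits $\bigO{S/\alpha}$ credits only once per insert, spread uniformly over $\incut{v}$ at that terminal vertex. When an arc $a=\arc{u}{v}$ is later relabeled, the condition $\lv{\ld{v}}\geq\lv{\lb{a}}+2$ forces $\indeg{v}$ to have grown by $\bigOmega{\alpha\,\indeg{v}}$ since $\lb{a}$ was last set, so $a$ has accumulated $\bigOmega{S}$ credits from those intervening increments --- enough to pay for its own processing. Flips require no separate payment: the flip is itself a relabel, and the recursive call's loop work is covered by the same per-arc credit argument at the next vertex.

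Your proposal instead introduces an ``orientation-imbalance'' term to pay $\bigOmega{S/\alpha}$ per flip, but you never define it, and it is not clear any natural candidate (e.g.\ $\sum_a \lv{\ld{\text{head}(a)}}$ or $\sum_v \lv{\ld{v}}^2$) simultaneously drops by $\bigOmega{S/\alpha}$ on each flip yet rises by only $\bigO{S/\alpha}$ per insert. More fundamentally, your discrete label-lag term $\sum_a \max(0,\lv{\ld{v}}-\lv{\lb{a}})\cdot S$ genuinely cannot absorb the spike you flag: a single level-crossing insert raises it by $\indeg{v}\cdot S$, which can be arbitrarily larger than $S/\alpha$, and you give no mechanism that smooths this out within $\Phi$ itself. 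The paper's per-increment crediting sidesteps this entirely. Equivalently, if you replaced $\lv{\cdot}$ by the real-valued $\log_{\alphamore}(\alpha x+1)$ in your label-lag term, one load increment at $v$ would raise each in-arc's lag by $\bigO{1/(\alpha\,\indeg{v})}$, hence $\Phi$ by $\bigO{S/\alpha}$ total, and the orientation-imbalance term becomes unnecessary --- at which point your argument collapses into the paper's.
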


\begin{proof}
    First, note that $\ld{v}$ only increases by $1/\wt{v}$ at a time (one more arc is oriented towards $v$).
    
    Second, observe that the running time for processing an edge insertion is proportional to the number of arcs considered in the loop of \code{check-inc} over all recursive calls.
    
    We use a credit scheme for our amortized analysis, where \code{insert}s distribute credits among incoming arcs, paying for later \code{check-inc} operations. Whenever $\ld{v}$ increases by $1/\wt{v}$, we spread $\bigO{S/\alpha}$ credits uniformly over the arcs in $\incut{v}$. Note that this occurs exactly once per edge insertion (although potentially after a chain of \code{check-inc}'s), so each insertion operation corresponds to a single load increase somewhere in the graph. Each such arc thus receives a credit of $\bigOmega{S/\alpha \incut{v}}$. One unit of credit will pay for a constant amount of work, so this adds an amortized cost of $\bigO{S/\alpha}$ to inserting one copy of an edge.
    
    Since processing an arc (by setting $\lb{a}$) takes $\bigO{S}$ time, we show that any arc processed by \code{check-inc} has accumulated at least $\bigO{S}$ units of credit since $\lb{a}$ was last set.
    
    Consider an edge $a=\arc{u}{v}$ processed in the loop. The arc label satisfies $\lv{\ld{v}}>\lv{\lb{a}} + 2$, and thus $\ld{v}>\alphamore\lb{a} + 1/\wt{v}$. Since $\lb{a}$ is a prior value of $\ld{v}$, $\incut{v}$ must have gained at least $\bigOmega{\alpha\indeg{v}}$ edges since the label was last set. Each edge gained contributed $\bigOmega{S/\alpha\indeg{v}}$ credit, and so $a$ has gained at least $S$ units of credit to pay for processing it.
\end{proof}

\begin{lemma}\label{lemma:amortized-delete}
    \code{delete} takes $\bigO{\log(nW)S/\alpha}$ amortized time.
\end{lemma}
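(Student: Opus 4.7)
The plan is to use a credit scheme analogous to the proof of Lemma~\ref{lemma:amortized-insert}, with an additional $\log(nW)$ factor in the credit budget to absorb the work of the flip chain in \code{check-dec}. As with \code{insert}, the total work of processing a \code{delete} is proportional to the number of arcs considered across all recursive \code{check-dec} calls it triggers. Each chain call either flips a single outgoing arc (incurring $\bigO{S}$ work for relabeling) and recurses, or enters the incoming-arc loop and processes up to $C/\alpha$ labels (incurring up to $\bigO{S/\alpha}$ work). Since the intermediate flip-induced load changes cancel, each \code{delete} produces exactly one net load decrease by $1/\wt{u}$ at some terminal vertex $u$.

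I would distribute $\bigO{\log(nW)\cdot S/\alpha}$ credits uniformly among the arcs in $\incut{u}$ whenever $\ld{u}$ decreases by $1/\wt{u}$, so that each arc in $\incut{u}$ gains at least $\bigOmega{\log(nW)\cdot S/(\alpha\indeg{u})}$ fresh credit. The key claim to establish is that any arc $b=\arc{w}{u}$ processed in \code{check-dec($u$)}'s loop has accumulated at least $\bigOmega{S}$ credit since $\lb{b}$ was last set. When $b$ is processed, $\lv{\lb{b}}\geq \lv{\ld{u}}+2$ implies $\lb{b}\geq \alphamore\ld{u}+1$, so $\indeg{u}$ has dropped in the net sense from a prior value of $\lb{b}\wt{u}$ down to a current value at most $\lb{b}\wt{u}/\alphamore$, giving at least $\bigOmega{\alpha\indeg{u}}$ net decrements of $\indeg{u}$ in the intervening period. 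Summing the per-decrement credit contributions via a harmonic-style argument yields total credit at least $\bigOmega{\log(nW)\cdot S}$, which easily covers the $\bigO{S}$ cost of processing $b$.

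The main obstacle I expect is accounting for the work of the flips within the recursive chain itself. Unlike \code{check-inc}, where each intermediate call can batch up to $C/\alpha$ relabelings inside a single loop, \code{check-dec}'s intermediate calls perform only a single flip before recursing, so the chain of calls can be long and its flip work is nontrivial. I plan to charge each flip's relabeling to credits on the flipped arc (since the flip doubles as an arc processing), and argue via the level structure of loads that the worst-case chain length is at most $\bigO{\log(nW)/\alpha}$: each flip in the chain transfers the load decrement from a vertex at some level to one at a comparable or higher level, and there are only $\bigO{\log(nW)/\alpha}$ levels between $1/W$ and $\poly(nW)$. The $\log(nW)$ factor in the credit budget is tuned precisely so that the credits cover this worst-case chain work together with the terminal loop's $\bigO{S/\alpha}$ contribution, yielding the claimed amortized bound.
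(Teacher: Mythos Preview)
Your approach is essentially the paper's, but you over-engineer the credit budget and muddle the accounting for flips. The paper distributes only $\bigO{S/\alpha}$ credits (not $\bigO{\log(nW)S/\alpha}$) at each net load decrease, and this already suffices for the terminal loop: an arc $b$ processed there satisfies $\lb{b}\geq\alphamore\ld{u}+1/\wt{u}$, so $\ld{u}$ has dropped by $\bigOmega{\alpha\lb{b}}$ since $\lb{b}$ was set, and the accumulated credit is $\bigOmega{\alpha\lb{b}\wt{u}}\cdot\bigOmega{S/(\alpha\lb{b}\wt{u})}=\bigOmega{S}$ --- no harmonic argument and no extra $\log(nW)$ factor needed. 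The flip chain is then handled \emph{separately} by a direct worst-case bound: each recursive \code{check-dec} flip moves to a vertex at strictly higher level, so the chain has length at most $\bigO{\log_{\alphamore}(nW)}=\bigO{\log(nW)/\alpha}$, contributing $\bigO{\log(nW)S/\alpha}$ per deletion outright.

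Your plan to ``charge each flip's relabeling to credits on the flipped arc'' does not work as stated: the flipped arc $a=\arc{u}{v}$ lies in $\outcut{u}\cap\incut{v}$, whereas your credits are deposited on $\incut{\cdot}$ at the terminal vertex of the chain, and there is no guarantee $\ld{v}$ has decreased since $\lb{a}$ was set (the flip condition compares $\lb{a}$ against $\ld{u}$, not $\ld{v}$). You end up falling back on the level bound anyway, which is exactly what the paper does --- so just do that directly and drop the attempted credit-charging of flips.
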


\begin{proof}
    The running time for processing an edge deletion is proportional to the number of arcs considered in the loop of \code{check-dec} over all recursive calls.

    \code{delete} operations distribute credit among incoming arcs, which pay for later \code{check-dec} operations. Whenever $\ld{u}$ decreases by $1/\wt{u}$, we spread $\bigO{S/\alpha}$ credits uniformly over the arcs in $\incut{u}$. Again, this only occurs once per edge deletion, at the end of the recursive chain of \code{check-dec}s. Each incoming arc receives a credit of $\bigOmega{S/\alpha\indeg{u}}$. One unit of credit pays for a constant amount of work, so this adds an amortized cost of $\bigO{S/\alpha}$ to deleting one copy of an edge.

    Consider an edge $b=\arc{w}{u}$ processed in the loop. This arc satisfies $\lv{\lb{b}}\geq \lv{\ld{u}}+2$, and thus $\lb{b}\geq\alphamore\ld{u}+1/\wt{u}$. 

    Since $\lb{b}$ was last set to a prior value of $\ld{u}$, $\ld{u}$ must have gone down by at least $\lb{b}\cdot \alpha/\alphamore>\lb{a}\alpha/2$. Each incoming edge lost contributed $\bigOmega{S/\alpha\lb{b}}$ credit, so $b$ has gained at least $S$ units of credit to pay for processing it.

    We now bound the effect of edge flips on the runtime using the recursive depth. Note that for each recursive call to \code{check-dec}, $\ld{u}$ increases by at least one level. Since the load on each node is upper-bounded by $\poly(nW)$, the recursive depth is at most $\bigO{\log_{\alphamore}(nW)}=\bigO{\log(nW)/\alpha}$. Thus, we spend at most $\bigO{\log(nW)S/\alpha}$ time processing edge flips.
\end{proof}

\subsubsection{Worst-case running time analysis}

\begin{lemma}\label{lemma:worst-insert}
    \code{insert} takes $\bigO{\log(nW)S/\alpha^2}$ worst-case time.
\end{lemma}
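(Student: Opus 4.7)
The plan is to bound the worst-case cost of a single \code{insert} by multiplying (i) the non-recursive cost of one invocation of \code{check-inc} by (ii) the maximum depth of the chain of recursive \code{check-inc} calls it can trigger.

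For (i), one invocation of \code{check-inc($v$)} iterates at most $C/\alpha$ times through its main loop. Each iteration performs a constant number of level comparisons (using the cached pointer into the lowest-label bucket of the incoming list at $v$) and, in the worst case, one label update; by the label-maintenance discussion preceding this lemma, a single label update costs $\bigO{S}$ time, dominated by the BST update on the outgoing list at the other endpoint. The loop exits as soon as it encounters an arc with $\lv{\lb{a}}$ close to $\lv{\ld{v}}$, or after exactly one flip-and-recurse step. Thus, excluding recursive calls, one invocation of \code{check-inc} costs $\bigO{S/\alpha}$ time.

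For (ii), the key claim is that each recursive call moves \code{check-inc} to a vertex of strictly lower load level. Indeed, when \code{check-inc($v$)} recurses on \code{check-inc($u$)}, the arc $\arc{u}{v}$ has just been flipped, which by the guard in the algorithm required $\lv{\ld{u}} + 2 \leq \lv{\ld{v}}$ before the flip. The flip increases $\ld{u}$ by $1/\wt{u}\leq 1$, so by Lemma \ref{lemma:lv-smooth}, the new $\lv{\ld{u}}$ is at most one larger, hence at most $\lv{\ld{v}} - 1$. Since loads never exceed $\poly(nW)$, the level function takes values in a range of size $\bigO{\log_{\alphamore}(nW)} = \bigO{\log(nW)/\alpha}$, which bounds the recursion depth.

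Multiplying the per-call cost $\bigO{S/\alpha}$ by the recursion depth $\bigO{\log(nW)/\alpha}$ yields the stated worst-case bound $\bigO{\log(nW) S/\alpha^2}$, easily absorbing the $\bigO{S}$ overhead of orienting the inserted edge and setting its initial label. The main subtlety is establishing the strict per-step level decrease in (ii): one might worry about the recursion bouncing between vertices at similar levels, but the two-level slack in the flip condition combined with the single-level smoothness from Lemma \ref{lemma:lv-smooth} rules this out and gives a clean depth bound, so this step is the crux of the argument.
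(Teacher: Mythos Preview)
Your proposal is correct and follows essentially the same approach as the paper: bound the per-invocation cost of \code{check-inc} by $\bigO{S/\alpha}$ and the recursion depth by $\bigO{\log(nW)/\alpha}$ via a strict level decrease at each recursive step, then multiply. Your justification of the depth bound is in fact more explicit than the paper's, which simply asserts that ``the load of the vertex in each subsequent recursive call goes down by an $\alphamore$ factor''; your use of the two-level slack in the flip guard together with Lemma~\ref{lemma:lv-smooth} makes this precise.
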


\begin{proof}
    The running time is proportional to the number of arcs processed by all \code{check-inc} calls. Each arc processed takes $\bigO{S}$ time, due to label updating.

    In each of the loops, up to $\bigO{1/\alpha}$ arcs are processed. The load of the vertex in each subsequent recursive call goes down by an $\alphamore$ factor, so the recursive depth goes up to $\bigO{\log_{\alphamore}(nW)}=\bigO{\log(nW)/\alpha}$. Thus, a maximum of $\bigO{\log(nW)/\alpha^2}$ arcs are processed per \code{insert} or \code{delete}, and the worst-case running time is $\bigO{\log(nW)S/\alpha^2}$.
\end{proof}

\begin{lemma}\label{lemma:worst-delete}
    \code{delete} takes $\bigO{\log(nW)S/\alpha}$ worst-case time.
\end{lemma}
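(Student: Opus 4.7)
The plan is to mirror the proof of Lemma \ref{lemma:worst-insert} while exploiting a structural asymmetry between \code{check-dec} and \code{check-inc}: in \code{check-dec}, only the terminal call of a recursive chain runs the $\bigO{1/\alpha}$-iteration loop, whereas in \code{check-inc} every call can. This is precisely what saves the extra factor of $1/\alpha$.

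I would first note, as in the proof of Lemma \ref{lemma:worst-insert}, that the running time of \code{delete} is $\bigO{S}$ times the total number of arcs processed across all recursive \code{check-dec} calls. Next, I would inspect the pseudocode of \code{check-dec($u$)}: each call either (i) performs a single flip of the outgoing arc with maximum $\lv{\lb{a}}$ and recurses immediately, returning afterward, or (ii) skips the flip and runs the loop over incoming arcs, processing up to $C/\alpha$ labels before returning. Crucially, branch (i) returns after the recursive call, so in any chain only the final call executes branch (ii); all earlier calls contribute $\bigO{S}$ work apiece.

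I would then bound the recursion depth by $\bigO{\log(nW)/\alpha}$, reusing the monotonicity argument from Lemma \ref{lemma:amortized-delete}. The flip condition $\lv{\ld{u}} + 3 \leq \lv{\lb{a}}$ combined with Lemma \ref{lemma:worst-ineq-upperbound-v} gives $\lv{\ld{v}} \geq \lv{\lb{a}} - 4 \geq \lv{\ld{u}} - 1$ at the recursion target, and the post-flip decrement of $\ld{v}$ by $1/\wt{v}$ drops the level by at most one more by Lemma \ref{lemma:lv-smooth}. Since loads are bounded by $\poly(nW)$, so the maximum level is $\bigO{\log_{\alphamore}(nW)} = \bigO{\log(nW)/\alpha}$, this caps the chain length up to a constant factor. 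Combining, the total work is $\bigO{\log(nW)/\alpha}$ flips at $\bigO{S}$ each, plus a single loop of $\bigO{1/\alpha}$ iterations at $\bigO{S}$ each, giving $\bigO{\log(nW)S/\alpha}$ in total.

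The main obstacle is making the depth bound crisp: the raw inequality $\lv{\ld{v}} \geq \lv{\ld{u}} - O(1)$ after a flip-and-decrement is not strictly monotone, so I would cover this either by tuning the constants in the flip condition (raising the additive gap from $3$ to a larger fixed constant so that $\lv{\ld{v}}$ strictly exceeds $\lv{\ld{u}}$ post-flip), or by grouping the chain into segments of $\bigO{1}$ flips each and arguing that the level of the focus increases by at least one per segment. Either way the depth remains $\bigO{\log(nW)/\alpha}$, and the claimed worst-case bound follows.
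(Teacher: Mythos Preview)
Your proposal is correct and follows essentially the same approach as the paper: charge $\bigO{S}$ per processed arc, observe that only the terminal \code{check-dec} call executes the $\bigO{1/\alpha}$ relabeling loop while every earlier call does $\bigO{1}$ work, and bound the recursion depth by $\bigO{\log(nW)/\alpha}$ via a level-monotonicity argument. The paper's own proof asserts this depth bound more tersely (and in Lemma~\ref{lemma:amortized-delete} simply claims the level increases at each recursive step); your explicit concern about non-strict monotonicity under the stated constants is well-placed, and your proposed remedies are exactly the kind of slack the paper tacitly relies on.
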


\begin{proof}
    The running time is proportional to the number of arcs processed by all \code{check-dec} calls. Each arc processed takes $\bigO{S}$ time, due to label updating.

    In each \code{check-dec} call, either we flip arc $a$, or the recursion terminates and we fix some edge labels. The level of the vertex in each subsequent recursive call goes down by at least 1, so the recursive depth goes up to $\bigO{\log_{\alphamore}(nW)}=\bigO{\log(nW)/\alpha}$. At the end of the recursion, $\bigO{1/\alpha}$ arcs are processed. Thus, the worst-case running time is $\bigO{\log(nW)S/\alpha}$.
\end{proof}

\section{Improved bounds in the low-density regime}\label{ch:threshold}

For vertex-weighted graphs, recall that our data structure above can maintain a subgraph with at least $\epsless \lpopt - \bigO{\log(nW)/\eps}$ density in $\softO{\log^2(nW)/\eps^2}$ amortized time and $\softO{\log^3(nW)/\eps^4}$ worst-case time per edge update.

Furthermore, applying the VWDSG-DDSG reduction, for (unweighted) directed graphs, we can maintain a subgraph with at least $\epsless \OPT - \bigO{\log(n)/\eps}$ density in $\softO{\log^3(n)/\eps^4}$ amortized time and $\softO{\log^4(n)/\eps^5}$ worst-case time per edge update.

This approximation is good when $\OPT$ is at least $\bigOmega{\log(n)/\eps^2}$. However, when $\OPT$ is lower, the additive factor is too large compared to the true value of $\OPT$, and the estimate becomes inaccurate. To obtain a pure $\epsless$-approximation in the low-density regime, we artificially inflate the density by duplicating edges $C\log(n)/\eps^2$ times for sufficiently large $C$. This gives a pure $\epsless$-approximation, but increases the cost of edge updates to $\softO{\log^4(n)/\eps^6}$ amortized time and $\softO{\log^5(n)/\eps^7}$ worst-case time.

Note that in the analysis in Lemmas \ref{lemma:amortized-delete}, \ref{lemma:worst-insert}, and \ref{lemma:worst-delete}, the running time is bounded by the recursive depth of \code{check-inc} and \code{check-dec} calls. When the density is low, the recursive depth is also low. Thus, we can obtain improved bounds when $\OPT$ is low.

With these observations in mind, we modify the data structure to produce good estimates only in the low-density regime, and detect if $\OPT$ becomes too high. Combined with our original data structure in the high-density regime (without edge duplication), the two algorithms running in parallel will produce better running times overall.

\begin{thm}\label{thm:threshold}
    Consider the problem of approximating the densest subgraph in a vertex-weighted graph with weights $w:V\to \Rgeo$, with the graph dynamically updated by edge insertions and deletions. Let $\eps, T>0$ be given, with $\eps$ sufficiently small and $T>\bigOmega{\log(nW)/\eps^2}$. If $\lpopt\leq T$, we can maintain an orientation (explicity) with maximum in-degree at most $\epsmore\lpopt + \bigO{\log(nW)/\eps}$, and a subgraph (implicitly) with density at least $\epsless\lpopt - \bigO{\log(nW)/\eps}$, in $\softO{\log(T)\log(nW)/\eps^2}$ amortized time per update and $\softO{\log(T)\log^2(nW)/\eps^4}$ worst-case time per update, where $W=\max_v \wt{v}$.
\end{thm}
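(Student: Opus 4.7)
The plan is to modify the data structure of Theorem \ref{thm:vwdsg} by imposing a hard cap $\Lambda$ on the load any vertex is ever allowed to attain, chosen so that (i) whenever $\lpopt \leq T$ the cap is never binding, and (ii) the recursion depths of \code{check-inc} and \code{check-dec} drop from $\bigO{\log(nW)/\alpha}$ to $\bigO{\log(T)/\alpha}$. Concretely, I set $\Lambda = (1 + \eps) T + C \log(nW)/\eps$ for a sufficiently large constant $C$; Lemma \ref{lemma:localopt} guarantees that any $(\bigO{\alpha},\bigO{1})$-locally optimal orientation on an instance with $\lpopt \leq T$ has $\max_v \ld{v} \leq \Lambda$, and the hypothesis $T \geq \bigOmega{\log(nW)/\eps^2}$ gives $\Lambda = \bigO{T}$. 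The only structural change is that \code{check-inc} aborts any would-be flip of an incoming arc $\arc{u}{v}$ into $\arc{v}{u}$ that would push $\ld{u}$ above $\Lambda$; deletions and the flips inside \code{check-dec} only decrease loads and require no modification.

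Correctness under $\lpopt \leq T$ is a straightforward induction on updates: immediately before each update the orientation is $(\bigO{\alpha},\bigO{1})$-locally optimal by the invariants of Section \ref{ch:levels}, hence every load is bounded by $\Lambda$ via Lemma \ref{lemma:localopt}, so no flip performed during this update can fire the abort (each flip raises a load by at most $1/\wt{u} \leq 1$, which the slack in $C$ absorbs). The modified data structure therefore performs exactly the same operations as the unmodified one, and inherits its orientation and implicit-subgraph approximation guarantees.

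For the time bounds I would revisit Lemmas \ref{lemma:amortized-delete}, \ref{lemma:worst-insert}, and \ref{lemma:worst-delete} --- the three analyses whose stated bounds carry a $\log(nW)$ factor --- and replace their single use of ``the level range $\bigO{\log_{\alphamore}(nW)} = \bigO{\log(nW)/\alpha}$'' by the new range $\bigO{\log_{\alphamore}(\Lambda)} = \bigO{\log(T)/\alpha}$. Each recursive call of either routine strictly moves one level along this range, so the recursion depth is $\bigO{\log(T)/\alpha}$. Substituting $\alpha = c\eps^2/\log(nW)$ and $S = \bigO{\log\log(nW) + \log(1/\eps)}$ then gives $\softO{\log(T)\log(nW)/\eps^2}$ amortized and $\softO{\log(T)\log^2(nW)/\eps^4}$ worst-case update times. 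The amortized insertion bound of Lemma \ref{lemma:amortized-insert} is already depth-independent and is absorbed by the new amortized deletion bound.

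The main obstacle I anticipate is robustly handling updates during periods when $\lpopt$ exceeds $T$, since the abort can then genuinely fire and the data structure can fall out of local optimality. In the final algorithm this thresholded structure is run in parallel with the unrestricted one from Theorem \ref{thm:vwdsg} and the better of the two estimates is reported, so it suffices to (a) keep the update time bounded regardless of $\lpopt$ --- which the hard cap does automatically --- and (b) flag the thresholded structure as suspect once any abort has fired, so that queries discard its estimate until it can be rebuilt. Rebuilding after $\lpopt$ drops back below $T$ is a bookkeeping detail that does not affect the asymptotic bounds.
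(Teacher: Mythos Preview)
Your approach differs from the paper's. The paper does not cap loads or abort flips; instead it replaces $\ld{v}$ throughout by the thresholded load $\ldT{v}=\min(\ld{v},T)$ and reruns the Section~\ref{ch:levels} algorithm verbatim on $\ell_T$. Because every label is now a snapshot of some $\ell_T$-value, all labels and levels are automatically bounded by $\lv{T}$, so the recursion-depth arguments in Lemmas~\ref{lemma:amortized-delete}--\ref{lemma:worst-delete} go through with $\log(nW)$ replaced by $\log(T)$ \emph{for every update}, regardless of $\lpopt$. The correctness invariants (Lemmas~\ref{lemma:worst-ineq-lowerbound-v}--\ref{lemma:worst-locally-optimal}) likewise hold with respect to $\ell_T$ at all times; when $\lpopt\leq T$ one gets $\mu_T<T$, which forces $\ell_T=\ell$ everywhere and recovers the full guarantee.

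Your abort mechanism has a real gap in the regime $\lpopt>T$, and that regime cannot be dismissed: in Corollary~\ref{corollary:threshold-ddsg} the thresholded structure carries the $\bigO{\log(n)/\eps^2}$ edge-duplication overhead and must run with the improved time bound at all times, including while $\lpopt>T$. The abort only blocks flips inside \code{check-inc}; it does not block direct inserts, so true loads---and hence labels, which snapshot loads---can climb arbitrarily above $\Lambda$. The recursion in \code{check-dec} walks to vertices of \emph{increasing} level via the test $\lv{\ld{u}}+3\leq\lv{\lb{a}}$, and with labels unbounded its depth is only $\bigO{\log(nW)/\alpha}$, not $\bigO{\log(T)/\alpha}$. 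So your claim that the cap ``keeps the update time bounded regardless of $\lpopt$'' fails for deletions, and the thresholded structure degrades to the na\"ive bound exactly when you need it not to. The rebuild you defer as bookkeeping is similarly nontrivial if $\lpopt$ oscillates around $T$. The paper's $\ell_T$ substitution sidesteps both issues at once: the invariant and the time bound are unconditional, and nothing needs to be flagged or rebuilt.
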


Duplicating edges $\bigO{1/\alpha}$ times and setting $T=\bigO{\log^2(n)/\eps^4}$ is sufficient for the DDSG reduction.

\begin{corollary}\label{corollary:threshold-pure}
    Let $\eps>0$, and let $T=\bigO{\log^2(n)/\eps^4}$. For an (unweighted) directed graph dynamically updated by edge insertions and deletions, while $\OPT< \epsless T$, we can maintain a subgraph with density at least $\epsless\OPT$ in $\softO{\log^3(n)/\eps^5}$ amortized time per update and $\softO{\log^4(n)/\eps^7}$ worst-case time per update.
\end{corollary}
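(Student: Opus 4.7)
The plan is to instantiate Theorem \ref{thm:threshold} on top of the DDSG-to-VWDSG reduction of Section \ref{sec:usereduction}, using edge duplication to wash out the additive error in the low-density regime. Specifically, pick a duplication factor $K = \bigTheta{\log(n)/\eps^2}$ and a load threshold $T = \bigTheta{\log^2(n)/\eps^4}$ with matching constants. For each $t$ in the geometric grid $\mathcal{T}$ of size $\bigO{\log(n)/\eps}$, construct $G_t$ as in Section \ref{sec:reduction}, rescale so the minimum vertex weight is $1$ (yielding $W = \bigO{n}$), duplicate each edge $K$ times, and run the threshold data structure of Theorem \ref{thm:threshold} on the result. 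Maintain these $\bigO{\log(n)/\eps}$ data structures in parallel and, on each query, output the best density estimate across them.

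The approximation guarantee hinges on Lemma \ref{lemma:geqt}. At the best choice of $t$, Lemma \ref{lemma:reductioneq} gives $\densityopt(G_t) = \densityopt(G)$, and normalization scales the density by $w_{\min} = 1/(2\max(t,1/t))$. Lemma \ref{lemma:geqt} says $\max(t, 1/t) \leq \densityopt(G)$ at the best $t$, so the rescaled VWDSG optimum is at least $1/2$, and after duplicating by $K$ it is at least $\bigOmega{\log(n)/\eps^2}$. This dominates the $\bigO{\log(nW)/\eps} = \bigO{\log(n)/\eps}$ additive error from Theorem \ref{thm:threshold} by an $\bigO{\eps}$ factor, so the bicriteria guarantee collapses to a purely multiplicative $\apxless$-approximation. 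Undoing the normalization, the remainder of the argument mirrors Theorem \ref{thm:ddsg} and, after rescaling $\eps$ by a constant, yields a $\epsless$-approximation. The hypothesis $\OPT < \epsless T$ translates, through the reduction and duplication, to $\densityopt(G) = \bigO{\log(n)/\eps^2}$, which is precisely the low-density regime this subroutine targets.

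For the running time, Theorem \ref{thm:threshold} gives amortized $\softO{\log(T)\log(nW)/\eps^2} = \softO{\log(n)/\eps^2}$ and worst-case $\softO{\log(T)\log^2(nW)/\eps^4} = \softO{\log^2(n)/\eps^4}$ per update in the duplicated VWDSG instance, absorbing $\log(T)$ and $\log\log(nW)$ into $\softO{\cdot}$. Each DDSG edge update requires updating $K = \bigO{\log(n)/\eps^2}$ copies in each of the $\bigO{\log(n)/\eps}$ parallel instances, so the total amortized cost is $\softO{\log(n)/\eps^2} \cdot \bigO{\log(n)/\eps^2} \cdot \bigO{\log(n)/\eps} = \softO{\log^3(n)/\eps^5}$, and the analogous product gives $\softO{\log^4(n)/\eps^7}$ worst-case. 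The main subtlety is aligning the constants in $T$ and $K$ so that the duplicated $\OPT$ sits below $\epsless T$ exactly when the DDSG $\OPT$ is in the intended low-density range, while also being large enough to absorb the additive error; this is the same bookkeeping that Theorem \ref{thm:ddsg} handles, just with a smaller threshold.
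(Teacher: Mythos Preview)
Your proposal is correct and matches the paper's approach: the paper's own justification is the single sentence ``Duplicating edges $\bigO{1/\alpha}$ times and setting $T=\bigO{\log^2(n)/\eps^4}$ is sufficient for the DDSG reduction,'' and you have correctly unpacked this into the composition of Theorem~\ref{thm:threshold} with the reduction of Theorem~\ref{thm:ddsg}, using Lemma~\ref{lemma:geqt} to absorb the additive error. One small note: at the grid value of $t$ you should invoke Lemma~\ref{lemma:reductionapx} rather than Lemma~\ref{lemma:reductioneq}, since $t$ need not hit $\sqrt{\sz{S}/\sz{T}}$ exactly, but you already acknowledge this by deferring to Theorem~\ref{thm:ddsg} for the remaining bookkeeping.
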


This data structure can also detect whether or not we are in the low-density regime. Thus, we run this algorithm alongside the original data structure, and select an answer based on which of the two estimates are accurate. Overall, this gives us improved bounds in the general case.

\begin{corollary}\label{corollary:threshold-ddsg}
    Let $\eps>0$. For an (unweighted) directed graph dynamically updated by edge insertions and deletions, we can maintain a subgraph with density at least $\epsless\OPT$ in $\softO{\log^3(n)/\eps^5}$ amortized time per update and $\softO{\log^4(n)/\eps^7}$ worst-case time per update.
\end{corollary}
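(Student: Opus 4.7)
The plan is to run two data structures in parallel and combine their outputs. The first is the $\epsless$-approximate DDSG algorithm obtained by applying the reduction of Theorem~\ref{thm:ddsg} to the base VWDSG data structure of Theorem~\ref{thm:vwdsg}, \emph{without} duplicating edges. This maintains a subgraph of density at least $\epsless\OPT - \bigO{\log(n)/\eps}$ in $\softO{\log^3(n)/\eps^4}$ amortized and $\softO{\log^4(n)/\eps^5}$ worst-case time per update. The second is the low-density variant from Corollary~\ref{corollary:threshold-pure} with threshold $T = \bigTheta{\log^2(n)/\eps^4}$, which maintains a pure $\epsless$-approximate subgraph in $\softO{\log^3(n)/\eps^5}$ amortized and $\softO{\log^4(n)/\eps^7}$ worst-case time per update, provided $\OPT < \epsless T$.

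Next I would observe that the two regimes of validity together cover every possible value of $\OPT$. For the first structure, whenever $\OPT \geq c_1 \log(n)/\eps^2$ for a suitably large constant $c_1$, the additive error $\bigO{\log(n)/\eps}$ is at most an $\bigO{\eps}$ fraction of $\OPT$, so the output is a genuine $\apxless$-approximation. For the second, as long as $\OPT \leq \epsless T = \bigOmega{\log^2(n)/\eps^4}$, the output is a genuine $\epsless$-approximation. For sufficiently small $\eps$ the cutoff $c_1 \log(n)/\eps^2$ is well below $\epsless T$, so the two regimes overlap and at least one of the two estimates is always accurate.

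To answer a query, compute the density of the subgraph returned by each data structure and output the denser of the two; alternatively, use the detection capability of the low-density structure (which flags when $\OPT$ exceeds its threshold) to route the query to the appropriate side. Both candidates are actual induced subgraphs of $G$, so each has density at most $\OPT$, and since at least one has density at least $\epsless\OPT$, the maximum does as well. The combined update time is just the sum of the two individual update times, which is dominated by that of the low-density structure, giving the claimed $\softO{\log^3(n)/\eps^5}$ amortized and $\softO{\log^4(n)/\eps^7}$ worst-case bounds.

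The only genuine point of care is choosing the implicit constant in $T$ large enough, relative to the constants hidden in the additive $\bigO{\log(n)/\eps}$ error of the first structure, to make the two regimes overlap with the desired $\apxless$ factor on both sides. This is bookkeeping rather than a real obstacle, since the substantive technical work is already encapsulated in Theorem~\ref{thm:vwdsg} and Corollary~\ref{corollary:threshold-pure}, both of which the proof invokes as black boxes.
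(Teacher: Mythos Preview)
Your proposal is correct and follows essentially the same approach as the paper. The paper does not give a standalone proof of this corollary; it is justified by the surrounding discussion, which says to run the thresholded low-density structure of Corollary~\ref{corollary:threshold-pure} alongside the original (non-duplicated) structure and select whichever estimate is accurate, with the combined running time dominated by the low-density side --- exactly what you propose.
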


\subsection{The data structure}

\subsubsection{Data structure overview}

The modification to the data structure incorporates a \ita{threshold} $T$ to the load. For all vertices $v$, we redefine a thresholded load as
\begin{align*}
    \loadT{v}=\min(\load{v}, T).
\end{align*}
Informally, we treat all vertices with true load greater than the threshold $T$ as equal.

Replacing all instances of $\ld{v}$ with $\ldT{v}$ has two effects. First, because we limit the maximum load to $T$, the recursive depth of any \code{check-inc} or \code{check-dec} call is at most $\bigO{\log_{\alphamore}(T)}=\bigO{\log(T)/\alpha}$, rather than $\bigO{\log(nW)/\alpha}$. Second, because only loads under $T$ will be accurate, our local optimality conditions will only apply in the case where all $\ld{v}\leq T$. However, this data structure only needs to function in the low-density regime, so setting $T$ to a large enough value is sufficient.

To be precise, we set $T=C_T\log^2 n/\eps^4=C_T/\alpha^2$ for $C_T$ sufficiently large. We require this data structure to be valid for graphs of density $\log n/\eps^2$ or lesser, as well as an extra $\log n/\eps^2$ factor to account for edge duplication. We restate our local optimality conditions and lemmas in terms of $\loadT{v}$ in order to prove the correctness of this scheme.

\subsubsection{The algorithm}

Let $C, T$ be sufficiently large.

\begin{algorithm}[p]
\caption{\code{insert}($e=(u,v)$)}
\label{alg:insert-threshold}
    Let $\ldT{u}\geq \ldT{v}$. (Otherwise, swap $u$ and $v$.)\\
    Orient $e$ as $a=\arc{u}{v}$ and add it to the orientation, then set $\lb{a}=\ldT{v}$.\\
    Call \code{check-inc}($v$).
\end{algorithm}

\begin{algorithm}[p]
\caption{\code{delete}($e=(u,v)$)}
\label{alg:delete-threshold}
    Let $e$ be oriented as $a=\arc{u}{v}$; delete $a$ from the orientation.\\
    Call \code{check-dec}($v$).
\end{algorithm}

\begin{algorithm}[p]
\caption{\code{check-inc}($v$)}
\label{alg:checkinc-threshold}
\tcc{This subroutine is called whenever $\ldT{v}$ has increased.}
\For {up to $C/\alpha$ arcs $a=\arc{u}{v}\in \incut{v}$, in increasing order of $\lb{a}$}
{
    \tcc{The label is much lower than the current load.}
    \uIf {$\lv{\ldT{v}}\geq\lv{\lb{a}}+2$}
    {
        \tcc{The arc is too imbalanced.}
        \uIf {$\lv{\ldT{u}}+2\leq\lv{\ldT{v}}$} 
        {
            Flip $a$ to $\arc{v}{u}$, then set $\lb{a}=\ldT{u}$.\\
            \tcc{This restores the original value of $\ldT{v}$, but now $\ldT{u}$ may have increased.}
            Call \code{check-inc}($u$).
            Return.
        }
        \Else
        {
            Set $\lb{a}=\ldT{v}$.
        }
    }
    \tcc{There are no arcs that satisfy the above conditions; all arc labels are close to the current load.}
    \Else
    {
        Return.
    }
}
\end{algorithm}

\begin{algorithm}[p]
\caption{\code{check-dec}($u$)}
\label{alg:checkdec-threshold}
\tcc{This subroutine is called whenever $\ldT{u}$ has decreased.}
Choose an arc $a=\arc{u}{v}\in\outcut{u}$ such that $\lv{\lb{a}}$ is maximized.\\
\If {$\lv{\ldT{u}}+3\leq \lv{\lb{a}}$}
{
    \tcc{The arc is too imbalanced.}
    Flip $a$ to $\arc{v}{u}$, then set $\lb{a}=\ldT{u}$.\\
    \tcc{This restores the original value of $\ldT{u}$, but now $\ldT{v}$ may have decreased.}
    Call \code{check-dec}($v$).\\
    Return.
}
\For {up to $C/\alpha$ arcs $b=\arc{w}{u}\in \incut{u}$, in decreasing order of $\lb{b}$}
{
    \tcc{The label is much higher than the current load.}
    \uIf {$\lv{\lb{b}}\geq \lv{\ldT{u}}+2$}
    {
        Set $\lb{b}=\ldT{u}$.
    }
    \tcc{There are no arcs that satisfy the above conditions; all arc labels are close to the current load.}
    \Else
    {
        Return.
    }
}
\end{algorithm}

Pseudocode is presented in Algorithms \ref{alg:insert-threshold} through \ref{alg:checkdec-threshold}; they are produced from replacing all instances of $\ld{v}$ with $\ldT{v}$ in Algorithms \ref{alg:insert3} through \ref{alg:checkdec3}.

\subsection{Correctness analysis}

This data structure makes identical decisions to the orientation as Algorithms \ref{alg:insert3} through \ref{alg:checkdec3}, albeit with respect to $\loadT{V}$ instead of $\load{v}$. Through the same logic, the same conditions are maintained with respect to $\loadT{v}$.

\begin{lemma}
    The orientation is always $(\bigO{\alpha}, \bigO{1})$-locally optimal with respect to $\ell_T$.
\end{lemma}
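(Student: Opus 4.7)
The plan is to transfer the proof of Lemma \ref{lemma:worst-locally-optimal} wholesale to the thresholded setting by substituting $\ldT{\cdot}$ for $\ld{\cdot}$ everywhere. Since Algorithms \ref{alg:insert-threshold}--\ref{alg:checkdec-threshold} are textually identical to Algorithms \ref{alg:insert3}--\ref{alg:checkdec3} under this substitution, the only task is to verify that every ingredient of the original proof remains valid when expressed in terms of $\ldT$.

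First I would verify the analog of Lemma \ref{lemma:lv-smooth} for $\ldT$: because $\min(\cdot, T)$ is $1$-Lipschitz, every edge insertion, deletion, or flip incident to $v$ changes $\ldT{v}$ by at most $1/\wt{v}$ and therefore $\lv{\ldT{v}}$ by at most one level per update. I would also observe that every label assigned by the algorithm is of the form $\ldT{\cdot}$ and hence lies in $[0, T]$. With these two facts in hand, I would re-derive the three bracketing lemmas (the analogs of Lemmas \ref{lemma:worst-ineq-lowerbound-v}, \ref{lemma:worst-ineq-upperbound-v}, \ref{lemma:worst-ineq-upperbound-u}): $\lv{\ldT{v}} \leq \lv{\lb{a}} + 4$, $\lv{\lb{a}} \leq \lv{\ldT{v}} + 4$, and $\lv{\lb{a}} \leq \lv{\ldT{u}} + 3$. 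The credit-based accounting and case analysis transfer essentially verbatim. Combining the first and third then gives $\lv{\ldT{v}} \leq \lv{\ldT{u}} + 7$, which yields $(\bigO{\alpha}, \bigO{1})$-local optimality with respect to $\ldT$ exactly as in the original proof.

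The main obstacle is the counting step inside the analogs of Lemmas \ref{lemma:worst-ineq-lowerbound-v} and \ref{lemma:worst-ineq-upperbound-v}, which bounded the number of in-arcs of $v$ with small (respectively large) label by $\indeg{v} = \lambda \wt{v}$ using $\lambda = \ld{v}$. In the thresholded setting this bound could fail whenever $\ld{v} > T$, since then $\indeg{v}$ is not controlled by $\ldT{v} \wt{v} = T \wt{v}$. The saving observation is that the dangerous-to-bad transition in either direction requires $\lv{\ldT{v}}$ to move by two additional levels, which is impossible while $\ld{v} > T$: in that regime $\ldT{v}$ is pinned at $T$, so its level can neither grow (preventing the lower-bound failure mode) nor fall below $T$ until $\ld{v}$ itself drops below $T$ (preventing the upper-bound failure mode). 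Consequently, throughout the window in which the counting argument is invoked we may assume $\ld{v} \leq T$, and then the identity $\ld{v} = \ldT{v}$ restores the original $\indeg{v} = \lambda \wt{v}$ bound. With the counting argument rescued, the remainder of the proof is a direct transcription of Lemmas \ref{lemma:worst-ineq-lowerbound-v}--\ref{lemma:worst-locally-optimal}.
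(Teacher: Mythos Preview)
Your proposal is correct and takes essentially the same approach as the paper: the paper simply asserts that the proofs of Lemmas \ref{lemma:worst-ineq-lowerbound-v}--\ref{lemma:worst-locally-optimal} go through verbatim under the substitution $\ld{\cdot}\mapsto\ldT{\cdot}$, noting only that ``thresholding does not change the dangerous-bad argument.'' You do the same, but you are actually more careful than the paper in explicitly flagging and resolving the one place where the substitution is not purely formal --- the $\indeg{v}=\lambda\wt{v}$ count --- via the observation that $\lv{\ldT{v}}$ cannot move while $\ld{v}>T$, so the counting window always sits in the regime $\ld{v}\le T$ where $\ldT{v}=\ld{v}$.
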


Recall that this implies that $\ldT{v}\leq (1+\bigO{\alpha})\ldT{u}+\bigO{1}$.

We certify that this local optimality condition also implies global optimality while $\OPT$ is small.

\begin{lemma}\label{lemma:localopt-threshold}
    Let $\alpha>0$ with $\alpha<c/\log(nW)$ for sufficiently small constant $c$. Let $\mu_T=\max_v\ldT{v}$. Suppose every arc $a=\arc{u}{v}$ is $(\alpha,\beta)$-locally optimal; that is, $a$ satisfies $\ldT{v}\leq (1+\alpha)\ldT{u}+\beta$. Then
    \[\mu_T\leq e^{\bigO{\alpha\log(nW)}}\left(\lpopt+\bigO{\sqrt{\frac{\log(nW)}{\alpha}}}\beta\right).\]
    In particular, given $\eps\in (0,1)$, if $\beta=\bigO{1}$ and $\alpha=c\eps^2/\log(nW)$ for sufficiently small constant $c$, then $\mu_T\leq \epsmore\lpopt+\bigO{\log(nW)/\eps}$.
\end{lemma}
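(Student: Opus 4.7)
The plan is to adapt the proof of Lemma \ref{lemma:localopt} almost verbatim, with $\ldT{\cdot}$ in place of $\ld{\cdot}$. The only step that genuinely needs a new observation is the density lower bound on the candidate subgraph; everything else carries over line by line.

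First I would define the same increasing level sequence $\mu_0 = 0$, $\mu_i = \alphamore \mu_{i-1} + \beta$, let $k$ be the unique index with $\mu_{k-1} \leq \mu_T < \mu_k$, and note as before that $\mu_{k-i} \geq e^{-\alpha i}\mu_T - i\beta$. Next, define the thresholded level sets $S_i = \set{v}{\ldT{v} \geq \mu_{k-i}}$ for $i = 1, \ldots, k$, each of which is nonempty because the maximum thresholded load $\mu_T$ lies in $[\mu_{k-1}, \mu_k)$. Since $\wt{S_i} \leq nW$ for all $i$, the same pigeonhole argument (with a parameter $\eps>0$ to be tuned at the end) produces an index $i^\star \leq \bigO{\log(nW)/\eps}$ such that $\wt{S_{i^\star+1}} \leq \epsmore \wt{S_{i^\star}}$.

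The key geometric step—that every arc into $S_{i^\star}$ has its tail in $S_{i^\star+1}$—transfers directly. For $a = \arc{u}{v}$ with $v \in S_{i^\star}$, the hypothesis $\ldT{v} \leq \alphamore \ldT{u} + \beta$ gives $\ldT{u} \geq (\ldT{v}-\beta)/\alphamore \geq \mu_{k-i^\star-1}$ by definition of the $\mu_i$ sequence, so $u \in S_{i^\star+1}$. Hence $\bigcup_{v \in S_{i^\star}} \incut{v} \subseteq E(S_{i^\star+1})$.

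The one place I would deviate is in lower bounding $\sz{E(S_{i^\star+1})}$. In the original proof the chain is $\sz{\bigcup_{v \in S_{i^\star}} \incut{v}} = \sum_{v \in S_{i^\star}} \ld{v}\wt{v}$. Here I instead use
\[
\sz*{\bigcup_{v \in S_{i^\star}} \incut{v}} = \sum_{v \in S_{i^\star}} \indeg{v} = \sum_{v \in S_{i^\star}} \ld{v}\wt{v} \geq \sum_{v \in S_{i^\star}} \ldT{v}\wt{v},
\]
where the final inequality is the trivial fact $\ldT{v} \leq \ld{v}$. Since each $v \in S_{i^\star}$ satisfies $\ldT{v} \geq \mu_{k-i^\star} \geq e^{-\bigO{\alpha \log(nW)/\eps}}\mu_T - \bigO{\log(nW)/\eps}\beta$, the remaining algebra is identical to Lemma \ref{lemma:localopt} and yields
\[
\mu_T \leq e^{\bigO{\alpha \log(nW)}}\left(\lpopt + \bigO*{\sqrt{\log(nW)/\alpha}}\beta\right).
\]
The specialization to $\alpha = c\eps^2/\log(nW)$ and $\beta = \bigO{1}$ follows by the same substitution as in the unthresholded case. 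I do not anticipate a real obstacle: the only conceptual point to check is that thresholding only shrinks $\sum \ldT{v}\wt{v}$ relative to $\sum \indeg{v}$, so it preserves the direction of the bound on $\sz{E(S_{i^\star+1})}$.
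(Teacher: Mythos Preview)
Your proposal is correct and follows essentially the same approach as the paper, which simply states that the proof of Lemma~\ref{lemma:localopt} carries over with $\ldT{\cdot}$ in place of $\ld{\cdot}$. Your explicit observation that $\sum_{v\in S_{i^\star}} \indeg{v} \geq \sum_{v\in S_{i^\star}} \ldT{v}\wt{v}$ (via $\ldT{v}\leq \ld{v}$) is the one place where a literal substitution turns an equality into an inequality, and you handle it correctly; the paper glosses over this point.
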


\begin{proof}
    The proof follows identically as the proof of Lemma \ref{lemma:localopt}, except with $\load{v}$ replaced with $\loadT{v}$.
\end{proof}

It is not necessarily true that $\mu_T\geq \lpopt$ due to thresholding, and thus $\mu_T$ may not be a good estimate when $\lpopt$ is large. However, if all loads are under the threshold, then all $\loadT{v}=\load{v}$, and the original bounds apply.

Note that when $\OPT< \epsless T - \bigO{\log(nW)/\eps}$, we have $\mu_T< T$. Since $\mu_T=\max_v\min(\ld{v}, T)=\min(\max_v \ld{v}, T)$, it must be that $\max_v \ld{v}< T$. That is, if our estimate is significantly below the threshold, then the loads must all be below the threshold as well. Thus, $\mu_T$ is always a good estimate when $\lpopt$ is small.

\begin{corollary}
    When $\lpopt< \epsless T-\bigO{\log(nW)/\eps}$, $\mu_T$ is a $\epsmore$-approximation for $\lpopt$.
\end{corollary}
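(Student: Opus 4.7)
The plan is to combine two facts about $\mu_T$: an upper bound coming from Lemma \ref{lemma:localopt-threshold}, and a lower bound $\mu_T \geq \lpopt$ that holds whenever the threshold has not clipped the load of any vertex. The hypothesis on $\lpopt$ will be used precisely to certify this ``no clipping'' condition.

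First, I would apply Lemma \ref{lemma:localopt-threshold} with $\alpha = c\eps^2/\log(nW)$ and $\beta = \bigO{1}$ to obtain some constant $C_1$ with
\[
    \mu_T \leq \epsmore \lpopt + C_1 \log(nW)/\eps.
\]
Then, taking the hidden constant in the hypothesis $\lpopt < \epsless T - \bigO{\log(nW)/\eps}$ to be some $C_2$ chosen strictly larger than $C_1$, I plug in to get
\[
    \mu_T < \epsmore\epsless T + (C_1 - \epsmore C_2)\log(nW)/\eps = (1-\eps^2) T + (C_1 - \epsmore C_2)\log(nW)/\eps < T,
\]
where the final inequality uses $C_2 > C_1$. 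This is the routine step: it is purely bookkeeping of constants and is where the precise form of the hypothesis is exploited.

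Next, I would exploit the identity $\max_v \min(\ld{v}, T) = \min(\max_v \ld{v}, T)$ (since $T$ is a constant), so that $\mu_T < T$ forces $\max_v \ld{v} < T$, and hence $\ld{v} = \ldT{v}$ for every $v$. In this regime, the integral orientation maintained by the data structure gives a feasible solution to the dual LP of Figure \ref{fig:wdsglp} (set $y(e,v)=1$ if $e$ is oriented toward $v$, else $0$), so by weak duality together with Lemma \ref{lemma:vwdsg-duality} its dual objective value $\max_v \ld{v}$ is at least $\lpopt$. Therefore $\mu_T = \max_v \ld{v} \geq \lpopt$, and combining with the upper bound yields
\[
    \lpopt \leq \mu_T \leq \epsmore \lpopt + \bigO{\log(nW)/\eps},
\]
which is the claimed bicriteria $\epsmore$-approximation.

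The main obstacle, such as it is, lies entirely in the constant-chasing of the first step: one must ensure that the additive slack $C_2\log(nW)/\eps$ built into the hypothesis dominates the additive error $C_1\log(nW)/\eps$ from Lemma \ref{lemma:localopt-threshold} strongly enough that $\mu_T$ falls strictly below the threshold $T$. All the conceptual content is that thresholding becomes invisible once $\lpopt$ is a safe multiplicative distance below $T$, at which point the estimate inherits both the upper bound of Lemma \ref{lemma:localopt-threshold} and the ``free'' lower bound from LP duality.
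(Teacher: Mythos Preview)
Your proposal is correct and follows essentially the same approach as the paper: use Lemma~\ref{lemma:localopt-threshold} to bound $\mu_T$ above, combine with the hypothesis to force $\mu_T < T$, then invoke the identity $\max_v \min(\ld{v},T) = \min(\max_v \ld{v},T)$ to conclude that no vertex is clipped, at which point $\mu_T = \max_v \ld{v} \geq \lpopt$ by LP duality. Your treatment is in fact more explicit than the paper's about both the constant bookkeeping and the duality-based lower bound.
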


Equivalent statements can be made showing that this modified data structure maintains local optimality with respect to $\loadT{v}$.

\begin{lemma}\label{lemma:threshold-locally-optimal}
    The orientation is always $(\bigO{\alpha}, \bigO{1})$-locally optimal with respect to $\ell_T$. That is, for all arcs $a=\arc{u}{v}$, $\loadT{v}\leq (1+\bigO{\alpha}) \loadT{u}+\bigO{1}$.
\end{lemma}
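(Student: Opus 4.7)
The plan is to argue by direct analogy to the unthresholded correctness proofs, since Algorithms \ref{alg:insert-threshold} through \ref{alg:checkdec-threshold} differ from Algorithms \ref{alg:insert3} through \ref{alg:checkdec3} only by the uniform substitution of $\ld{\cdot}$ with $\ldT{\cdot}$. Concretely, I would restate and reprove thresholded analogues of Lemmas \ref{lemma:worst-ineq-lowerbound-v}, \ref{lemma:worst-ineq-upperbound-v}, and \ref{lemma:worst-ineq-upperbound-u} with $\ld{v}$ replaced by $\ldT{v}$ everywhere, and then chain them exactly as in the proof of Lemma \ref{lemma:worst-locally-optimal} to obtain $\lv{\ldT{v}} \leq \lv{\ldT{u}} + \bigO{1}$ for every arc $\arc{u}{v}$, which by the level-value conversion at the end of Section~\ref{ch:levels} gives $\ldT{v} \leq \alphamore \ldT{u} + \bigO{1}$, i.e., $(\bigO{\alpha}, \bigO{1})$-local optimality with respect to $\ell_T$.

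First I would verify the structural properties of $\ldT{v}$ used by the earlier proofs. The crucial smoothness statement is that a single edge insertion, deletion, or flip changes $\ldT{v}$ by at most $1/\wt{v}$: this follows because $\ld{v}$ changes by exactly $1/\wt{v}$ (Lemma \ref{lemma:lv-smooth}) and $\min(\cdot, T)$ is a $1$-Lipschitz contraction, so the level of $\ldT{v}$ changes by at most one per operation. I would also note that when $\ldT{v}$ is clamped at $T$, the threshold is only tightened: in the dangerous-to-bad transition arguments, any additional in-arc that does not increase $\ldT{v}$ still exists in $\incut{v}$, so the inequality ``the number of arcs $b$ with $\lb{b} \leq \lb{a}$ is at most $\indeg{v}$'' remains valid, and in fact becomes stronger in the clamped regime.

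Next I would port the three lemma proofs. In each, the ``bad'' and ``dangerous'' thresholds are defined via levels of $\ldT{v}$ and $\lb{a}$; labels are only set to $\ldT{\cdot}$ (so $a$ is neither bad nor dangerous at initialization or relabeling); and $\ldT{v}$ must move by $\bigOmega{\alpha \lambda}$ (where $\lambda$ is the value of $\ldT{v}$ when $a$ became dangerous) to push $a$ from dangerous to bad. Since each $1/\wt{v}$ change to $\ldT{v}$ triggers either processing of $a$ or processing of $C/\alpha$ other arcs in the relevant direction of label ordering, the same counting argument shows that $\bigOmega{\alpha \lambda} \cdot C / (\alpha / \wt{v}) = \bigOmega{C \lambda \wt{v}}$ label resets occur, exceeding the initial bound of $\lambda \wt{v}$ candidate arcs, a contradiction. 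The proof of Lemma~\ref{lemma:worst-ineq-upperbound-u} ports similarly; the flip-time arguments for \code{check-inc} and \code{check-dec} rely only on the level-change bounds from the thresholded version of Lemma \ref{lemma:lv-smooth} and the algorithm's guard conditions, both of which are unchanged in spirit.

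The main obstacle I anticipate is confirming that the potential-style counting argument in the lower-bound lemma survives thresholding. The worry is that if $\ld{v}$ lives above $T$ for a stretch, $\ldT{v}$ does not reflect the growth in $\incut{v}$, so the ``each $1/\wt{v}$ increase to $\ldT{v}$ corresponds to one added in-arc'' equivalence used to account for $\bigOmega{\alpha \lambda \wt{v}}$ arcs could fail. However, since a bad/dangerous event is defined entirely in terms of $\ldT{v}$, only periods where $\ldT{v}$ is actually changing trigger \code{check-inc} and can drive $a$ from dangerous to bad; during clamped periods $\ldT{v}$ does not move, so no new arc insertion at $v$ can turn a previously safe $a$ into a bad one without first un-clamping, at which point each unit of $\ldT{v}$ growth again corresponds to a real in-arc insertion at $v$. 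With this observation the counting argument goes through verbatim, and combining the three analogues completes the proof.
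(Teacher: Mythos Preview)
Your proposal is correct and takes essentially the same approach as the paper: the paper explicitly omits the proof, stating that it follows exactly as the proofs of Lemmas \ref{lemma:worst-ineq-lowerbound-v} through \ref{lemma:worst-locally-optimal} with $\ldT{\cdot}$ replacing $\ld{\cdot}$, and that thresholding does not disturb the dangerous-to-bad argument. Your additional observation that $\ldT{v}$ cannot progress toward the bad threshold while clamped (so the counting argument need only account for unclamped increments, where $\ldT{v}=\ld{v}$ and hence $\indeg{v}=\lambda\wt{v}$) is the right way to make this rigorous; one small correction is that \code{check-inc}($v$) is in fact called on every in-arc insertion regardless of whether $\ldT{v}$ changes, but this only strengthens the counting and does not affect your conclusion.
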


The proofs are omitted because they follow exactly as the proofs of Lemmas \ref{lemma:worst-ineq-lowerbound-v} through \ref{lemma:worst-locally-optimal}, with $\loadT{v}$ replacing $\load{v}$. At a high level, the data structure still sets labels under the same conditions, and thresholding does not change the dangerous-bad argument.

\subsection{Complexity analysis}

Recall that for each vertex $v$, we maintain a sorted binary tree for arcs in $\outcut{v}$. The maximum value of $\lb{a}$ is $T$. Thus, it takes $\log(\log_{\alphamore}(T))=\bigO{\log(\log(T)/\alpha)}$ per label update.

Let $S=\bigO{\log(\log(T)) + \log(\log(nW)) + \log(1/\eps)}$.

\begin{lemma}\label{lemma:threshold-amortized-insert}
    \code{insert} takes $\bigO{S/\alpha}$ amortized time.
\end{lemma}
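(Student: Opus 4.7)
My plan is to follow the proof of Lemma \ref{lemma:amortized-insert} essentially verbatim, replacing $\load{v}$ with $\loadT{v}$ throughout, and noting that the per-arc processing cost $\bigO{S}$ now reflects that the outgoing-arc binary search tree has only $\bigO{\log(T)/\alpha}$ distinct keys instead of $\bigO{\log(nW)/\alpha}$. As before, the running time of a single \code{insert} is proportional to the total number of arcs examined across the (possibly nested) \code{check-inc} calls, at $\bigO{S}$ time per arc processed.

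I would set up the same credit scheme: whenever $\loadT{v}$ strictly increases by $1/\wt{v}$ (from an insert or from an arc flip orienting a new arc into $v$), deposit $\bigO{S/\alpha}$ credits spread uniformly over the current arcs in $\incut{v}$, so each arc receives $\bigOmega{S/(\alpha\indeg{v})}$ units. As in the original proof, after the recursive chain of \code{check-inc} calls settles, each insert produces exactly one net strict increase of $\loadT{\cdot}$ somewhere in the graph, giving an amortized deposit of $\bigO{S/\alpha}$ per insert.

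The charging step then transfers verbatim. For each arc $a=\arc{u}{v}$ processed inside a \code{check-inc} loop, the loop condition forces $\lv{\loadT{v}} \geq \lv{\lb{a}} + 2$, and hence $\loadT{v} \geq \alphamore \lb{a} + 1/\wt{v}$. Since $\lb{a}$ is a prior value of $\loadT{v}$, at least $\bigOmega{\alpha \indeg{v}}$ arcs must have been added to $\incut{v}$ between the labeling of $a$ and its processing, contributing a total of $\bigOmega{S}$ credit to $a$, which pays the $\bigO{S}$ processing cost.

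The main subtlety I expect to address is that thresholding caps the load at $T$: if $\load{v} \geq T$ during an incoming insert or flip, $\loadT{v}$ does not move, so no credit is deposited. However, the loop condition $\lv{\loadT{v}} \geq \lv{\lb{a}} + 2$ also depends only on $\loadT{v}$, so no arc in $\incut{v}$ can be newly made dangerous by such a saturated event; any arc later processed was already made dangerous by an earlier strict increase of $\loadT{v}$ before the cap was reached, and the credit deposited at that earlier point still suffices. The residual $\bigO{S}$ cost incurred by a \code{check-inc} invocation that exits immediately is absorbed into the $\bigO{S/\alpha}$ per-insert budget.
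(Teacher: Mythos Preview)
Your proposal is correct and takes essentially the same approach as the paper; in fact the paper's own proof is the single sentence ``The proof is identical to that of Lemma~\ref{lemma:amortized-insert}.'' Your additional paragraph handling the saturated case (where $\load{v}\geq T$ so $\loadT{v}$ does not move and no credit is deposited, but also no new arcs become dangerous) is a reasonable elaboration that the paper leaves implicit.
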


\begin{proof}
    The proof is identical to that of Lemma \ref{lemma:amortized-insert}.
\end{proof}

\begin{lemma}\label{lemma:threshold-amortized-delete}
    \code{delete} takes $\bigO{\log(T)S/\alpha}$ amortized time.
\end{lemma}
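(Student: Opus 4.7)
The plan is to mimic the proof of Lemma \ref{lemma:amortized-delete}, substituting $\loadT{v}$ for $\load{v}$ throughout, and then replace the global $\bigO{\log(nW)/\alpha}$ recursion-depth bound with a $\bigO{\log(T)/\alpha}$ bound coming from the threshold.

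First I would set up the same credit scheme: whenever $\loadT{u}$ decreases by $1/\wt{u}$ (which happens exactly once per edge deletion, at the bottom of the recursive chain of \code{check-dec}'s), we spread $\bigO{S/\alpha}$ credits uniformly across the arcs in $\incut{u}$, so that each such arc receives $\bigOmega{S/\alpha\indeg{u}}$ credit. This contributes $\bigO{S/\alpha}$ amortized cost per deletion for distributing credits, where the extra $\log\log(T)$ factor inside $S$ accounts for the binary-search-tree updates used to order outgoing arcs by $\lv{\lb{\cdot}}$ (now of maximum value $\bigO{\log(T)/\alpha}$).

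Next I would verify that every arc $b = \arc{w}{u}$ processed in the loop of \code{check-dec} has accumulated at least $\bigOmega{S}$ units of credit since its label was last set. This follows exactly as in the unthresholded case: the loop only processes $b$ when $\lv{\lb{b}}\geq \lv{\ldT{u}}+2$, hence $\lb{b}\geq \alphamore\ldT{u}+1/\wt{u}$; since $\lb{b}$ was a prior value of $\ldT{u}$, the thresholded load must have dropped by $\bigOmega{\alpha\lb{b}}$, meaning $\bigOmega{\alpha\lb{b}\wt{u}}$ incoming arcs have been removed, each contributing $\bigOmega{S/\alpha\lb{b}\wt{u}}$ credit, for a total of $\bigOmega{S}$ credit. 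Note that because $\loadT{u}=\min(\load{u},T)$, decreases in $\loadT{u}$ only happen when $\load{u}\leq T$ (or when crossing the threshold downward), so the credit distribution is well-defined and the relevant ratios remain valid.

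The final step, and the only genuine departure from Lemma \ref{lemma:amortized-delete}, is bounding the work done by the recursive \code{check-dec} calls triggered by arc flips. Here each recursive call strictly decreases the thresholded level of its argument by at least one, and since $\loadT{\cdot}\leq T$ by construction, the recursion depth is at most $\bigO{\log_{\alphamore}(T)}=\bigO{\log(T)/\alpha}$. Each level of recursion performs $\bigO{1}$ arc flips plus the amortized work already accounted for by the credit scheme, so the total amortized cost of \code{delete} is $\bigO{\log(T)S/\alpha}$. The main thing to be careful about will be confirming that the ``decrease by at least one level'' argument still applies cleanly when $\loadT{v}$ is a min with $T$ — specifically, that flipping an arc out of $u$ cannot leave $\loadT{v}$ unchanged in a way that breaks the recursion depth bound — but this follows from Lemma \ref{lemma:lv-smooth} applied to $\loadT{\cdot}$ together with the flip condition $\lv{\ldT{u}}+3\leq \lv{\lb{a}}$, which forces the post-flip level of $v$ to drop.
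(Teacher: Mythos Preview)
Your approach matches the paper's: the paper's own proof says only that the argument is almost identical to Lemma~\ref{lemma:amortized-delete} except that the recursion depth of \code{check-dec} is now bounded by $\bigO{\log_{\alphamore}(T)}=\bigO{\log(T)/\alpha}$, which is exactly the plan you lay out. One small slip to fix in your last paragraph: you assert that the thresholded level of the current vertex \emph{decreases} along the recursive chain, but in fact it \emph{increases}---we flip $a=\arc{u}{v}$ only when $\lv{\ldT{u}}+3\leq\lv{\lb{a}}$, and since $\lb{a}$ tracks $\ldT{v}$ within a bounded number of levels, the recursion passes to a vertex of higher (thresholded) load; the depth bound $\bigO{\log(T)/\alpha}$ still follows because all thresholded levels are capped at $\lv{T}$.
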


\begin{proof}
    The proof is almost identical to that of Lemma \ref{lemma:amortized-delete}. except the recursive depth of \code{check-dec} operations is bounded by $\bigO{\log_{\alphamore}(T)}=\bigO{\log(T)/\alpha}$. 
\end{proof}

\begin{lemma}\label{lemma:threshold-worst}
    \code{insert} takes $\bigO{\log(T)S/\alpha^2}$ worst-case time and \code{delete} takes $\bigO{\log(T)S/\alpha}$ worst-case time.
\end{lemma}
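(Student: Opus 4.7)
The plan is to mirror the proofs of Lemma \ref{lemma:worst-insert} and Lemma \ref{lemma:worst-delete}, making a single key substitution: because the thresholded load is capped at $T$, the maximum level reached by any recursive invocation is $\lv{T} = \bigO{\log_{\alphamore}(T)} = \bigO{\log(T)/\alpha}$, whereas previously the depth bound was $\bigO{\log(nW)/\alpha}$. Everything else in the argument carries over verbatim since Algorithms \ref{alg:insert-threshold}--\ref{alg:checkdec-threshold} are structurally identical to Algorithms \ref{alg:insert3}--\ref{alg:checkdec3} modulo the substitution $\ld{v}\mapsto\ldT{v}$, and $\ldT{v}$ behaves like $\ld{v}$ below the threshold (so the level-smoothing statement of Lemma \ref{lemma:lv-smooth} still governs how much $\lv{\ldT{v}}$ can change per edge update or flip).

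For \code{insert}, I would argue that the worst-case running time is proportional to the total number of arcs processed across the chain of \code{check-inc} calls, with each arc costing $\bigO{S}$ time for label maintenance (using the updated definition of $S$, which already accounts for the outgoing-arc tree now being indexed by at most $\bigO{\log_{\alphamore}(T)}$ level buckets). A single \code{check-inc} touches at most $\bigO{1/\alpha}$ arcs in its loop. Each recursive call of \code{check-inc}($u$) is triggered by an arc flip that raises $\ldT{u}$ past $\ldT{v}$ by at least one level, so the recursion depth is $\bigO{\log(T)/\alpha}$. Multiplying depth by per-call work yields $\bigO{\log(T)S/\alpha^2}$.

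For \code{delete}, I follow the template of Lemma \ref{lemma:worst-delete}: each \code{check-dec} call either performs an arc flip in $\bigO{S}$ time and recurses on the endpoint whose load has now dropped, or it terminates by resetting up to $\bigO{1/\alpha}$ labels for $\bigO{S/\alpha}$ additional work. Each flip in the recursive chain strictly drops the receiving endpoint by at least one level, so the depth is again $\bigO{\log(T)/\alpha}$. Summing $\bigO{S}$ per recursive level plus a single $\bigO{S/\alpha}$ terminal cost gives $\bigO{\log(T)S/\alpha}$.

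The argument presents no genuine obstacle beyond the routine substitution of the depth bound; the only point I would verify with a sentence is that when a flip occurs at loads near the cap (so $\ldT{u}=\ldT{v}=T$), the guarding condition $\lv{\ldT{u}}+2\leq\lv{\ldT{v}}$ (resp.\ $\lv{\ldT{u}}+3\leq\lv{\lb{a}}$) cannot be met, so no spurious recursion is spawned at or above level $\lv{T}$, keeping the depth bound tight.
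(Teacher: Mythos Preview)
Your proposal is correct and matches the paper's own proof, which simply states that the argument is ``almost identical to that of Lemma~\ref{lemma:worst-insert} and~\ref{lemma:worst-delete}, except the recursive depth of \code{check-inc} and \code{check-dec} operations is bounded by $\bigO{\log_{\alphamore}(T)}=\bigO{\log(T)/\alpha}$.'' Your additional observation that the flip guards cannot trigger once both endpoints sit at the cap $\lv{T}$ is a nice sanity check the paper leaves implicit; one small phrasing slip to clean up is that in the \code{check-inc} chain the level of the recursed-on vertex goes \emph{down} (not ``past'' the caller), which is what actually bounds the depth.
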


\begin{proof}
    The proof is almost identical to that of Lemma \ref{lemma:worst-insert} and \ref{lemma:worst-delete}. except the recursive depth of \code{check-inc} and \code{check-dec} operations is bounded by $\bigO{\log_{\alphamore}(T)}=\bigO{\log(T)/\alpha}$. 
\end{proof}

\section*{Acknowledgements}

We thank Kent Quanrud for his guidance and detailed feedback on this manuscript.

\printbibliography

\clearpage

\end{document}